\documentclass[american,letterpaper]{article}
\usepackage{fullpage}
\usepackage[utf8]{inputenc}
\usepackage{amsmath}
\usepackage{amssymb,amsthm}
\usepackage{natbib}
\usepackage{caption}
\usepackage{setspace}
\usepackage{hyperref}
\usepackage{subcaption}
\usepackage{algorithm}
\usepackage{algorithmicx}
\usepackage{algpseudocode}
\usepackage{footmisc}
\usepackage{multirow}
\usepackage{fancyhdr,graphicx,color}
\usepackage{graphicx}
\usepackage{graphics}
\usepackage{paralist}
\usepackage{booktabs}
\usepackage[skip=2pt]{parskip}
\usepackage[affil-it]{authblk}

\theoremstyle{plain}

\newtheorem{lemma}{Lemma}
\newtheorem{theorem}{Theorem}
\newtheorem{proposition}{Proposition}
\newtheorem{corollary}{Corollary}
\theoremstyle{definition}
\newtheorem{example}{Example}
\newtheorem{asu}{Assumption}
\newtheorem*{asu*}{Assumption}
\newtheorem{remark}{Remark}
\newtheorem{definition}{Definition}

\newcommand{\argmin}{\mathop{\mathrm{argmin}}}
\newcommand{\argmax}{\mathop{\mathrm{argmax}}}

\newcommand{\cA}{\mathcal{A}}
\newcommand{\cG}{\mathcal{G}}
\newcommand{\cN}{\mathcal{N}}
\newcommand{\cI}{\mathcal{I}}
\newcommand{\cJ}{\mathcal{J}}

\newcommand{\cF}{\mathcal{F}}
\newcommand{\bG}{\mathbb{G}}
\newcommand{\R}{\mathbb{R}}
\newcommand{\cS}{\mathcal{S}}
\newcommand{\cQ}{\mathcal{Q}}

\newcommand{\CR}{\mathcal{R}}

\newcommand{\bepsilon}{\boldsymbol \epsilon}
\newcommand{\bgamma}{{\boldsymbol \gamma}}
\newcommand{\bbeta}{\boldsymbol \beta}
\newcommand{\balpha}{\boldsymbol \alpha}

\newcommand{\bY}{{\B{y}}}
\newcommand{\bP}{{\bf P}}
\newcommand{\bX}{{\bf X}}
\newcommand{\bI}{{\bf I}}

\newcommand{\bff}{{\bf f}}

\newcommand{\ourmethodlz}{\texttt{ClusterLearn-L0}}
\newcommand{\ourmethod}{\texttt{ClusterLearn}}

\newcommand{\obj}{\mathsf{Obj}}

\newcommand{\p}{{\rm I}\kern-0.18em{\rm P}}
\newcommand{\E}{{\rm I}\kern-0.18em{\rm E}}

\newcommand{\1}{{\rm 1}\kern-0.24em{\rm I}}
\newcommand{\B}{\boldsymbol}

\author[1]{Kayhan Behdin\thanks{behdink@mit.edu}}
\author[1]{Riade Benbaki}
\author[2]{Peter Radchenko}
\author[1,3]{Rahul Mazumder}

\affil[1]{Operations Research Center, Massachusetts Institute of Technology}
\affil[2]{The University of Sydney}
\affil[3]{Sloan School of Management, Massachusetts Institute of Technology}

\date{}
\begin{document}

\title{ Modeling with Categorical Features via Exact Fusion and Sparsity Regularisation}

\maketitle

\begin{abstract}
    We study the high-dimensional linear regression problem with categorical predictors that have many levels. We propose a new estimation approach, which performs model compression via two mechanisms by simultaneously encouraging (a) clustering of the regression coefficients to collapse some of the categorical levels together; and (b) sparsity of the regression coefficients.  We present novel mixed integer programming formulations for our estimator, and develop a custom row generation procedure to speed up the exact off-the-shelf solvers. We also propose a fast approximate algorithm for our method that obtains high-quality feasible solutions via block coordinate descent.  As the main building block of our algorithm, we develop an exact algorithm for the univariate case based on dynamic programming, which can be of independent interest.  We establish new theoretical guarantees for both the prediction and the cluster recovery performance of our estimator. Our numerical experiments on synthetic and real datasets demonstrate that our proposed estimator tends to outperform the state-of-the-art. 
\end{abstract}

\section{Introduction}

Nominal categorical predictors with many levels often arise in real-world applications. For example, this is the case for geographical data with the ZIP code feature~\citep{kuhn2019feature}, motor insurance claims data with the vehicle brand feature~\citep{hu2018motor}, and electronic health data with the diagnosis code feature~\citep{jensen2012mining}.
In this paper, we focus on the linear regression problem where some or all of the predictors are categorical.
Specifically, we assume there are $q\geq 1$ categorical features: $\B C=(C_1,\cdots,C_q)$ and $N$ continuous ones: $\B W=( W_1,\cdots,W_N)$.  Let~$p_j$ denote the number of levels of the categorical predictor $C_j$, which takes values in the set $[p_j]=\{1,\cdots,p_j\}$. We study linear regression estimators of the form 
\begin{equation}\label{linear-categorical-intro}
  \alpha+  \sum_{j=1}^{q}\sum_{k=1}^{p_j} \theta_{j,k}\1(C_j=k)+\sum_{j=1}^N\theta_j W_j ,
\end{equation}
where $\1(x=y)=0$ if $x\neq y$ and $\1(x=y)=1$ otherwise,  and $\alpha,\{\theta_{j,k}\},\{\theta_j\}$ are the coefficients that we seek to estimate.

From the interpretability and the prediction performance perspectives, it is often desirable to regularise the coefficients of estimator~\eqref{linear-categorical-intro} to obtain more compact models. This is especially important when we have a large number of categorical predictors or when some categorical predictors have a large number of levels.
A natural regularisation in this case is the sparsity regularisation, which ensures that only a few coefficients among $\{\theta_{j,k}\},\{\theta_j\}$ are nonzero, allowing for the reduction of the model dimension. When the data includes categorical predictors with many levels, it is also desirable to cluster the regression coefficients corresponding to the same categorical predictor, so that they only take on a few distinct values. In other words, we are interested in models with a small number of distinct coefficient values\footnote{For example, $|\{\theta_{j,k}:k\in[p_j]\}|$ is the number of distinct coefficient values for $C_j$.} for each categorical predictor. The reduction in the number of coefficient values can be viewed as fusing some of the categorical levels together, effectively reducing the model dimension. 

We propose a new estimator for the high-dimensional linear regression problem with nominal categorical predictors that have a large number of levels. Our approach simultaneously clusters the regression coefficients corresponding to the same categorical predictor and encourages the coefficients to be sparse, allowing for model compression via two mechanisms. In what follows, we review the literature on sparse high-dimensional linear regression and the estimators specialised to the setting with categorical predictors. Then, we summarise our approach and contributions.

\subsection{Related Work}

\textbf{Sparse linear regression.} Given a response vector $\B{y}\in\R^n$ and a model matrix $\B{X}\in\R^{n\times p}$,
we seek to find an estimator~$\B{\beta}\in\R^p$ such that~$\B{X\beta}$ is a good predictor of~$\B{y}$. Numerous estimators have been proposed for this problem~\citep[for example, see][]{hastie2015statistical}. Many of the proposed methods minimise a penalised least squares objective $\|\B{y}-\B{X}\B{\beta}\|_2^2$, using a penalty to improve the quality of the estimator in the high-dimensional settings where~$p$ can be much greater than~$n$, and to improve the interpretability of the solution. 

The $\ell_0$ penalty~\citep{l0-1}, which directly counts the total number of nonzero regression coefficients, is a natural option for inducing sparse solutions. The statistical properties of the corresponding best subset selection-type estimator have been well-studied in the literature~\citep[e.g.,][]{l0-2,l0-3}, but solving the $\ell_0$-regularised optimisation problem in high-dimensional settings had long been considered infeasible. However, due to the recent advances in Mixed Integer Programming (MIP) and discrete optimisation~\citep{nemhauser}, starting with the work of~\citet{l0-modern}, there has been a growing interest in developing efficient algorithms for sparse linear regression problem with $\ell_0$ regularisation. More recent work~\citep[e.g.,][]{outerapprox-l0,l0bnb} has shown that solving~$\ell_0$ linear regression problems to global optimality can be feasible even with millions of variables.

Other examples of penalties include the $\ell_1$ or Lasso penalty~\citep{lasso}, the $\ell_2$ or ridge penalty~\citep{ridge}, and non-convex penalties~\citep[for example,][]{nonconcave1,nonconcave2}. Moreover,
new combined penalties have been developed~\citep[for example,][]{elasticnet,mazumder2017subset,hazimeh2020fast} to adapt to a wide range of noise regimes and sparsity settings.

\noindent\textbf{Linear regression for categorical predictors.} General linear estimators discussed above do not utilise the specific structures of problems with categorical predictors to improve the quality of the estimator. As a result, tailored estimators for linear 
models with categorical predictors (or, more generally, models with clustered regression coefficients) have been developed. From the model accuracy, parsimony and interpretability perspectives, it might be desirable to collapse some categorical levels together to reduce the model dimension. This is equivalent to clustering or fusing together the regression coefficients corresponding to the same categorical predictor.  One common approach for obtaining such clustered solutions is penalising pairwise differences of regression coefficients, thus encouraging clustering. In particular,~\citet{categorical1} propose the CAS-ANOVA approach, which uses an all-pair penalty of the form $\sum_{k_1,k_2}|\theta_{j,k_1}-\theta_{j,k_2}|$. When a natural order can be established among the regression coefficients (e.g., for ordinal variables), \citet{categorical-ordinal} propose penalties of the form $\sum_k |\theta_{j,k}-\theta_{j,k-1}|$, which are closely related to the Fused Lasso penalty~\citep{fusedlasso}. \citet{categorical-l0} utilise pairwise penalties that provide a smooth approximation to~$\ell_0$ fusion penalties of the form $\1(\theta_{j,k_1}\neq \theta_{j,k_2})$.  
The problem of clustering regression coefficients with continuous predictors has been studied extensively in the literature~\citep{wang2024supervised,she2010sparse,bondell2008simultaneous,ke2015homogeneity,shen2010grouping,zhu2013simultaneous}. Another interesting line of work is by~\citet{categorical-grouplasso}, who introduce a two-stage algorithm that uses the Group Lasso estimator~\citep{grouplasso}. We refer to~\citet{categorical-review} for a comprehensive comparison of different penalty-based methods for linear regression with categorical variables, and to~\citet{scope,she2022supervised,radchenko2017convex} for detailed discussions of the statistical and computational drawbacks associated with methods that use all-pairs type penalties.

Our work is motivated by the SCOPE estimator of~\citet{scope}, which is the
state-of-the-art method for the high-dimensional linear regression problem with categorical predictors that have many levels. It uses the Minimax Concave Penalty (MCP) on differences between the order statistics of the coefficients for a categorical variable, thereby clustering the coefficients. SCOPE is an approximate algorithm based on block coordinate descent, which has appealing theoretical guarantees and outperforms prior methods empirically. In a departure from the scope of the work of~\citet{scope}, we develop an exact MIP approach for our problem.

\subsection{Summary of the Approach and Contributions}\label{sec:intro-approach}

We propose \ourmethodlz, a new discrete optimisation based estimator for the sparse high-dimensional linear regression problem with categorical predictors. 
Our approach penalises: (a) the total number of distinct values of the regression coefficients for each categorical predictor, summed over the predictors, $\sum_{j\in[q]}|\{\theta_{j,k}:k\in[p_j]\}|$, to encourage fusion of coefficients corresponding to the same categorical predictor; and (b) the total number of nonzero regression coefficients, $\sum_{j\in[q], k\in[p_j]}\1(\theta_{j,k}\ne 0)$, to encourage sparsity. 
This regularisation scheme allows direct control over both the amount of clustering in the coefficients and the sparsity of the solution, performing model compression through two different  mechanisms. 
The details of our methodology are provided in Section~\ref{sec:proposed.est}.

Compared to the state-of-the-art approach of~\citet{scope}, which uses the MCP penalty to reduce the number of coefficient clusters,  we directly control the number of clusters via our penalty (a). In addition, our penalty (b) allows the option of encouraging sparsity in the coefficients, which can be desirable from a model parsimony standpoint. Furthermore, this penalty also helps with identifiability and can lead to performance improvements in terms of prediction error and model selection in high-dimensional settings.

To illustrate the effects of our regularisation scheme, we consider the bike sharing dataset from the UCI data repository~\citep{misc_bike_sharing_275}. Further details of the data and the outcomes of our experiments are provided and discussed in Section~\ref{bikedata}. Here, we focus on an illustrative example with 100 observations, \texttt{hr} and \texttt{weekday} as the categorical predictors, 
and the number of bike rentals as the response variable. Figure~\ref{fig:intro} illustrates the regression coefficients for various values of $\lambda$ and~$\lambda_0$, which are the penalty weight tuning parameters for the fusion and sparsity penalties, respectively. We observe that by increasing $\lambda$, more coefficients are clustered together as the fusion penalty becomes stronger. Moreover, we see that our estimator with $\lambda_0>0$ leads to a substantially more compact model than the one with~$\lambda_0=0$, while achieving similar out-of-sample performance.

\begin{figure}
\setlength{\tabcolsep}{1pt}
     \centering
    \begin{tabular}{ccc}
      ~~$\lambda=0.5$ & $\lambda=1$ & $\lambda=75$\\
       \includegraphics[width=0.345\linewidth]{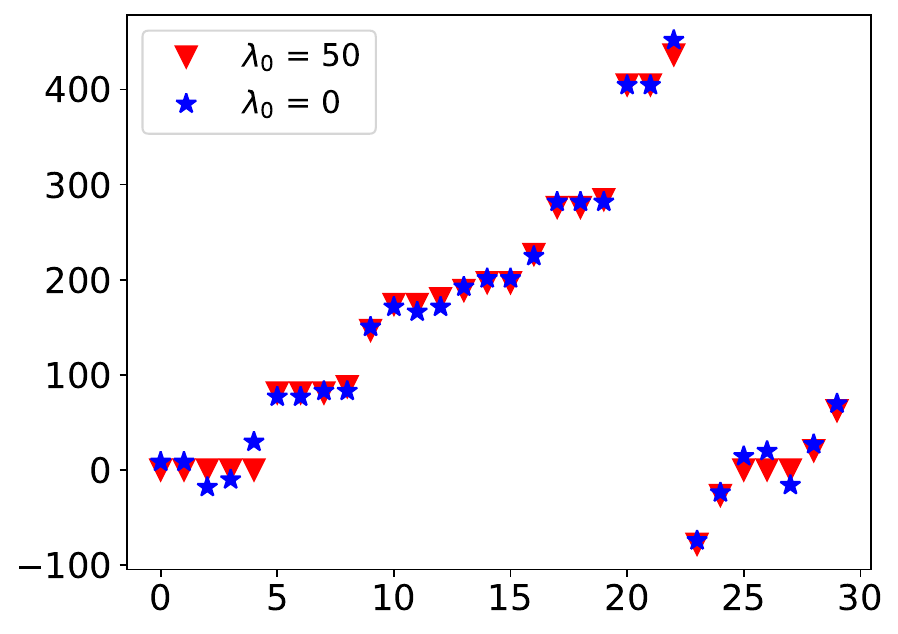}  &  \includegraphics[trim={2cm 0 0 0},clip,width=0.3\linewidth]{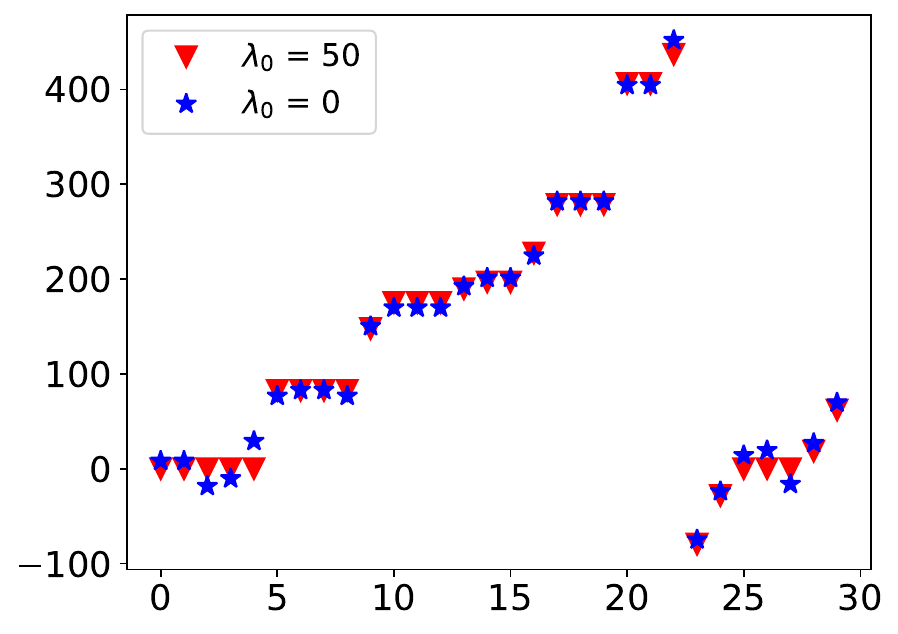} &
       \includegraphics[trim={2cm 0 0 0},clip,width=0.30\linewidth]{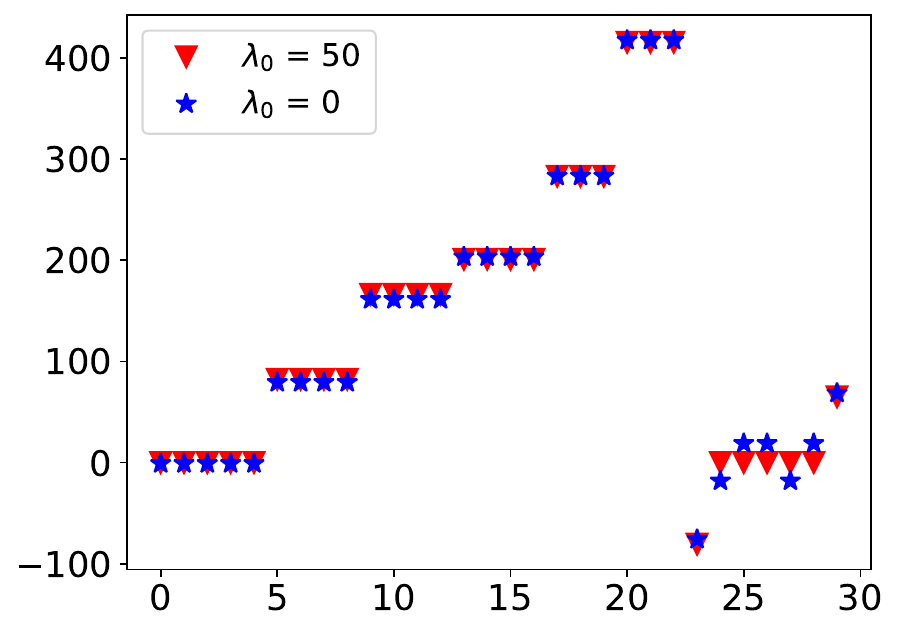} \\
    \end{tabular}
    \captionof{figure}{The regression coefficients for the illustrative example in Section~\ref{sec:intro-approach}. We consider three values of parameter~$\lambda$, which controls the strength of the fusion penalty, and two values of~$\lambda_0$, which controls the strength of the sparsity penalty. 
    The first 23 coefficients correspond to the hour of the day (sorted based on their coefficient values when $\lambda_0=50$), while the rest represent the weekday (sorted similarly).
     The estimators in the two left-most panels (nearly) attain the best out-of-sample performance across~$\lambda$ values with a test set $R^2$ of 0.26. See Appendix~\ref{supp:illustrative} for the interpretation of the regression coefficients.
    }
    \label{fig:intro}
\end{figure}

We now summarise our main contributions.

\noindent\textbf{Algorithms.}
Using binary variables to encode the sparsity and the clustering pattern of the solution, we reformulate our estimator as a solution to a Mixed Integer Program (MIP), where we minimise a convex quadratic objective over a mixed-integer linear constraint set. As a consequence, globally optimal solutions to our optimisation problem can be found using off-the-shelf MIP solvers such as Gurobi (in minutes when $p\sim 100$ based on our experiments). This feature marks an important difference between our approach and the state-of-the-art method of~\citet{scope}, which 
does not solve their optimisation problem to global optimality and instead only
solves it
approximately using Block Coordinate Descent (BCD).

Modeling the fusion of regression coefficient levels with binary variables is more nuanced than MIP formulations arising in the usual $\ell_0$ regression, as in the earlier work~\citep{l0-modern}.
In particular, our formulation allows us to develop a custom row generation procedure for our estimator, which we use to accelerate the exact off-the-shelf solvers. Furthermore, we develop a fast approximate algorithm for our method that obtains high-quality feasible solutions via BCD. Such fast algorithms are desirable for practitioners, for example, when warm-starting exact MIP solvers or selecting tuning parameters. We also extend our approximate algorithm to binary logistic regression with categorical predictors.

As the main building block of our BCD algorithm, we develop a new \emph{exact} algorithm for the univariate case (i.e., with just one categorical predictor) based on dynamic programming, which can be of independent interest. Our algorithm extends the earlier work of \cite{johnson2013dynamic} on change-point detection/signal segmentation. In contrast, the univariate problems in the BCD approach of \citet{scope} use the continuous MCP penalty to encourage coefficient fusion. Our BCD algorithm appears to be faster (up to 500 times in some settings) than the corresponding algorithm for the SCOPE estimator.

\noindent\textbf{Statistical theory.} We show that our approach controls the number of the resulting coefficient clusters while achieving the prediction error rate at least as good as $s^*\sigma^2\log(p)/n$. Here, $p=N+\sum_{j=1}^q p_j$ is the overall model dimension, $n$ is the number of observations, and~$s^*$ is the number of nonzero regression coefficients in the underlying model. 
Moreover, we show that in some settings our estimator can achieve an improved prediction error rate as low as $s^*\sigma^2\log(q)/n$, where~$q$ is the total number of categorical predictors.
We are unaware of existing results
on the prediction performance of specialised linear regression estimators that cluster the levels of categorical predictors.

Furthermore, we show that when the true regression coefficients are clustered, our estimator can recover the correct clustering pattern with high probability under a suitable minimum separation condition on the distinct true coefficient values corresponding to the same categorical predictor. In the univariate setting with just one categorical predictor, our minimum separation level of $\sigma\sqrt{p\log(p)/n}$ is consistent with the existing results and is minimax optimal. In the multivariate setting, we are not aware of any other results that provide similar cluster-recovery guarantees (see Section~\ref{sec:clusterrecovery} for more details).

\noindent \textbf{Numerical results.} Our experiments on synthetic and real datasets  demonstrate that the proposed estimator appears to have better prediction and model selection performance than other existing methods, such as the state-of-the-art approach of~\citet{scope}. In terms of computational efficiency, we show that our approximate algorithm is faster than the corresponding approximate algorithm of~\citet{scope} for large problems, scaling in seconds to problems with~$p$ on the order of thousands. Moreover, our exact algorithm can provide optimality certificates for problems with $p$ in thousands, and shows notable improvements over off-the-shelf MIP solvers for $p$ in the hundreds.

\noindent\textbf{Organisation.} We formulate the problem and introduce our estimator in Section~\ref{prob-form-sec}. We investigate the theoretical properties of our estimator in Section~\ref{sec:theory}. In Section~\ref{sec:algs}, we develop the algorithms for our estimator and provide an extension of our approach to binary classification. We conduct numerical experiments with synthetic and real data in Section~\ref{sec:numerical}. Proofs and technical details are provided in the Appendix.

\section{Problem Formulation and Proposed Methodology}\label{prob-form-sec}

The default version of our method uses data to determine the ``baseline'' categories, that is, categories whose indicator variables get assigned zero coefficients in the regression model. However, at the end of Section~\ref{sec:proposed.est}, we discuss the setting where the user wishes to pre-specify the baseline categories.

\subsection{Notation}
Recall that we have~$q$ categorical predictors $C_1,\cdots,C_q$, and~$N$ continuous predictors $W_1,\cdots,W_N$, with each~$C_j$ taking values in~$[p_j]$ for some $p_j\geq 1$. The data consists of observations $(y^{(i)},\B C^{(i)},\B W^{(i)})_{i=1}^n$ made on~$n$ individuals, where~$\B C^{(i)}$ and~$\B{W}^{(i)}$ are the observed categorical and continuous predictors for the $i$-th individual. 
We introduce dummy variables $C^{(i)}_{j,1},\cdots, C^{(i)}_{j,p_j}$ such that $C^{(i)}_{j,k}=\1(C^{(i)}_j=k)$, and define the expanded $p$-dimensional sets of variables and regression coefficients as
\begin{equation}\label{x-def}
    \B{x}^{(i)}= \Big(\underbrace{C^{(i)}_{1,1},\cdots,C^{(i)}_{1,p_1}}_{\text{categorical predictor 1}},\cdots,\underbrace{C^{(i)}_{q,1},\cdots,C^{(i)}_{q,p_{q}}}_{\text{categorical predictor }q},\underbrace{W^{(i)}_{1},\cdots,W^{(i)}_N}_{\text{continuous features}}\Big)\in\R^p
\end{equation}
\begin{equation}\label{beta-alpha}
    \text{and}\qquad\B{\beta}=(\theta_{1,1},\cdots,\theta_{1,p_1},\cdots,\theta_{q,1},\cdots,\theta_{q,p_{q}},\theta_{1},\cdots,\theta_N)\in\R^p.~~~~~~
\end{equation}
We write $\cI_j\subseteq[p]$ for the index set of the variables in the expanded set that correspond to predictor $C_j$. Specifically, $\cI_1=\{1,\cdots,p_1\}$, $\cI_2=\{p_1+1,\cdots,p_1+p_2\}$ and so on.  
Observations $(y^{(i)},\B{x}^{(i)})_{i=1}^n$ are stored in the response vector~$\B{y}\in\R^n$ and the rows of the model matrix $\B{X}\in\R^{n\times p}$. For $j\in[q]$ and $k\in[p_j]$, we also define
\begin{equation}\label{J_k_j}
    \cJ_{k}^j = \{i\in[n]: C_j^{(i)}=k\}
\end{equation}
as the set of observations for which categorical feature~$j$ takes value~$k$. For each~$\B\beta\in\R^p$, we let $S(\B\beta)=\{j:\beta_j\neq 0\}$ denote the support of $\B\beta$. For $S\subseteq[p]$, we let $\B\beta_{S}$ denote the sub-vector of $\B\beta$ with coordinates indexed by $S$. Similarly, we write~$\B X_{S}$ for the sub-matrix of~$\B X$ whose columns are indexed by~$S$. In particular, we let~$\B X_j$ denote the $j$-th column of~$\B X$.
We treat expressions of the form $0\times\infty$ and $0/0$ as zero. We use the notation $E_1\lesssim E_2$ to mean that inequality $E_1\le E_2$ holds up to a positive universal multiplicative constant, and define $\gtrsim$ analogously. Finally, we write $E_1\asymp E_2$ when $E_1\lesssim E_2\lesssim E_1$.
\subsection{Estimator}
\label{sec:proposed.est}

We propose and analyze the following estimator:
\begin{align}\label{clusteringproblem-reg}
(\hat\alpha,\hat{\B{\beta}})\in\argmin_{\alpha\in\R, \B{\beta}\in\R^p} \quad &   \tfrac{1}{n}\|\B{y} -\alpha\B{1} -\B{X\beta}\|_2^2 + \lambda_0 \|\B{\beta}\|_0 + \lambda \sum_{j=1}^{q} \left\vert\left\{\beta_k:k\in\cI_j\right\}\right\vert,
\end{align}
where $\B{1}\in\R^n$ is the all-one vector and $\lambda,\lambda_0$ are pre-specified non-negative tuning parameters. The $\|\B{\beta}\|_0$ penalty  encourages sparse solutions, while the $\sum_{j=1}^{q} \left\vert\left\{\beta_k: k\in\cI_j\right\}\right\vert$  penalty reduces the total number of levels in~$\B\beta$, thus leading to a clustered solution. By varying $\lambda,\lambda_0$ in~\eqref{clusteringproblem-reg} we can obtain a wide range of estimators that can be sparse, clustered, or both, which illustrates the modeling flexibility of our discrete optimisation based approach. We refer to the above estimator as  \ourmethodlz~when $\lambda_0>0$, and refer to it as \ourmethod~when $\lambda_0=0$.

When $\lambda_0=0$, the solution to Problem~\eqref{clusteringproblem-reg} is generally not unique -- one can add a constant to all the regression coefficients corresponding to a categorical predictor and then subtract the same constant from the intercept without changing either the clustering pattern or the prediction vector. In contrast, when $\lambda_0>0$, Problem~\eqref{clusteringproblem-reg} ensures that the largest cluster\footnote{When there is more than one such cluster, we get multiple equivalent solutions.} of regression coefficients for each categorical predictor is assigned a zero coefficient. This property of the solution is especially interesting when most levels of a categorical predictor have a similar effect on the response, implying that the corresponding regression coefficients could be allocated to the same large cluster. The coefficients for such large clusters are set to zero when $\lambda_0>0$, which leads to a more compact model. Our numerical experiments in Section~\ref{sec:numerical} also demonstrate that the addition of the sparsity penalty helps improve the prediction performance of our estimator in high-dimensional settings.

    In practice, it might be desirable to set to zero the regression coefficients corresponding to some prespecified baseline level of each categorical predictor.  A user can easily achieve this by shifting the coefficients of the solution to Problem~\eqref{clusteringproblem-reg} as described above, without changing either the clustering pattern or the estimated regression function. Specifically, for each categorical predictor, one would subtract the estimated coefficient for the baseline level from all the regression coefficients corresponding to this predictor and then add the same value to the intercept.  
    
    We also note a simple modification of our approach in which the regression coefficients for the baseline levels are constrained to equal zero in Problem~\eqref{clusteringproblem-reg}; all our algorithms and theory carry over with the corresponding minor adjustments. 
   Here, the $\|\B{\beta}\|_0$ penalty would encourage sparsity for the \emph{specific} representation of the solution in which the baseline coefficients are zeroed out. However, in general, we recommend the original version of our approach as it could perform better when the aforementioned representation of the true regression function is not sparse.

\section{Statistical Properties}
\label{sec:theory}
\noindent We study the theoretical properties of our estimator, which is defined as the global solution to Problem~\eqref{clusteringproblem-reg}. First, we present the prediction error bounds and discuss the settings in which they can improve upon the existing bounds from the sparse linear regression literature. Next, we study the clustering performance of our estimator and show that 
it can recover the correct clustering pattern with high probability under a suitable minimum separation condition on the true regression coefficients.
In this work, we do not pursue the question of how well the true regression coefficients are estimated; as we note in Section~\ref{sec:proposed.est}, these coefficients are generally not identifiable.

\noindent\textbf{Model Setup.}
 Throughout this section, we assume that the  observations are generated as $\B{y} = \mathbf{f}^* + \B{\epsilon}$. Here, $\mathbf{f}^*\in\R^n$ is the (unobserved) mean response vector such that $\mathbf{f}^*=(f^*(\B x^{(1)}),...,f^*(\B x^{(n)}))^\top$ for some function~$f^*: \mathbb{R}^p\mapsto \mathbb{R}$, which is not required to have any particular form, and the noise vector $\B{\epsilon}\in\R^n$ is drawn from the $\cN(\B{0},\sigma^2 \B{I}_n)$ distribution. We also assume that $\B{X}$ and $\mathbf{f}^*$ are deterministic.

\subsection{Prediction Error Bounds}

Suppose that $(\hat{\alpha},\hat{\B\beta})$ is a global solution to Problem~\eqref{clusteringproblem-reg}. We define 
\begin{equation}
    \hat{K} = \sum_{j=1}^q |\{\hat{\beta}_i:i\in \cI_j,\hat\beta_i\neq 0\}| \quad\text{and}\quad K(\bbeta) = \sum_{j=1}^q |\{\beta_i:i\in \cI_j,\beta_i\neq 0\}|
\end{equation}
as the total number of \emph{nonzero} regression coefficient clusters determined by~$\hat{\B\beta}$ and~$\bbeta$, respectively. Our first result derives an oracle prediction error bound for~$\hat{\B\beta}$  without imposing any further assumptions (linearity, sparsity or clustering) on the true model, thus allowing for model misspecification. 
\begin{theorem}\label{estimation_thm}
Let~$(\hat\alpha,\hat{\B{\beta}})$ be a global optimal solution to Problem~\eqref{clusteringproblem-reg} with 
$\lambda_0\geq c_{\lambda_0}\sigma^2 \log (ep)/n$
for some sufficiently large $c_{\lambda_0}>0$. 
Then, with high probability,\footnote{An explicit expression for the probability can be found in~\eqref{thm1-prob} in the appendix.}
\begin{equation*}
        \tfrac{1}{n}\|\mathbf{f}^*-\B{X}\hat{\B{\beta}}-\hat\alpha \B{1}\|_2^2 + \lambda \hat{K}  \lesssim \inf_{\substack{\B\beta\in\R^p \\ \alpha\in\R}} \big\{\tfrac{1}{n}\|\mathbf{f}^*-\B X\B\beta-\alpha\B{1}\|_2^2 + \lambda_0 \|\bbeta\|_0+ \lambda K(\bbeta)\big\}+\tfrac{\sigma^2\log(ep)}{n}. 
\end{equation*}
\end{theorem}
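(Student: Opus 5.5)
We follow the standard oracle-inequality argument for $\ell_0$-penalised least squares. Fix any competitor $(\alpha,\B\beta)$. By global optimality of $(\hat\alpha,\hat{\B\beta})$, its objective in Problem~\eqref{clusteringproblem-reg} is no larger than that of $(\alpha,\B\beta)$. We substitute $\B y=\mathbf f^*+\B\epsilon$ and expand the squares. This gives the basic inequality
\begin{equation*}
\tfrac1n\|\mathbf f^*-\hat{\B f}\|_2^2+\lambda_0\|\hat{\B\beta}\|_0+\lambda\hat L\le \tfrac1n\|\mathbf f^*-\B f\|_2^2+\lambda_0\|\B\beta\|_0+\lambda L(\B\beta)+\tfrac2n\B\epsilon^\top(\hat{\B f}-\B f),
\end{equation*}
where $\hat{\B f}=\B X\hat{\B\beta}+\hat\alpha\B 1$ and $\B f=\B X\B\beta+\alpha\B 1$. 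Here $L(\cdot)=\sum_j|\{\beta_k:k\in\cI_j\}|$ is the clustering penalty actually used. We then compare $L$ with $K$: since $L(\B\beta)\le K(\B\beta)+q$, and also $L(\B\beta)\ge K(\B\beta)$, the penalty terms transfer to $K$ up to an additive $\lambda q$.

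The additive $q$ needs care. If some $\cI_j$ contains a zero coefficient, then $L_j=K_j+1$; otherwise $L_j=K_j$. To avoid an extra $\lambda q$ on the right-hand side, we first replace the competitor by a shifted version. For each $j$, subtract from $\{\beta_k:k\in\cI_j\}$ the value of one of its clusters and add that value to $\alpha$. This leaves $\B f$ unchanged, does not increase $L_j$, and makes $L_j=K_j+1$, while $\|\B\beta\|_0$ can only grow by the size of the zeroed cluster's complement. More carefully, choosing to zero out the largest cluster never increases $\|\B\beta\|_0$. For $\hat{\B\beta}$, optimality with $\lambda_0>0$ already forces each block to contain a zero (the largest cluster), so $\hat L=\hat K+q_{\rm act}$. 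The terms $\lambda\cdot$(number of blocks containing zero) then cancel or favour us on both sides, and the inequality holds with $K$ in place of $L$.

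The key step is the stochastic term. We write $\hat{\B f}-\B f=\B X_{T}\B v+c\B 1$ with $T=S(\hat{\B\beta})\cup S(\B\beta)$ and $|T|\le \|\hat{\B\beta}\|_0+\|\B\beta\|_0$. For each support $T$ we bound $\sup_{\B u\in \mathrm{span}(\B X_T,\B 1)}\B\epsilon^\top \B u/\|\B u\|_2$ by the norm of the projection of $\B\epsilon$ onto a subspace of dimension at most $|T|+1$. This is a $\sigma^2\chi^2_{|T|+1}$ variable. A union bound over the at most $\binom{p}{s}$ supports of size $s$, together with a chi-square tail bound, gives, with probability at least $1-C p^{-c}$ and simultaneously for all $T$,
\begin{equation*}
\|P_T\B\epsilon\|_2^2\lesssim \sigma^2\big(|T|\log(ep)+\log(ep)\big).
\end{equation*}
We then apply $2ab\le a^2/2+2b^2$:
\begin{equation*}
\tfrac2n\B\epsilon^\top(\hat{\B f}-\B f)\le \tfrac1{2n}\|\hat{\B f}-\B f\|_2^2+\tfrac{C\sigma^2}{n}(\|\hat{\B\beta}\|_0+\|\B\beta\|_0+1)\log(ep).
\end{equation*}
The first term is split using $\|\hat{\B f}-\B f\|^2\le 2\|\hat{\B f}-\mathbf f^*\|^2+2\|\mathbf f^*-\B f\|^2$, and half of the left-hand prediction error absorbs it. Choosing $c_{\lambda_0}\ge 2C$, the term $C\sigma^2\log(ep)\|\hat{\B\beta}\|_0/n$ is absorbed by $\lambda_0\|\hat{\B\beta}\|_0$. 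The term $C\sigma^2\log(ep)\|\B\beta\|_0/n$ is at most $\lambda_0\|\B\beta\|_0$. The remainder gives $\sigma^2\log(ep)/n$. Taking the infimum over $(\alpha,\B\beta)$ yields the claim.

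The main obstacle is the bookkeeping between $L$ and $K$, ensuring no stray $\lambda q$ term appears. The approach is the shift-to-largest-cluster normalisation of both $\hat{\B\beta}$ (automatic from optimality) and the competitor. The uniform chi-square union bound is routine.
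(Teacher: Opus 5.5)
Your proposal is correct and is essentially the paper's argument. The shared ingredients are:
\begin{itemize}
\item the basic inequality;
\item the observation that $\lambda_0>0$ forces a zero in every block of $\hat{\B\beta}$, so the $\lambda q$ terms cancel;
\item Young's inequality;
\item a union bound over supports in which $\lambda_0$ pays for the $\|\hat{\B\beta}\|_0\log(ep)$ cost.
\end{itemize}
Your normalisation of the competitor is harmless but unnecessary, since $\sum_j|\{\beta_k:k\in\cI_j\}|\le K(\B\beta)+q$ holds for every $\B\beta$. The paper only packages the last step differently. It folds the $\lambda_0 k$ term directly into the tail bound of Lemma~\ref{thm1-lem1}, proved via an $\epsilon$-net, and it bounds $\|\B X(\hat{\B\beta}-\B\beta)\|_2$ before a final triangle inequality.

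There is one arithmetic slip to fix. Using $2ab\le a^2/2+2b^2$ followed by $\|\hat{\B f}-\B f\|_2^2\le 2\|\hat{\B f}-\mathbf f^*\|_2^2+2\|\mathbf f^*-\B f\|_2^2$, the right side picks up $\tfrac1n\|\hat{\B f}-\mathbf f^*\|_2^2$. This cancels all of the left-hand prediction error, not half of it. Use, e.g., $2ab\le a^2/4+4b^2$ instead, and enlarge $C$ and $c_{\lambda_0}$ accordingly.
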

We are not aware of other existing results on the prediction performance of a regularised linear regression estimator that clusters levels of categorical predictors.

Consider the linear model setting, where $\mathbf{f}^*=\alpha^*\B{1}+\B X\B\beta^*$. Note that one can modify $\bbeta^*$ without changing the linear approximation by adding a constant to all the $\bbeta^*$-coefficients corresponding to a particular categorical predictor and then updating~$\alpha^*$ accordingly. In our analysis, we focus on the \emph{sparsest} representation of~$\bbeta^*$. Consequently, the $\bbeta^*$-coefficients of the largest true cluster for every categorical predictor are zero. Let  $\|\B\beta^*\|_0=s^*$ and $K^* = \sum_{j=1}^q |\{\beta^*_i:i\in \cI_j,\beta^*_i\neq 0\}|$. Theorem~\ref{estimation_thm} shows that estimator~\eqref{clusteringproblem-reg} achieves a prediction error rate of $(\sigma^2 s^*/n)\log (ep)+ \lambda K^*$. If 
   $\lambda\lesssim \sigma^2 s^* \log (ep)/(n K^*)$,
Theorem~\ref{estimation_thm} implies an error rate of $\sigma^2s^*\log(p)/n$. Note that when no clustering penalty is applied, i.e., $\lambda=0$, standard results from the sparse linear regression literature \citep[for example,][]{minimax} demonstrate that the $\sigma^2s^*\log (p/s^*)/n$ error rate is minimax optimal. Therefore, 
our estimator achieves the optimal prediction error rate up to a logarithmic factor, while also applying a clustering penalty. Hence, estimator~\eqref{clusteringproblem-reg} can achieve good prediction performance while clustering the coefficients even when the true regression coefficients may not be clustered. Moreover, Theorem~\ref{estimation_thm} implies the following upper bound on the number of nonzero clusters determined by~$\hat{\B\beta}$:
\begin{equation*}
    \hat{K} \lesssim  K^* + \frac{\sigma^2s^*\log (ep)}{n\lambda}.
\end{equation*}

Next, we show that in some settings we can improve the prediction error bound in Theorem~\ref{estimation_thm}. 
We note that when $\lambda_0>0$, our estimator~\eqref{clusteringproblem-reg} can perform feature selection among the categorical predictors. More specifically, when the clustering penalty parameter~$\lambda$ is sufficiently large, some of the predictors will have a single estimated cluster whose coefficients will be set to zero by adjusting the intercept accordingly, thus removing the entire predictor (and all of its dummy variables) from the estimated model.
Such group-wise variable selection results in more compact models and may improve the prediction performance.  We formalise this intuition in Theorem~\ref{estimation_thm2}.  To simplify the presentation, we focus on the case where all predictors are categorical, i.e., $N=0$.
\begin{theorem}\label{estimation_thm2}
    Suppose that $N=0$ and define $p_{\max}=\max_{j\in[q]}p_j$.
    Let~$(\hat\alpha,\hat{\B{\beta}})$ be a global solution to Problem~\eqref{clusteringproblem-reg} with $\lambda,\lambda_0$ satisfying at least one of the inequalities $\lambda\geq c_{\lambda}\sigma^2 p_{\max} \log(eq)/n$ and~$  \lambda_0\geq c_{\lambda_0}\sigma^2 \log (ep)/n$
for sufficiently large positive universal constants $c_{\lambda},c_{\lambda_0}$. Then, with high probability,\footnote{\label{thm1.5prob}An explicit expression for the probability can be found in~\eqref{thm2new-prob} in the appendix.}
$$    \tfrac{1}{n}\|\mathbf{f}^*-\B{X}\hat{\B{\beta}}-\hat\alpha\B{1}\|_2^2  \lesssim  \inf_{\substack{\B\beta\in\R^p \\ \alpha\in\R}} \big\{\tfrac{1}{n}\|\mathbf{f}^*-\B X\B\beta-\alpha\B{1}\|_2^2 + \lambda_0 (\|\bbeta\|_0\lor 1)+ \lambda (K(\bbeta)\lor 1)\big\}. $$
\end{theorem}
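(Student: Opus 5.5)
The plan is the standard basic-inequality argument for $\ell_0$-type estimators. The penalty is used to absorb the supremum of the Gaussian noise over the relevant family of models; the two alternative lower bounds on $\lambda,\lambda_0$ correspond to two ways of counting models.

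Fix any comparator $(\alpha,\bbeta)$. Global optimality of $(\hat\alpha,\hat\bbeta)$ gives
\begin{equation*}
\tfrac1n\|\bff^*-\hat{\bff}\|_2^2 \le \tfrac1n\|\bff^*-\bff\|_2^2 + \tfrac2n\langle\bepsilon,\hat{\bff}-\bff\rangle + \mathrm{pen}(\bbeta)-\mathrm{pen}(\hat\bbeta),
\end{equation*}
where $\hat{\bff}=\hat\alpha\B 1+\B X\hat\bbeta$ and $\bff=\alpha\B 1+\B X\bbeta$. The penalty term $\lambda\sum_j|\{\beta_k:k\in\cI_j\}|$ differs from $\lambda K(\bbeta)$ by at most $\lambda q$, and that gap is awkward. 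So first I will rewrite the penalty. When $N=0$, any $\bbeta$ can be shifted per predictor, with the intercept compensating, so that the largest cluster is zero. For the estimator this shift is automatic when $\lambda_0>0$. Per predictor the cluster count equals the number of nonzero clusters plus one if a zero coefficient is present, so the difference between $\mathrm{pen}(\bbeta)$ and $\mathrm{pen}(\hat\bbeta)$ can be controlled by $\lambda(K(\bbeta)\lor1)+\lambda_0(\|\bbeta\|_0\lor1)$ minus the estimator's $\lambda\hat K+\lambda_0\|\hat\bbeta\|_0$, plus a term linear in the number of active predictors. The latter will be absorbed below.

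Next I bound the cross term. Write $\hat{\bff}-\bff$ as an element of the linear span $V$ of $\B 1$ and the columns $\B X_i$ generated by the union of the supports, or of the active groups. Then $\langle\bepsilon,\hat{\bff}-\bff\rangle\le\|P_V\bepsilon\|_2\|\hat{\bff}-\bff\|_2$, and Young's inequality turns this into $\tfrac{1}{4n}\|\hat{\bff}-\bff\|_2^2 + \tfrac{C}{n}\|P_V\bepsilon\|_2^2$. I take a union bound over all possible $V$ with $\chi^2$ tail bounds. There are two cases.

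\emph{Case $\lambda_0\ge c_{\lambda_0}\sigma^2\log(ep)/n$.} Index $V$ by the support $S$ with $|S|=s$. There are at most $p^s$ of these, so $\|P_V\bepsilon\|^2\lesssim\sigma^2(s+1)\log(ep)+\sigma^2 t$ simultaneously over all $S$. This is at most $\lambda_0 n(\|\hat\bbeta\|_0+\|\bbeta\|_0\lor1)$, up to the deviation term.

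\emph{Case $\lambda\ge c_\lambda\sigma^2p_{\max}\log(eq)/n$.} Index $V$ by the set $G$ of active categorical predictors, meaning those with more than one cluster. Then $\dim V\le 1+\sum_{j\in G}p_j\le 1+|G|p_{\max}$, and the number of such sets of size $g$ is at most $q^g$. Hence $\|P_V\bepsilon\|^2\lesssim\sigma^2(1+g\,p_{\max}+g\log(eq)+t)$. Each active predictor contributes at least one nonzero cluster to $\hat K$ or $K(\bbeta)$, so $g\le\hat K+K(\bbeta)$. The term is therefore absorbed by $\lambda n(\hat K+K(\bbeta)\lor1)$ when $c_\lambda$ is large. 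The same absorption handles the per-active-predictor slack from the penalty rewriting.

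Finally, the triangle inequality $\|\hat{\bff}-\bff\|^2\le2\|\hat{\bff}-\bff^*\|^2+2\|\bff^*-\bff\|^2$ gives
\begin{equation*}
\tfrac1{2n}\|\bff^*-\hat{\bff}\|^2\lesssim\tfrac1n\|\bff^*-\bff\|^2+\lambda_0(\|\bbeta\|_0\lor1)+\lambda(K(\bbeta)\lor1)+\sigma^2t/n.
\end{equation*}
Choosing $t$ of the order of $\log(ep)$ or $\log(eq)$, the residual $\sigma^2t/n$ is dominated by whichever of $\lambda_0$ or $\lambda$ is large. Taking the infimum over $(\alpha,\bbeta)$ finishes the proof.

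I expect the main obstacle to be the bookkeeping of the clustering penalty versus $K$. The estimator's penalty may contain "$+1$ per predictor" terms that are not counted in $\hat K$. These must be handled carefully, especially in the $\lambda$-large case where $\lambda_0$ may be $0$ and the estimator's representation need not have a zero cluster. The other delicate point is ensuring that every non-active predictor contributes no dimension to $V$, because its single cluster is absorbed into the intercept.
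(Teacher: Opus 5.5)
Your proposal is correct and matches the paper's proof. In the $\lambda_0$-large case the paper simply invokes Theorem~\ref{estimation_thm}, whose proof is exactly your support-indexed union bound. In the $\lambda$-large case it uses the same basic inequality, a union bound over sets $\bG$ of predictors carrying nonzero coefficients (span dimension at most $p_{\max}|\bG|$), and absorption via $\hat K\ge \hat g$ after shifting each predictor so that its largest cluster is zero.

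One small simplification: once both the comparator and the estimator are shifted so that every predictor contains a zero cluster, the clustering penalties equal $\lambda(q+K(\bbeta))$ and $\lambda(q+\hat K)$ exactly. The $q$ terms then cancel, so the ``per-active-predictor slack'' you were worried about never arises.
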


The following result is a direct consequence of Theorem~\ref{estimation_thm2} in the linear setting.


\begin{corollary}\label{cor-thm-2}
Suppose  that $\bff^*=\alpha^*\B{1}+\B X\B\beta^*$ and the assumptions of Theorem~\ref{estimation_thm2} hold. With high probability,\footref{thm1.5prob} we have
    $$ \tfrac{1}{n}\|\mathbf{f}^*-\B{X}\hat{\B{\beta}}-\hat\alpha\B{1}\|_2^2\lesssim \frac{\sigma^2 p_{\max}K^*\log(eq)}{n}\land \frac{\sigma^2 s^*\log(ep)}{n}.$$
\end{corollary}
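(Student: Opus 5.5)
The corollary follows from Theorem~\ref{estimation_thm2} by plugging specific comparison pairs $(\alpha,\bbeta)$ into the infimum on the right-hand side. In the linear setting, $\bff^*=\alpha^*\B 1+\B X\bbeta^*$, the sparsest representation of $\bbeta^*$ gives an approximation error of zero. The bound then reduces to the regularisation terms alone. To get the minimum of the two rates, I read the assumptions of Theorem~\ref{estimation_thm2} as allowing either tuning regime. I therefore set the tuning parameters at the threshold of one inequality and treat the other parameter as small or zero, and I interpret the corollary as the bound obtained in each regime, i.e.\ the minimum over the admissible choices.

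\textbf{Step 1 (sparsity regime).} Take $\lambda_0\asymp c_{\lambda_0}\sigma^2\log(ep)/n$ and $\lambda$ small, for instance $\lambda\lesssim \sigma^2 s^*\log(ep)/(nK^*)$. Plugging $(\alpha^*,\bbeta^*)$ into the infimum gives
\[
\lambda_0(s^*\lor 1)+\lambda(K^*\lor 1)\lesssim \frac{\sigma^2 s^*\log(ep)}{n}.
\]
This uses $K^*\le s^*$, which holds because each nonzero cluster contains at least one nonzero coefficient. It also assumes $s^*\ge1$; the case $s^*=0$ is trivial, since the $\lor 1$ terms then only contribute the base rate.

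\textbf{Step 2 (clustering regime).} Take $\lambda\asymp c_\lambda\sigma^2p_{\max}\log(eq)/n$ and $\lambda_0$ small, say $\lambda_0\lesssim \sigma^2 p_{\max}K^*\log(eq)/(n s^*)$; $\lambda_0=0$ is also allowed here, since only one of the two inequalities needs to hold. The same plug-in gives a bound of order $\lambda K^*+\lambda_0 s^*\lesssim \sigma^2p_{\max}K^*\log(eq)/n$. Note that $s^*\le p_{\max}K^*$ automatically: each nonzero cluster for predictor $j$ has at most $p_j\le p_{\max}$ members, so $\lambda_0 s^*$ is also controlled whenever $\lambda_0\lesssim\sigma^2\log(eq)/n$.

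\textbf{Step 3 (combining the regimes).} When both thresholds hold simultaneously, evaluating the infimum at $\bbeta^*$ gives $\lambda_0 s^*+\lambda K^*$, not a minimum. So the clean ``$\land$'' statement should be understood as the best choice of tuning parameters. Alternatively, one can compare against different $\bbeta$'s, but since the zero approximation error forces $\bbeta=\bbeta^*$ up to shifts, the regime interpretation is the one I would state.

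The main obstacle is exactly this interpretive point: ensuring the minimum is legitimately attained given that both penalties appear additively. I would handle it by writing the proof as two cases on which inequality is used, with the other parameter upper bounded as above. The counting facts $K^*\le s^*\le p_{\max}K^*$ are the only genuine ingredients beyond Theorem~\ref{estimation_thm2}.
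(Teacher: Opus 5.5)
Your proposal is correct and is essentially the paper's argument. The paper presents the corollary as a direct consequence of Theorem~\ref{estimation_thm2}: evaluate the infimum at $(\alpha^*,\bbeta^*)$, where the approximation error vanishes, with the tuning parameters of threshold order in whichever regime is used. So your regime-wise reading of the ``$\land$'', including the implicit upper bounds on $\lambda$ and $\lambda_0$, is the right one.

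One small caveat: for $s^*=0$ the stated right-hand side is zero, so that case is not trivial. It is covered only if you read $s^*$ and $K^*$ as $s^*\lor1$ and $K^*\lor1$, as the $\lor1$ terms in Theorem~\ref{estimation_thm2} suggest.
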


For illustration, consider the setting where $p_1=\cdots=p_q=p_{\max}$ and $s^*\asymp K^*p_{\max}$. The last relationship holds, for example, when the number of true coefficient clusters for each predictor is bounded, and the number of true nonzero coefficients for each true predictor is of order~$p_{\max}$. 
In this setting, Corollary~\ref{cor-thm-2} implies an error rate of
$
\sigma^2s^*\log(eq)/n$,
while Theorem~\ref{estimation_thm} implies an error rate of 
$\sigma^2s^*\log(eqp_{\max})/n$.
Hence, the rate in Corollary~\ref{cor-thm-2} is tighter by a factor of $\log(q)/\log(qp_{\max})$. This improvement is favorable when the predictors have many levels, i.e., when~$p_{\max}$ is large.

We also note that when $q=1$ and $s^*\asymp p$, 
the prediction error bound in Corollary~\ref{cor-thm-2} matches the minimax lower bound etablished in~\citet{she2022supervised} for the linear regression problem with multivariate response and continuous predictors.

\subsection{Cluster Recovery Guarantees}\label{sec:clusterrecovery}
In this section, we show that under suitable conditions our estimator correctly recovers the true clusters determined by~$\bbeta^*$. We specify the model setup and introduce some notation in Section~\ref{sec.clust.th.prelim}. We present a high-level main result in Section~\ref{sec.clust.th.main}, and study concrete examples and special cases in Section~\ref{sec.clust.th.suff}. In Section~\ref{sec.clust.th.comp}, we discuss the relevant results in the literature.

\subsubsection{Preliminaries}
\label{sec.clust.th.prelim}
Throughout Section~\ref{sec:clusterrecovery}, we focus on the linear model setting:  $\bff^*=\alpha^*\B{1}+\bX\bbeta^*$. As before, we choose the value of~$\alpha^*$ so that~$\bbeta^*$ is the sparsest possible. In particular, the largest true cluster for every categorical predictor has zero $\bbeta^*$-coefficients.  To simplify the exposition, we focus on the setting where all the predictors are categorical, i.e., $N=0$. 

We use the term ``clustering pattern'' to refer to 
a partition of the~$p$ dummy variables, where each cluster corresponds to one particular categorical predictor. That is, dummy variables corresponding to different categorical predictors cannot belong to the same cluster. A formal definition is provided in Appendix~\ref{sec.prf.prelim}. Note that each candidate regression coefficient vector~$\bbeta\in\mathbb{R}^p$ determines a clustering pattern, where if two dummy variables for the same predictor have the same $\bbeta$-coefficient then they are clustered together. We write $\cG=\cG(\B\beta)$ for the clustering pattern determined by~$\bbeta$ and let $\cG^*=\cG(\B\beta^*)$. We also let $s^*=\|\bbeta^*\|_0$ be the sparsity level of the true coefficients, and we let $c^*=(s^*/K^*)\vee 1$ and $n^*=n/(\min_{j\in[q]} p_j)$. 

Next, we define a measure of ``clustering impurity'' for a candidate clustering pattern~$\mathcal{G}=\mathcal{G}(\bbeta)$ with respect to the true pattern~$\mathcal{G}^*$:
$$
\mathcal{E}(\mathcal{G})=\sum_{k=1}^{q} \; \sum_{a\in\{\beta_j:\,j\in\cI_k\}} \Big(\big|\{l\in\cI_k: {\beta}_l=a\}\big|-\max_{b\in\{\beta_j^*:\,j\in\cI_k\}}\big|\{l\in\cI_k: {\beta}_l=a,\,\hat\beta^*_l=b\}\big|\Big).
$$
\begin{remark}
    If $\mathcal{G}=\mathcal{G}(\bbeta)=\mathcal{G}(\tilde{\bbeta})$, then using $\tilde{\bbeta}$ instead of $\bbeta$ in the above display does not change the value of $\mathcal{E}(\mathcal{G})$. Thus, $\mathcal{E}(\mathcal{G})$ is well-defined and depends only on the clustering pattern~$\mathcal{G}$.
\end{remark}

Our impurity measure~$\mathcal{E}(\mathcal{G})$ counts the number of variables that need to be removed from the $\mathcal{G}$-clusters to make them ``pure'' with respect to the $\mathcal{G}^*$-cluster membership.\footnote{That is, the members of each resulting cluster belong to the same $\mathcal{G}^*$-cluster.} In fact, quantity $1-\mathcal{E}(\mathcal{G})/p$, known as purity~\citep[Ch. 16]{puritydef}, is often used to measure clustering quality. We now illustrate $\mathcal{E}(\mathcal{G})$ with an example.

\begin{example}
    Consider a cluster $G\in\cG$ that has four members with the following values of the true $\bbeta^*$-coefficients: $c,c,c,d$, where $c\ne d$. Then, the contribution of this cluster towards~$\mathcal{E}(\bbeta)$ is~$1$. Alternatively, if the corresponding $\bbeta^*$-coefficients are $c,c,d,d$, then the contribution towards~$\mathcal{E}(\bbeta)$ is~$2$.
\end{example}

Given a prediction vector~$\bff\in\mathbb{R}^n$ of the form $\bff=\alpha\B{1}+\bX\bbeta$ for some $\alpha\in\mathbb{R}$ and $\bbeta\in\mathbb{R}^p$, we define $\mathcal{G}_{\bff}=\mathcal{G}(\bbeta)$. Note that if $\tilde{\alpha}\B{1}+\bX\tilde{\bbeta}=\alpha\B{1}+\bX{\bbeta}$ then $\mathcal{G}(\tilde{\bbeta})=\mathcal{G}(\bbeta)$, and hence~$\mathcal{G}_{\bff}$ is well-defined. In our cluster recovery results, we need to quantify the amount of the signal $\bff^*$ that remains unexplained when a candidate clustering pattern~$\cG$ is used instead of~$\cG^*$. A formalisation of this notion is given below.

\begin{definition}[Approximation Error]\label{def:ebeta}
Given a clustering pattern~$\mathcal{G}$, we define
    \begin{equation}
       w(\mathcal{G}) =  \inf_{\bff: \cG_{\bff}= \cG} \Big(\frac{1}{n}\|\bff^*-\bff\|_2^2\Big).
    \end{equation}
\end{definition}

We can think of~$w(\cG)$ as the error made when $\bff^*$ is approximated by vectors~$\bff$ of the form $\alpha\B{1}+\bX\bbeta$, such that the clustering pattern of~$\bbeta$ (i.e., $\mathcal{G}_{\bff}$) is the same as~$\cG$.

\subsubsection{Main result}
\label{sec.clust.th.main}

The following assumption imposes some regularity conditions on the sizes of the true clusters in the clustering pattern~$\cG^*$, which is determined by the regression coefficient vector~$\bbeta^*$. If a cluster contains zero $\bbeta^*$-coefficients, we call it a \textit{zero} cluster, otherwise we call it a \textit{nonzero} cluster.
\begin{asu}\label{assumptiona}
    The sizes of the true nonzero clusters are of the same order:
        $|G|\asymp c^*$ for all the nonzero clusters $G\in\cG^*$.
\end{asu}

We say that a clustering pattern~$\cG$ is $s$-sparse if there exists $\bbeta\in\mathbb{R}^p$ with $\|\bbeta\|_0\le s$ such that $\cG=\cG(\bbeta)$. We are now ready to present the main result of this section.

\begin{theorem}\label{clust.recov.thm}
Let $\hat{\B\beta}$ be an optimal solution to Problem~\eqref{clusteringproblem-reg} with
\begin{equation}\label{lambda-thm2}
    a_0\frac{\sigma^2\log p}{n}\leq \lambda_0 \leq b_0\frac{\sigma^2\log p}{n},~~~~~~a\frac{c^*\sigma^2\log p}{n}\leq \lambda \leq b\frac{c^*\sigma^2\log p}{n}
\end{equation}
for positive constants  $a_0, b_0,a$, and~$b$, such that~$a_0$ and~$a$ are sufficiently large.
In addition, suppose that there exists a constant $\kappa>1$ such that
\begin{equation}
\label{assumptionA.bnd}
w(\cG)\gtrsim \frac{\mathcal{E}(\cG)\sigma^2\log(p)}{n}~~\text{for all~} \kappa s^*\text{-sparse\ clustering patterns~} \cG
\end{equation}
with a sufficiently large multiplicative constant. If Assumption~\ref{assumptiona} holds, then, $\hat{\B\beta}$ and $\B\beta^*$ have the same clustering pattern with high probability. Specifically, $\p(\widehat{\cG}=\cG^*)\geq 1-p^{-c}$ for some positive constant~$c$.
\end{theorem}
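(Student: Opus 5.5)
The plan is a basic-inequality argument over candidate clustering patterns, combined with a union bound. Write the objective in terms of the prediction vector. Since $\hat{\bbeta}$ is a global minimiser, comparing with $(\alpha^*,\bbeta^*)$ gives
\[
\tfrac{1}{n}\|\bff^*-\hat{\bff}\|_2^2 \le \tfrac{2}{n}\langle \bepsilon,\hat{\bff}-\bff^*\rangle + \lambda_0(s^*-\|\hat\bbeta\|_0)+\lambda(K^*+q-\hat K-\hat q),
\]
where $\hat{\bff}=\hat\alpha\B{1}+\bX\hat\bbeta$ and the clustering penalty counts all clusters, including zero ones.

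The first step is to show that $\hat\bbeta$ lies in the class covered by~\eqref{assumptionA.bnd}, namely that $\|\hat\bbeta\|_0\le\kappa s^*$ with high probability. Following the proof of Theorem~\ref{estimation_thm}, the noise term is controlled uniformly over supports. The choice $\lambda_0\ge a_0\sigma^2\log p/n$ with $a_0$ large then forces $\lambda_0\|\hat\bbeta\|_0\lesssim \lambda_0 s^*+\lambda K^*+\sigma^2\log p/n$. Since $\lambda K^*\asymp c^*K^*\sigma^2\log p/n\lesssim s^*\sigma^2\log p/n$ (because $c^*K^*\le s^*\vee K^*$), we get $\|\hat\bbeta\|_0\le\kappa s^*$ for a suitable constant $\kappa$. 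Assumption~\ref{assumptiona} enters here to relate $c^*$, cluster counts, and sparsity.

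The second step fixes a sparse candidate pattern $\cG\ne\cG^*$ and shows that no $\bff$ with $\cG_\bff=\cG$ can beat $\bff^*$ in objective value. Split the wrong patterns into two kinds:
\begin{itemize}
\item impure patterns, with $\mathcal{E}(\cG)>0$;
\item pure refinements of $\cG^*$, where some true cluster is split into pieces.
\end{itemize}
For a pattern $\cG$, the set of achievable $\bff$ is a linear subspace $V_\cG$ of dimension at most $\#\text{clusters}+1$. The best fit over $V_\cG$ improves on $\bff^*$ by at most $\|P_{V_\cG}\bepsilon\|^2/n - w(\cG)$ plus cross terms. The cross term $\langle\bepsilon,\bff-\bff^*\rangle$ is bounded by $\sqrt{w(\cG)}\cdot\sigma\sqrt{(\dim+\log N_\cG)/n}$, where $N_\cG$ is the number of patterns in the relevant class.

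The key counting step is that the number of patterns with impurity $\mathcal{E}$ and the given sparsity scales like $p^{O(\mathcal{E}+\text{extra clusters})}$. Only the $\mathcal{E}$ misplaced variables, plus the new cluster labels, need to be specified relative to $\cG^*$. So $\log N_\cG\lesssim(\mathcal{E}(\cG)+|\hat K-K^*|)\log p$. For impure patterns,~\eqref{assumptionA.bnd} with a large constant gives $w(\cG)\gtrsim\mathcal{E}\sigma^2\log p/n$, which dominates the noise gain. For pure refinements, $w(\cG)$ may be zero, but each extra cluster costs $\lambda\ge a c^*\sigma^2\log p/n$. Fitting a split of a true cluster only gains the noise projected onto a few extra directions (at most $c^*$ levels per cluster, by Assumption~\ref{assumptiona}), which is $\lesssim\sigma^2\log p/n$ per split after the union bound. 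Since $a$ is large, the split is not worthwhile.

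Conversely, the upper bound $\lambda\le b c^*\sigma^2\log p/n$ prevents merging distinct true clusters for free. A merge is an impure pattern with $\mathcal{E}\gtrsim c^*$ by Assumption~\ref{assumptiona}. Its approximation loss $\gtrsim c^*\sigma^2\log p/n$ from~\eqref{assumptionA.bnd}, with a large constant relative to $b$, exceeds the penalty savings $\lambda$. This is the reason for requiring the sizes of the true nonzero clusters to be of order $c^*$. The sparsity penalty must also be compared with zeroing the wrong cluster, and this is handled similarly with $b_0$.

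Summing over $\mathcal{E}$ and the number of extra clusters gives a geometric series, so the total failure probability is $p^{-c}$. The main obstacle is the careful uniform noise control. The combinatorial count of patterns must be measured relative to $\cG^*$ by $\mathcal{E}$ and the cluster-count change, and it has to match both the penalty $\lambda$ (which scales with $c^*$) and the separation~\eqref{assumptionA.bnd}, including mixed cases that are simultaneously impure and over-split. I would first try to handle these with a peeling argument over the pair $(\mathcal{E}(\cG),\,\#\text{clusters}(\cG)-\#\text{clusters}(\cG^*))$, bounding the noise by $\chi^2$ tail bounds and a union bound in each shell.
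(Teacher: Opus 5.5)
Your high-level plan is sound: control $\|\widehat{\bbeta}\|_0$ first, bound the noise by $\chi^2$ tails plus a union bound over patterns, and let $\lambda\gtrsim c^*\sigma^2\log p/n$ pay for splits while the separation condition pays for impurity and merges. Two steps fail as written, however.

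\textbf{The comparator is wrong.} You compare with $(\alpha^*,\bbeta^*)$ and aim to show that no $\bff$ with $\cG_{\bff}=\cG$ beats $\bff^*$. The gain you must then control is $\|\bP_{V_\cG}\bepsilon\|_2^2/n$, a $\chi^2$ variable with $\dim V_\cG$ (about $\widehat K+1$) degrees of freedom. Neither $\mathcal{E}(\cG)$ nor the penalty differences control this. Concretely, in the univariate case let $\cG$ move one variable between two true nonzero clusters. Then $\mathcal{E}(\cG)=1$, both penalties are unchanged, and $w(\cG)$ can be of order $\sigma^2\log p/n$. Yet the least-squares fit on $\cG$ beats $\bff^*$ by roughly $(K^*+1)\sigma^2/n-w(\cG)>0$ as soon as $K^*\gg\log p$, which the theorem allows. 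So the statement you set out to prove is false. The paper instead compares $\widehat{\bbeta}$ with the oracle least-squares fit on $\cG^*$, whose penalty is at most that of $\bbeta^*$, via
\[
\CR_{\widehat{\cG}}(\bY)-\CR_{\cG^*}(\bY)=\widehat{w}+\tfrac{2}{n}\bepsilon^{\top}(\bI_n-\bP_{\bX_{\widehat{\cG}}})\bff^*+\tfrac{1}{n}\bepsilon^{\top}(\bP_{\bX_{\cG^*}}-\bP_{\bX_{\widehat{\cG}}})\bepsilon .
\]
Only the increment of projections enters there. Your ``few extra directions'' intuition for splits is valid only in this form.

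\textbf{Your counting bound is false.} You claim $\log N_\cG\lesssim(\mathcal{E}(\cG)+|\widehat K-K^*|)\log p$. But one true nonzero cluster of size $c^*$ can be split into two pure pieces in $2^{c^*-1}-1$ ways. The union-bounded noise gain per split is therefore of order $(c^*+\log K^*)\sigma^2/n$, which can far exceed your $\sigma^2\log p/n$; this is exactly why $\lambda$ must scale with $c^*$. Also, true zero clusters have unrestricted size. A pure split that makes $k$ true zeros nonzero costs about $k\log p$ in noise and must be paid by $\lambda_0$ with $a_0$ large, not by $\lambda$.

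What is really needed is to embed $V_\cG$, for an arbitrary sparse $\cG$, in a subspace whose excess over $V_{\cG^*}$ is cheap to enumerate. Individually misplaced or zeroed variables should be charged $\log p$ each, and only genuine extra clusters charged about $c^*$. The paper gets this from the S1--S3 conversion: $\mathrm{span}(\bX_\cG)\subseteq\mathrm{span}(\bX_{\tilde\cG},\bX_{\mathcal{I}})$, with $|\mathcal{I}|=r+t$ and $\tilde\cG$ a refinement of $\cG^*$ by $m$ splits, which yields Lemma~\ref{max.ineq.lem1}. The three terms are then absorbed as follows:
\begin{itemize}
\item $\widehat t$ by $\lambda_0$, through $\widehat s-s^*\ge\widehat t-\widehat r$ (with $b_0$ handling $-\lambda_0\widehat r$);
\item $\widehat m$ by $\lambda$;
\item $\widehat r$ by~\eqref{assumptionA.bnd}, through Lemma~\ref{lem.BA}, $\mathcal{E}(\cG)\gtrsim r(\cG)$, a combinatorial fact that uses Assumption~\ref{assumptiona}.
\end{itemize}
Your proposal has no counterpart of this last step, which is what links the noise count to the impurity.

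A minor point: $\kappa$ is fixed by hypothesis, so you need $\widehat s\le\kappa s^*$ for that specific $\kappa$. This follows from $\lambda_0(\widehat s-s^*)\le\lambda K^*+C(\widehat s+s^*)\sigma^2\log p/n$, with $C$ universal and $\lambda K^*\le b s^*\sigma^2\log p/n$, by taking $a_0$ large relative to $b$ and $C$. A bound with an unspecified constant is not enough.
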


\begin{remark}
Theorem~\ref{clust.recov.thm} shows that our approach recovers the true clusters when $w(\cG)/\mathcal{E}(\cG)$ is sufficiently large for all $\cG\ne \cG^*$. Interestingly, our independently derived ratio $w(\cG)/\mathcal{E}(\cG)$ has notable similarities (but also key differences) with a corresponding quantity~$c_{\min}$ in Theorem~2.4 of \cite{wang2024supervised}, which analyses an estimator that clusters the regression coefficients of continuous predictors.
\end{remark}

\begin{remark}
We note that the impurity measure $\mathcal{E}(\cG)$ satisfies $\mathcal{E}(\cG)\le 2s^*$ for all~$\cG$, and hence a simple sufficient condition for~(\ref{assumptionA.bnd}) to hold is $w(\cG)\gtrsim s^*\log p /n$. However, our results in Section~\ref{sec.clust.th.suff} demonstrate that we can obtain milder conditions under which~(\ref{assumptionA.bnd}) holds true.
\end{remark}

\begin{remark}
We show in the proof of Theorem~\ref{clust.recov.thm} 
that we can extend the range of~$\lambda$ to $c^*\sigma^2\log(K^*\vee 2)/n\lesssim\lambda\lesssim c^*\sigma^2\log p/n$,
in which case we have $\p(\widehat{\cG}=\cG^*)\geq 1-\big[(K^*\vee 2)^{-cc^*}\vee p^{-c}\big]$, for some~$c>0$.
\end{remark}

To better understand the main assumption of Theorem~\ref{clust.recov.thm}, we derive a sufficient condition for~\eqref{assumptionA.bnd} to hold.

\subsubsection{Sufficient Condition for~\eqref{assumptionA.bnd}}
\label{sec.clust.th.suff}

Let $n_{j,k}=|\mathcal{J}_k^j|$ denote the count of observations corresponding to the $k$-th level of the $j$-th predictor, where $\mathcal{J}_k^j$ is defined in~\eqref{J_k_j}. To derive our result, we impose the following assumption.

\begin{asu}\label{assumptionb}
    The counts of observations per predictor level are of the same order across all
 predictors and  
levels, i.e., $n_{j,k}\asymp n^*$ for all~$j\in[q]$ and $k\in[p_j]$.
\end{asu}

Informally speaking, in this section we show that if there is sufficient separation between the distinct values of the true regression coefficients corresponding to the same categorical predictor, then~\eqref{assumptionA.bnd} holds provided there is not too much overlap among the predictors. In the definition below we formalise the notion of the minimum separation between the true regression coefficients.

\begin{definition}[Minimum Separation]\label{def:min.sep}
Let us define 
$$\Delta_{\text{min}} = \min \{ |\beta^*_{k}-\beta^*_{l}| : \beta^*_k\neq \beta^*_l, j\in[q], k,l\in\cI_j  \}.$$
\end{definition}

We also define $\mathcal{B}(\bff)=\{\bbeta\in\mathbb{R}^p, \text{ s.t. }\; \bff=\alpha\B{1}+\bX\bbeta \text{ for some } \alpha\in\mathbb{R} \}$,
which is the set of all possible regression coefficient vectors that correspond to the prediction vector~$\bff$; and $\mathcal{F}(s)=\{\bff\in\mathbb{R}^n, \text{ s.t. } \bff=\alpha\B{1}+\bX\bbeta \text{ for some } \alpha\in\mathbb{R},\;\text{and}\;\bbeta\in\mathbb{R}^p \; \text{with}\;\|\bbeta\|_0\le s \}$,
which is the set of all prediction vectors that have an $s$-sparse linear representation.
The following result provides a sufficient condition for the main assumption of Theorem~\ref{clust.recov.thm}.
\begin{theorem}
\label{th.BA}
Suppose that Assumptions~\ref{assumptiona} and~\ref{assumptionb} hold. Moreover, assume that there exists a constant $\kappa>1$ such that 
\begin{equation}
\label{cond.BA}
\inf_{ \substack{\bff\in\mathcal{F}(\kappa s^*) \\ \bff\ne\bff^* }} 
\sup_{ \substack{{\bbeta}\in\mathcal{B}(\bff) \\ \tilde{\bbeta}^*\in\mathcal{B}(\bff^*) }}
\,\frac{\|\bff^*-\bff\|_2}{\|\tilde{\bbeta}^*-{\bbeta}\|_2}\gtrsim \sqrt{n^*}.
\end{equation}
If $\Delta_{\text{min}}\gtrsim \sigma\sqrt{\frac{\log p}{n^*}}$
for a sufficiently large multiplicative constant, then condition~(\ref{assumptionA.bnd}) is satisfied.
\end{theorem}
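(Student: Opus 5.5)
Fix an arbitrary $\kappa s^*$-sparse clustering pattern $\cG$ and take $\bff=\alpha\B1+\bX\bbeta$ with $\cG(\bbeta)=\cG$ and $\|\bbeta\|_0\le\kappa s^*$; it nearly attains $w(\cG)$. We may restrict to such $\bbeta$: sparsity of $\cG$ gives one sparse representative, and shifting coefficients within a predictor and moving the constant into $\alpha$ keeps $\cG$ and $\bff$ unchanged. If $\mathcal{E}(\cG)=0$ there is nothing to prove, so assume $\mathcal{E}(\cG)\ge1$. Then $\bff\ne\bff^*$: every pattern that $\bff^*$ induces is $\cG^*$, and a pattern with positive impurity cannot be $\cG^*$. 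The plan is to prove
\[
\|\bff^*-\bff\|_2^2\gtrsim n^*\,\|\tilde\bbeta^*-\bbeta\|_2^2\gtrsim n^*\,\mathcal{E}(\cG)\,\Delta_{\min}^2 .
\]
With $\Delta_{\min}^2\gtrsim\sigma^2\log p/n^*$ this gives $\|\bff^*-\bff\|_2^2\gtrsim\mathcal{E}(\cG)\sigma^2\log p$, and dividing by $n$ and taking the infimum over $\bff$ yields \eqref{assumptionA.bnd}.

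\textbf{Step 1 (first inequality).} Condition~\eqref{cond.BA} applied to $\bff\in\mathcal{F}(\kappa s^*)$ gives representatives $\bbeta\in\mathcal{B}(\bff)$ and $\tilde\bbeta^*\in\mathcal{B}(\bff^*)$ with $\|\bff^*-\bff\|_2\gtrsim\sqrt{n^*}\|\tilde\bbeta^*-\bbeta\|_2$. If the supremum is not attained, take representatives achieving half of it. Every representative of $\bff$ has pattern $\cG$, and every representative of $\bff^*$ has pattern $\cG^*$. Representatives differ only by a constant shift within each predictor block, $\beta_l\mapsto\beta_l+t_k$ for $l\in\cI_k$, and such a shift preserves both the pattern and the pairwise differences inside a block. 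Assumption~\ref{assumptionb} enters only through the normalisation $n^*$ in~\eqref{cond.BA}.

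\textbf{Step 2 (combinatorial lower bound, the main step).} This bound must hold for arbitrary block-wise shifts, so work block by block. Fix predictor $k$ and set $\B u=\tilde\bbeta^*_{\cI_k}-\bbeta_{\cI_k}$. Take a $\cG$-cluster $G\subseteq\cI_k$ on which $\bbeta$ equals the constant $a$. Group its members by their true value and let $m$ be the size of the largest group. Each other member $l$ has $\tilde\beta^*_l$ differing from the value $\tilde\beta^*$ takes on the largest group by at least $\Delta_{\min}$. Since $\bbeta\equiv a$ on $G$, $u_l-u_{l'}=\tilde\beta^*_l-\tilde\beta^*_{l'}$ for $l,l'\in G$. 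So for any single number $c$, at most one true-value group can have $|u_l-c|<\Delta_{\min}/2$, and $\sum_{l\in G}u_l^2\ge (|G|-m)\Delta_{\min}^2/4$ after taking $c=0$. The count $|G|-m$ is exactly this cluster's contribution to $\mathcal{E}$. Summing over clusters and blocks gives $\|\tilde\bbeta^*-\bbeta\|_2^2\ge\mathcal{E}(\cG)\Delta_{\min}^2/4$.

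The difficulty I expect is making sure this argument is genuinely invariant to the arbitrary shifts $t_k$ and to the supremum in~\eqref{cond.BA}. It is, because the bound is pointwise for every pair of representatives: a single $u$-value near $0$ can be shared by only one true group. Assumption~\ref{assumptiona} does not seem needed here beyond its being part of the hypothesis of Theorem~\ref{clust.recov.thm}.
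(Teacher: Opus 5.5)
Your proof is correct and follows the paper's argument. Condition~\eqref{cond.BA} converts $w(\cG)$ into $(n^*/n)\|\tilde\bbeta^*-\bbeta\|_2^2$, and the per-cluster observation that at most one true value can lie within $\Delta_{\min}/2$ of the common fitted value $a$ gives $\|\tilde\bbeta^*-\bbeta\|_2^2\ge\mathcal{E}(\cG)\Delta_{\min}^2/4$. Beyond that, you also handle non-attainment of the infimum and supremum, check that $\bff\ne\bff^*$, and verify shift-invariance of the representatives, which the paper passes over.
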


For an illustration, we consider the univariate case ($q=1$) and verify condition~\eqref{cond.BA}. In the univariate case, each row of $\B X\in\R^{n\times p}$ has exactly one nonzero coordinate, which is equal to~$1$.
Hence, there exists a unique vector~$\tilde{\bbeta}^*$ such that 
$\bff^*=\bX\tilde{\bbeta}^*$. Similarly, for every $\bff\in\mathcal{F}(\kappa s^*)$, we can 
find a unique~${\bbeta}$ to satisfy  $\bff=\bX{\bbeta}$. Note that the
columns of~$\B X$ are orthogonal. Thus, we have
$$
\|\bff^*-\bff\|_2^2=\sum_{k=1}^p\|\bX_{k}\|^2(\tilde{\beta}^*_k-{\beta}_k)^2
\gtrsim n^*\|\tilde{\bbeta}^*-{\bbeta}\|_2^2,
$$
where the last inequality holds by Assumption~\ref{assumptionb}, and hence condition~(\ref{cond.BA}) is satisfied. Note that $n^*\asymp n/p$ by Assumption~\ref{assumptionb}. Consequently, we have the following result as a direct corollary of Theorem~\ref{th.BA}.
\begin{corollary}
\label{univariate.corollary}
Suppose that $q=1$ and Assumptions~\ref{assumptiona} and~\ref{assumptionb} hold. If
$\lambda_0,\lambda$ are taken as in~\eqref{lambda-thm2} and $\Delta_{\text{min}} \gtrsim \sigma\sqrt{p\log(p)/n}$, with sufficiently large  multiplicative constants in the lower bounds, then
$\p(\widehat{\cG}=\cG^*)\geq 1-p^{-c}$ for some positive constant~$c$.
\end{corollary}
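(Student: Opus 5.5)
The corollary follows by combining Theorem~\ref{th.BA} with Theorem~\ref{clust.recov.thm}. The plan is to verify the hypotheses of Theorem~\ref{th.BA} in the case $q=1$, conclude that condition~\eqref{assumptionA.bnd} holds, and then apply Theorem~\ref{clust.recov.thm}. Assumptions~\ref{assumptiona} and~\ref{assumptionb} are given, and $\lambda_0,\lambda$ satisfy~\eqref{lambda-thm2}. So only two things remain: condition~\eqref{cond.BA}, and converting the separation assumption from the $n^*$ scale to the $p/n$ scale.

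\emph{Step 1: verify~\eqref{cond.BA}.} When $q=1$, every row of $\bX$ contains exactly one entry equal to $1$, so $\bX\B{1}_p=\B{1}_n$. Hence any $\bff=\alpha\B{1}+\bX\bbeta$ can be rewritten as $\bX(\bbeta+\alpha\B{1}_p)$. Assumption~\ref{assumptionb} gives $n_{1,k}\ge 1$, so the columns of $\bX$ are nonzero, orthogonal and of full rank, and this representation $\bff=\bX\bbeta$ is unique. Choose $\bbeta\in\mathcal{B}(\bff)$ and $\tilde\bbeta^*\in\mathcal{B}(\bff^*)$ to be these intercept-free representations; the supremum in~\eqref{cond.BA} is at least the value at this choice. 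Orthogonality then gives
\[
\|\bff^*-\bff\|_2^2=\sum_{k=1}^p n_{1,k}(\tilde\beta^*_k-\beta_k)^2\gtrsim n^*\|\tilde\bbeta^*-\bbeta\|_2^2 .
\]
This holds uniformly over all $\bff\ne\bff^*$, and in particular over $\bff\in\mathcal{F}(\kappa s^*)$ for any $\kappa>1$, say $\kappa=2$. Condition~\eqref{cond.BA} therefore holds.

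\emph{Step 2: rescale the separation condition.} Since $q=1$, we have $n^*=n/p_1=n/p$. Therefore $\sigma\sqrt{p\log(p)/n}=\sigma\sqrt{\log p/n^*}$, and the assumed bound on $\Delta_{\min}$, with a large enough constant, is exactly the separation hypothesis of Theorem~\ref{th.BA}.

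\emph{Step 3: conclude.} Theorem~\ref{th.BA} now yields~\eqref{assumptionA.bnd} with the same $\kappa$ and a large multiplicative constant. That constant is inherited from the constant in the $\Delta_{\min}$ lower bound, so we take the latter large enough. Theorem~\ref{clust.recov.thm} then gives $\p(\widehat{\cG}=\cG^*)\ge 1-p^{-c}$.

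There is essentially no obstacle here. The only points needing care are that the uniqueness argument handles the intercept (absorbed via $\bX\B{1}_p=\B{1}_n$), that each $n_{1,k}>0$ so no column of $\bX$ is zero, and that the constants propagate correctly through the two theorems.
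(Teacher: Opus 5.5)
Your proposal is correct and is essentially the paper's own argument. The paper likewise verifies \eqref{cond.BA} for $q=1$ using the orthogonality of the dummy columns and Assumption~\ref{assumptionb}, notes $n^*\asymp n/p$, and obtains the corollary by combining Theorem~\ref{th.BA} with Theorem~\ref{clust.recov.thm}; your explicit absorption of the intercept via $\bX\B{1}_p=\B{1}_n$ just spells out the uniqueness claim that the paper states without comment.
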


The following result shows that if we strengthen Assumption~\ref{assumptiona} by imposing $|G|\lesssim c^*$ on all the true \textit{zero} clusters~$G$, then Corollary~\ref{univariate.corollary} continues to hold even with $\lambda_0=0$, i.e., \textit{without} penalising the number of nonzero coefficients.
\begin{corollary}
\label{cor.univ.dense}
Under an additional assumption that $|G|\lesssim c^*$ for all the true zero clusters~$G\in\cG^*$, Corollary~\ref{univariate.corollary} holds for a wider range of the~$\lambda_0$ tuning parameter: $0\le \lambda_0\lesssim \sigma^2\log(p)/n$.
\end{corollary}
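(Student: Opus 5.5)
The plan is to rerun the argument behind Theorem~\ref{clust.recov.thm}, specialised through Corollary~\ref{univariate.corollary} to $q=1$. The goal is to find the places where the lower bound $\lambda_0\ge a_0\sigma^2\log p/n$ was used and replace each of them with a consequence of the new assumption $|G|\lesssim c^*$ for the true zero clusters. I do not expect the stochastic part to change. The Gaussian deviation bounds over a union of clustering patterns, and the relation $\Delta_{\min}\gtrsim\sigma\sqrt{p\log p/n}\Rightarrow$~\eqref{assumptionA.bnd} from Theorem~\ref{th.BA} with the orthogonality computation above, involve only the design and $\bbeta^*$, not $\lambda_0$. So only the deterministic comparison step needs revisiting.

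\emph{Step 1 (sparsity restriction is automatic).} By Assumption~\ref{assumptiona} and the new hypothesis, every true cluster, zero or nonzero, has size $\asymp c^*$. Hence $p=\sum_{G\in\cG^*}|G|\asymp (K^*+1)c^*$, and $s^*\asymp K^*c^*$ whenever $K^*\ge1$. The case $K^*=0$ is trivial, since then $\cG^*$ is a single cluster. Therefore $p\le\kappa s^*$ for some constant $\kappa>1$, and \emph{every} clustering pattern is $\kappa s^*$-sparse. In the original proof, $\lambda_0$ was used to show that $\widehat\cG$ lies among the $\kappa s^*$-sparse patterns, where~\eqref{assumptionA.bnd} applies. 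That use is now unnecessary, and~\eqref{assumptionA.bnd} is available for all candidates $\cG$. This also removes any need for $\lambda_0>0$ for identifiability: the pattern, not the representation, is what we recover.

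\emph{Step 2 (the basic inequality with a small $\lambda_0$).} Fix a candidate $\cG\ne\cG^*$. Compare the objective at the best fit with pattern $\cG$, in its sparsest representation (largest cluster set to zero), with the objective at $\bbeta^*$. The fusion and noise terms are handled exactly as before: $w(\cG)$ dominates the noise cross term by~\eqref{assumptionA.bnd}, and $\lambda\asymp c^*\sigma^2\log p/n$ controls the change in the number of clusters. The only new term is $\lambda_0(\|\bbeta\|_0-s^*)$, which may now be negative and must be absorbed. Since all clusters have size $\asymp c^*$, the quantity $|\|\bbeta\|_0-s^*|$ is bounded by the number of relabelled variables. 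This in turn is $\lesssim \mathcal E(\cG)+c^*\,\big||\cG|-|\cG^*|\big|$ once the largest-cluster choice is made in both representations. Consequently
\[
\lambda_0\big|\|\bbeta\|_0-s^*\big|\lesssim \frac{\sigma^2\log p}{n}\,\mathcal E(\cG)+\frac{c^*\sigma^2\log p}{n}\,\big||\cG|-|\cG^*|\big|.
\]
The first term is absorbed by $w(\cG)$ through~\eqref{assumptionA.bnd}, with a large constant. The second is absorbed by the fusion penalty through the constant $a$ in~\eqref{lambda-thm2}. The upper bound $\lambda_0\lesssim\sigma^2\log p/n$ is what makes both absorptions possible, while $\lambda_0=0$ is allowed because no positive lower bound is used anywhere.

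\emph{Main obstacle.} The delicate part is the combinatorial bound in Step 2. We need to show that the sparsest representation of an arbitrary competing pattern differs from $\bbeta^*$ in at most $O(\mathcal E(\cG)+c^*||\cG|-|\cG^*||)$ coordinates. The concern is whether the ``largest cluster'' of $\cG$ could be a merger of pieces of several true clusters. In that case zeroing it could shift many coordinates without many clusters being added. I would first try to charge such a merger to $\mathcal E(\cG)$: merging parts of two true clusters of size $\asymp c^*$ costs impurity proportional to the smaller part. Oversized mixed clusters therefore pay in $\mathcal E$, and the equal-size hypothesis keeps the charges balanced. Once this bound is established, the union bound over patterns and the final probability $1-p^{-c}$ follow verbatim from the proof of Theorem~\ref{clust.recov.thm}.
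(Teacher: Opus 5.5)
Your Step~1 is fine: under the new hypothesis $p\lesssim s^*$, so $\widehat s\le\kappa s^*$ holds automatically. (In the univariate case \eqref{assumptionA.bnd} in fact holds for every pattern anyway.)

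The gap is your claim that the stochastic part and ``the fusion and noise terms'' go through exactly as before. In the proof of Theorem~\ref{clust.recov.thm}, Lemma~\ref{max.ineq.lem1} bounds both noise terms by the complexity $(\widehat r+\widehat t)\log p+\widehat m c^*\log(K^*\vee 2)$, where $\widehat t$ counts coefficients zeroed out in stage S2. The pieces are paid for as follows:
\begin{itemize}
\item $\widehat r\log p/n$ is absorbed by $\widehat w$ via \eqref{mu.bnd2};
\item $\widehat m c^*\log(K^*\vee 2)/n$ is absorbed by $\lambda\widehat m$;
\item $\widehat t\log p/n$ is absorbed \emph{only} by $\lambda_0\widehat t/2$, i.e.\ by the very lower bound on $\lambda_0$ you are trying to remove.
\end{itemize}
This is a real issue. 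Take $\widehat\cG$ equal to $\cG^*$ except that the true zero cluster is split as $A\cup B$ with $|A|\ge|B|$. Then $\widehat r=\widehat m=0$, $\widehat t=|B|$, and $w(\widehat\cG)=0$, while the union bound over the possible $B$ costs $|B|\log p$. With $\lambda_0=0$, the only thing penalising this pattern is one extra cluster, worth $\lambda\asymp c^*\sigma^2\log p/n$. That suffices precisely when $|B|\lesssim c^*$. This is where the hypothesis $|G|\lesssim c^*$ on true zero clusters is genuinely needed. Your proposal never uses it for this purpose; you use it only for the sparsity restriction and the combinatorial bound on $\|\bbeta\|_0-s^*$.

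The missing step is to change the conversion procedure, as the paper does. Drop stage S2 and let S3 merge any clusters, including the zero cluster, so that $\widehat t=0$ and splits of the zero cluster are charged as merges. Lemma~\ref{max.ineq.lem1} then still holds with $\widehat t=0$: once the zero cluster has size $\lesssim c^*$, the count $(K^*2^{c^*})^m$ of split patterns and the $mc^*$ degrees of freedom in Term~2 remain valid. The second- and third-term arguments go through, and Lemma~\ref{lem.BA} holds by its main case, which does not use sparsity.

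By contrast, what you call the main obstacle is immediate, and only one direction is needed. The term $\lambda_0(\widehat s-s^*)$ sits on the left-hand side of \eqref{main.ineq.th1.prf}, so you only need an upper bound on $s^*-\widehat s=|\widehat G_{\max}|-|G^*_{\max}|$, where these are the largest clusters (zeroed in the sparsest representations). Since $\widehat G_{\max}$ contains at most $|G^*_{\max}|$ members of its majority true cluster, $s^*-\widehat s\le\mathcal E(\widehat\cG)$. Moreover $\mathcal E(\widehat\cG)\le\widehat r$, because each $\widehat\cG$-cluster must shed its minority members in stage S1. Hence $\lambda_0(s^*-\widehat s)\lesssim\widehat r\,\sigma^2\log p/n$, which is absorbed by $\widehat w$.

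Your two-sided bound is therefore unnecessary. Its term $c^*\big||\cG|-|\cG^*|\big|$ would in fact not be absorbed by the fusion penalty when $|\cG|<|\cG^*|$, since that penalty difference is then negative.
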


As we note above, condition~(\ref{cond.BA}) of Theorem~\ref{th.BA} is automatically satisfied in the univariate case under Assumption~\ref{assumptionb}. The following result considers the high-dimensional case and illustrates that condition~(\ref{cond.BA}) is satisfied under suitable assumptions in the random design setting.

\begin{proposition}
\label{prop.rand}
    Suppose that the categorical predictors are independently generated from a discrete uniform distribution with~$L$ outcomes. If $s^*\le C$ for some constant~$C$, then there exists a  constant~$c$, such that~(\ref{cond.BA}) holds with high probability when $n\ge c\log(p)$. Here, $L$ and~$p$ are allowed to be arbitrarily large.
\end{proposition}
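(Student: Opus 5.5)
The plan is to reduce~\eqref{cond.BA} to a uniform lower bound on a sparse quadratic form in the design, prove that bound in population, and then transfer it to the sample by a union bound over supports. Write $s=(1+\kappa)C$. For any $\bff=\alpha\B 1+\bX\bbeta\in\mathcal{F}(\kappa s^*)$ with $\|\bbeta\|_0\le\kappa s^*$, and the sparsest representation $\bff^*=\alpha^*\B 1+\bX\bbeta^*$, set $\B v=\bbeta^*-\bbeta$ and $t=\alpha^*-\alpha$. Then $\bff^*-\bff=\bX\B v+t\B 1$, and $\B v$ is supported on a set $S$ with $|S|\le s$. The set $S$ touches at most $s$ categorical predictors.

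The supremum in~\eqref{cond.BA} lets us replace $\B v$ by any $\B v-\sum_j c_j\B 1_{\cI_j}$, with the shifts absorbed into the intercept. So it suffices to show that, with high probability, for all such $(\B v,t)$,
\begin{equation*}
\tfrac1n\|\bX\B v+t\B 1\|_2^2\;\gtrsim\;\tfrac1L\sum_{j}\min_{c_j\in\R}\|\B v_{\cI_j}-c_j\B 1\|_2^2 .
\end{equation*}
Since $n^*=n/L$, this is exactly~\eqref{cond.BA} after taking square roots.

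\emph{Population step.} For one observation let $Z_j=v_{j,C_j}$. Independence of the predictors gives
\begin{equation*}
\E(\B x^\top\B v+t)^2=\textstyle\sum_j\mathrm{Var}(Z_j)+\big(t+\sum_j\E Z_j\big)^2\ge\sum_j\mathrm{Var}(Z_j).
\end{equation*}
For a uniform level, $\mathrm{Var}(Z_j)=\tfrac1L\sum_{k\in[L]}(v_{j,k}-\bar v_j)^2=\tfrac1L\min_c\|\B v_{\cI_j}-c\B 1\|_2^2$. Hence the population form satisfies the display with constant $1$. Note that this identity is shift invariant, which is precisely why the supremum over representations is needed: the intercept direction is non-identifiable otherwise.

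\emph{Empirical step.} Fix a set $T$ of at most $s$ predictors. It suffices to consider $\B v$ supported on the (at most $s$) levels actually appearing in $S$; all other levels carry value $0$, and the optimal shift may be nonzero. We must compare $\tfrac1n\sum_i(\sum_{j\in T}Z^{(i)}_j+t)^2$ with its expectation uniformly over this low-dimensional family. I would write the form in terms of the centred indicator vector $\B u^{(i)}$ of dimension at most $s+1$, whitened by the population covariance computed above (eigenvalues $\gtrsim 1/L$ on the relevant subspace). I would then apply a matrix Chernoff lower-tail bound for sums of i.i.d.\ PSD rank-one matrices, followed by a union bound over at most $p^{s}$ choices of $S$. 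This gives failure probability $p^{s}\exp(-cn/R)$, where $R$ bounds the whitened row norms.

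\emph{Main obstacle.} The crude bound $R\asymp Ls$ would require $n\gtrsim L\log p$, whereas the proposition asks only for $n\ge c\log p$ with $L$ arbitrary. To remove the factor $L$, I would avoid whitening and instead exploit the structure directly. A level $k$ of predictor $j$ contributes $n_{j,k}(v_{j,k}-c_j)^2$ plus cross terms. The cross terms between different predictors are products of independent centred indicators, each with mean zero and variance of order $n/L^2$.

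The main contributions are the diagonal counts $n_{j,k}$, $k\in S$, which are binomial$(n,1/L)$. When $n/L\gtrsim\log p$ these concentrate at $n/L$ simultaneously over all $p$ levels. When $n/L$ is small, unobserved levels have $n_{j,k}=0$. For such a level, $\bff$ is unaffected by its coefficient, so the supremum allows that coordinate of $\B v$ to be set equal to the shift $c_j$, contributing zero to the right-hand side.

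Thus the plan is to bound the cross terms by a union bound over the at most $p^{2}$ pairs of levels, using a Bernstein bound that scales with $n/L^2$. This uses the fact that $s\le C$ is constant, so that $\|\B v\|_1\lesssim\|\B v\|_2$. The delicate part I expect is controlling these cross terms relative to $\min_k n_{j,k}$ over the observed levels when $n/L$ is of order one.
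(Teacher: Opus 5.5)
Your shift-only reduction is a valid sufficient condition, and your population identity $\mathbb{E}(\B x^\top\B v+t)^2=\sum_j\mathrm{Var}(Z_j)+(t+\sum_j\mathbb{E}Z_j)^2$, with $\mathrm{Var}(Z_j)=L^{-1}\min_c\|\B v_{\cI_j}-c\B 1\|_2^2$, is correct. It is also cleaner than the paper's route. The paper rescales the dummies by $\sqrt L$, bounds the off-diagonal mass of $\mathbb{E}[\tilde{\B x}\tilde{\B x}^\top]$ by $2s^*/\sqrt L$, and so needs the split $L\ge(4s^*)^2$ versus $L<(4s^*)^2$; the second case is handled by removing baselines and centring. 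Your matrix-Chernoff step with a union bound over supports gives a workable proof, but only when $n\gtrsim L\log p$, i.e.\ $n^*\gtrsim\log p$.

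The step meant to reach $n\ge c\log p$ with $L$ arbitrary is left open, and it cannot be closed: for $1\ll n/L\lesssim\log p$, condition~\eqref{cond.BA} itself fails. Take $q=1$ and $n/L=\tfrac12\log L$. The expected number of levels observed exactly once is about $n e^{-n/L}=\tfrac12\sqrt L\log L\to\infty$. So with high probability some level $a\notin S(\bbeta^*)$ has $n_{1,a}=1$. For $b\in S(\bbeta^*)$ put $\bbeta=\bbeta^*-\beta^*_b\B e_b+\mu\B e_a$. Then $\bff=\alpha^*\B 1+\bX\bbeta\in\mathcal{F}(s^*)$ and $\|\bff^*-\bff\|_2^2\le\mu^2+(\beta^*_b)^2n_{1,b}$. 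For $q=1$, representations differ only by a common shift and by arbitrary values on unobserved levels. Hence every admissible pair has $\|\tilde{\bbeta}^*-\bbeta\|_2^2\ge(1-o(1))\mu^2$. Letting $\mu\to\infty$, the infimum in~\eqref{cond.BA} is at most $1+o(1)$, while $\sqrt{n^*}=\sqrt{\tfrac12\log L}\to\infty$. So no Bernstein bound on cross terms can settle the ``delicate part'' you flag: levels with $O(1)$ counts break the claim itself.

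Even where the statement holds, your reduction is too weak when $L\gtrsim n$ and $q\ge2$. With high probability some observation $i$ carries a level $k$ of $C_1$ and a level $k'$ of $C_2$ that appear nowhere else. Then $\bX_{(1,k)}=\bX_{(2,k')}=\B e_i$, and $\B v=\B e_{(1,k)}-(1-\epsilon)\B e_{(2,k')}$ violates your target as $\epsilon\to0$, even after zeroing unobserved levels. The remedy is to use the whole kernel $\mathcal{N}=\{\B u:\bX\B u\in\mathrm{span}(\B 1)\}$, since the supremum in~\eqref{cond.BA} equals $\|\bff^*-\bff\|_2/\mathrm{dist}(\bbeta^*-\bbeta,\mathcal{N})$. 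The rows of $[\B 1,\bX_S]$ with $|S|\le(1+\kappa)C$ lie in a fixed finite set of $0/1$ patterns. So its Gram matrix dominates, with the same range, a pattern matrix whose smallest nonzero eigenvalue is bounded below by a constant depending only on $|S|$. This gives a ratio $\gtrsim1\gtrsim\sqrt{n^*}$ deterministically when $n\lesssim L$.

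The paper's Case~1 does not handle either regime. It asserts $\|\tilde{\bX}\bgamma\|_2^2\gtrsim n\|\bgamma\|_2^2$ for all sparse $\bgamma$ once $n\gtrsim\log p$. But for $\bgamma=\B e_{(j,k)}$ the left side is $Ln_{j,k}$. This vanishes for an unobserved level, which is unavoidable once $L>n$, and is $\ll n$ whenever $n_{j,k}\ll n/L$. The hypotheses of the cited random-matrix result are not uniform in $L$ for $\sqrt L$-scaled indicators. A completed version of your approach yields~\eqref{cond.BA} for $n^*\gtrsim\log p$ and for $n^*\lesssim1$, not the proposition as stated.
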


\subsubsection{Comparison with Existing Results}
\label{sec.clust.th.comp}
We now discuss the most relevant existing cluster recovery results for the linear regression problem with categorical predictors that have many levels. The SCOPE estimator of~\citet{scope} bears similarities to our approach with some notable differences: it uses a minimax concave penalty on the differences of the ordered regression coefficients (rather than our $\ell_0$-type penalty) to encourage clustering; it also does not include a penalty that encourages sparsity in the coefficients. 

In the univariate setting with just one categorical predictor, \citet{scope} show that a minimum separation condition of the form $\Delta_{\min}\gtrsim \sigma\sqrt{p\log(p)/n}$ leads to correct cluster recovery, which is in agreement with our result in Corollary~\ref{univariate.corollary}. As \citet{scope} discuss, this level of the required minimum separation is minimax optimal in the univariate setting \citep[see also][]{lu2016statistical}. Due to our use of the sparsity inducing $\ell_0$-penalty, we allow the sizes of the true zero clusters to grow arbitrarily large without adversely affecting the minimum separation level required (see our Corollary~\ref{univariate.corollary}). This is not the case for the estimator in \citet{scope}, which requires that the sizes of \emph{all} the true clusters are of the same order (see their Theorem~5; also see our Corollary~\ref{cor.univ.dense} which gives the corresponding result for our estimator without the~$\ell_0$ penalty). We also note that in the case $p\leq n$,  results similar to those in~\citet{scope} are  established in~\citet{she2022supervised} for the linear regression problem with continuous predictors.

In the multivariate case, \citet{scope} provide an optimisation algorithm for their estimator based on block coordinate descent. They show that with high probability and under suitable separation conditions, there exists a stationary point of this algorithm that has the correct clustering pattern -- the algorithm does not move if it reaches this \emph{oracle} stationary point. However, due to the nonconvexity of the optimisation problem, there might be many stationary points and it is unclear if their algorithm would reach the oracle one. In contrast, our results in Theorem~\ref{clust.recov.thm} are presented for the global solution to Problem~\eqref{clusteringproblem-reg}. As we show in Section~\ref{mip-sec}, Problem~\eqref{clusteringproblem-reg} can be written as a MIP. Moreover, our numerical experiments in Section~\ref{sec:time-bench} demonstrate that Problem~\eqref{clusteringproblem-reg} can be solved to certifiable (near) optimality for moderately-sized instances in minutes (where $p$ is in the thousands). This property highlights a strength of our MIP-based approach, which allows us to present theoretical guarantees for global solutions that can be obtained in practice.

\section{Algorithms}\label{sec:algs}

Here we study exact and approximate algorithms for our estimator, \ourmethodlz. First, we derive a novel MIP formulation for Problem~\eqref{clusteringproblem-reg}, demonstrating that we can obtain 
globally optimal solutions to our problem.
We also propose specialised modifications that improve the runtime of the MIP solver over off-the-shelf solvers.  In addition to exact methods, we present a new approximate algorithm for our estimator to deliver high-quality solutions quickly. This algorithm is of interest to obtain high-quality warm-starts for exact MIP solvers and to tune hyperparameters.
\subsection{MIP Formulation and Solver}\label{mip-sec}
\subsubsection{A MIP Formulation}
 We consider binary indicator variables $z_i=\1(\beta_i\neq 0)$ for $i \in [p]$ that encode whether the $\B\beta$ regression coefficients are zero or not. Note that $\|\B\beta\|_0=\sum_{i=1}^p z_i$. 
 We now discuss how we can use binary variables to encode the clustering pattern of the regression coefficients. To this end, let us study a special case with one categorical variable ($q=1,N=0$). For $i\neq k\in[p]$, we define indicator variables $z_{i,k}=\1(\beta_i\neq \beta_k)$ that describe the clustering pattern of the categorical predictor.
  If $z_{i,k}=1$ for all $k=1,\cdots,i-1$, then we have $\beta_i\notin \{\beta_1,\cdots,\beta_{i-1}\}$. In other words,
  $\1(\beta_i\notin \{\beta_1,\cdots,\beta_{i-1}\})=\1\left(\sum_{k=1}^{i-1} z_{i,k}=i-1\right)$.
On the other hand, we also have $    \left\vert\{\beta_i:i\in[p]\}\right\vert = 1+\sum_{i=2}^p\1(\beta_i\notin \{\beta_1,\cdots,\beta_{i-1}\})$.
 Putting these together, we derive 
 $ \left\vert\{\beta_i:i\in[p]\}\right\vert =1+\sum_{i=2}^p\1\left(\sum_{k=1}^{i-1} z_{i,k}=i-1\right)$,  showing that the number of distinct levels in $\B\beta$ can be represented using the binary variables~$\{z_{i,k}\}_{i,k\in[p]}$.

Next, we turn our attention to the general case. Using the notation from Section~\ref{prob-form-sec}, suppose that we have $\cI_j=\{s_j,\cdots,s_j+p_j-1\}$ for some $s_j\geq 1$ and all $j\in[q]$. Building on the discussion above, we consider the following MIP:

\begin{align}
   \obj:=  \min_{\B\beta,\alpha,\B{z},\B{l}} \quad &   \tfrac{1}{n}\|\B y-\B{X\beta}-\alpha\B{1}\|_2^2 + \lambda_0 \sum_{i=1}^p z_i + \lambda \sum_{i=1}^p l_i \label{clusteringproblem-mip} \\
     \text{s.t.} \quad & z_i\in\{0,1\}, l_i\in\{0,1\} ~\forall i\in[p],\nonumber \\
     & z^j_{i,k}\in\{0,1\}~\forall j\in[q], s_j\leq i\leq s_j+p_j-1, s_j\leq k\leq  i-1,  \nonumber \\
     & |\beta_i|\leq Mz_i, ~\forall i\in[p],\nonumber\\
 & \left\vert\beta_k-\beta_i\right\vert \leq 2M z^j_{i,k},~\forall s_j\leq k\leq i-1,\, s_j\leq i\leq s_j+p_j-1,j\in[q] ,\nonumber\\
     & \sum\nolimits_{k=s_j}\nolimits^{i-1} z^j_{i,k} - (i-s_j-1)\leq l_i, ~\forall s_j\leq i\leq s_j+p_j-1,j\in[q]. \nonumber
\end{align}
In Problem~\eqref{clusteringproblem-mip}, the binary variables $\{z_i\}_i$ encode the sparsity pattern of a solution $\B\beta$ through the constraints $|\beta_i|\leq M z_i$ for all $i$. Here $M>0$ is a pre-specified large positive constant often known as the Big-M parameter–see~\citet{l0-modern} for further details for the best subset selection problem. Similarly, for a fixed $j$, the binary variables
$z^j_{i,k}$ (for all $i,k$) describes the clustering pattern for the $j$-th categorical variable. Specifically, the variables $\{z_{i,k}^j\}_{i,j,k}$ 
specify which regression coefficients are fused together through the constraints $|\beta_k-\beta_i|\leq 2Mz_{i,k}^j$ for all $(i,j,k)$-tuples. The decision variable $l_i$ appearing in the last constraint in~\eqref{clusteringproblem-mip} indicates whether the regression coefficient of the $i$-th level is distinct from the coefficients of the prior levels of the predictor $j$. Hence, $\sum_{i=s_j}^{s_j+p_j-1}
l_i$ represents the number of clusters for the $j$-th categorical predictor,
 which is subsequently penalised in the objective. The following result addresses equivalence of Problems~\eqref{clusteringproblem-reg} and~\eqref{clusteringproblem-mip}.
\begin{proposition}\label{mip-prop}
    For a sufficiently large~$M$, Problems~\eqref{clusteringproblem-reg} and~\eqref{clusteringproblem-mip} are equivalent.
\end{proposition}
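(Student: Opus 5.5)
The plan is to show that, for $M$ at least $\max_i|\beta^\circ_i|$ over some optimal solution $\beta^\circ$ of Problem~\eqref{clusteringproblem-reg}, the optimal values coincide and optimal $\B\beta$'s correspond. We do this in two directions. First we check that an optimum of~\eqref{clusteringproblem-reg} exists and is bounded. Then we show that every $\B\beta$ with $\|\B\beta\|_\infty\le M$ yields a feasible MIP point with objective equal to its objective in~\eqref{clusteringproblem-reg}. Finally, we show that every feasible MIP point has objective at least that of its $(\alpha,\B\beta)$ in~\eqref{clusteringproblem-reg}.

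\textbf{Existence and boundedness.} The penalty takes finitely many values, one for each choice of support and clustering pattern. For each fixed pattern, the problem is a least squares fit over a linear subspace, with $\beta_i=0$ on the zero set and equal coefficients within clusters. Its minimum is attained, and the minimum-norm minimizer is unique. Taking the minimum over finitely many patterns gives an optimal $(\alpha^\circ,\beta^\circ)$ whose norm is bounded by a data-dependent constant. We choose $M$ at least this constant.

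\textbf{Direction one.} Fix any $(\alpha,\B\beta)$ with $\|\B\beta\|_\infty\le M$ and set
\[
z_i=\1(\beta_i\ne0),\qquad z^j_{i,k}=\1(\beta_i\ne\beta_k),\qquad l_i=\1\big(\beta_i\notin\{\beta_{s_j},\dots,\beta_{i-1}\}\big).
\]
The constraint $|\beta_i|\le Mz_i$ holds. The constraint $|\beta_k-\beta_i|\le 2Mz^j_{i,k}$ holds since $|\beta_k-\beta_i|\le 2M$. For the last constraint, the sum $\sum_k z^j_{i,k}$ runs over $i-s_j$ terms, so its left-hand side is at most $1$, and it equals $1$ exactly when all $z^j_{i,k}=1$, i.e.\ when $l_i=1$. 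Otherwise the left-hand side is at most $0\le l_i$. For $i=s_j$ the sum is empty and the constraint reads $1\le l_{s_j}$, so $l_{s_j}=1$. Using the counting identity from the text,
\[
\sum_{i\in\cI_j}l_i=|\{\beta_k:k\in\cI_j\}|,
\]
so the MIP objective equals objective~\eqref{clusteringproblem-reg}.

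\textbf{Direction two.} Take any feasible MIP point. The constraint $|\beta_i|\le Mz_i$ forces $z_i=1$ whenever $\beta_i\ne0$, so $\sum_i z_i\ge\|\B\beta\|_0$. The fusion constraint forces $z^j_{i,k}=1$ whenever $\beta_i\ne\beta_k$. Hence if $\beta_i$ is a new value among $\beta_{s_j},\dots,\beta_i$, all $z^j_{i,k}=1$, and the last constraint forces $l_i\ge1$. Therefore $\sum_{i\in\cI_j} l_i$ is at least the number of first occurrences, which equals $|\{\beta_k:k\in\cI_j\}|$. So the MIP objective is at least that of~\eqref{clusteringproblem-reg} at the same $(\alpha,\B\beta)$. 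The MIP optimum is then at least the optimum of~\eqref{clusteringproblem-reg}, and by direction one it is at most that optimum. Thus the optima coincide, and the $(\alpha,\B\beta)$-projection of any MIP optimum is optimal for~\eqref{clusteringproblem-reg}.

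\textbf{Main obstacle.} The delicate point is choosing $M$ and making ``sufficiently large'' precise. When $\lambda_0=0$ the solution set of~\eqref{clusteringproblem-reg} is unbounded because of the intercept shift. We only need \emph{one} bounded optimal solution to certify equality of the optimal values, which is why we take minimum-norm solutions pattern by pattern. With this choice, equivalence means equal optimal values and optimal solutions of the MIP being optimal for~\eqref{clusteringproblem-reg}; the MIP cannot recover every optimum of~\eqref{clusteringproblem-reg}, only the bounded ones. The index bookkeeping, such as the $i=s_j$ edge case, is routine.
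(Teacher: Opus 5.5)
Your proposal is correct and follows essentially the same two-part argument as the paper. Every MIP-feasible point has penalty at least that of its $(\alpha,\B\beta)$ in~\eqref{clusteringproblem-reg} via the first-occurrence counting identity (Lemma~\ref{mip-lemma}), and an optimal $\hat{\B\beta}$ with $M\ge\max_i|\hat\beta_i|$ lifts to a MIP-feasible point with the same objective. Beyond the paper, you add an existence/boundedness step that the paper simply assumes; to make it airtight, note that the least-squares minimiser over the closed subspace attached to a pattern may have a coarser pattern, whose penalty is no larger. You should also state explicitly, as the paper does, that $l_i=0$ for continuous-feature indices.
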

Problem~\eqref{clusteringproblem-mip} is a convex MIP with a convex quadratic objective and mixed integer linear constraints. Therefore, off-the-shelf MIP solvers such as Gurobi can be used to obtain optimal solutions to this problem for small/moderate-scale instances where~$p$ is around one hundred or smaller. 

\subsubsection{A Specialised Row Generation Algorithm}\label{subsec:rowgen}
Problem~\eqref{clusteringproblem-mip} is 
a MIP that can be solved to optimality by off-the-shelf solvers, such as Gurobi, for small/moderate problem sizes. However, based on our experiments, such solvers can face difficulties for problems with $p$ in the hundreds or larger. One key challenge of formulation~\eqref{clusteringproblem-mip} that makes it different from usual $\ell_0$ sparse regression~\citep{l0-modern} is that it can involve up to $\mathcal{O}(p^2)$-many constraints and binary variables, posing computational difficulties. To this end, we design a row generation algorithm that can offer improvements over off-the-shelf solvers.

The main idea behind our row generation algorithm is the following observation: Suppose that $i,k\in\cI_j$ for some $j\in[q]$. Then, if $z_i=z_k=0$, the constraint $|\beta_i-\beta_k|\leq 2Mz_{i,k}^j$ holds regardless of the value of $z_{i,k}^j$, as $\beta_k=\beta_i=0$. Therefore, all such constraints can be effectively `ignored' for this specific value of $\B z$. Intuitively speaking, removing these constraints can result in a formulation with fewer constraints which can be faster to solve, but will still provide suitable optimality guarantees (see Proposition~\ref{mip-lower-prop}).  More specifically, suppose that $\B\beta^{(1)},\cdots,\B\beta^{(T)}$ are $T$ feasible solutions to Problem~\eqref{clusteringproblem-mip}  [we discuss selecting $\{\B\beta^{(t)}\}_t$ below]. For $t\in[T]$, we let $S_t=\{j\in[q]: \beta^{(t)}_i\neq 0~\text{ for some } i\in\cI_j\}$ 
be the set of categorical predictors that have nonzero coefficients in $\B\beta^{(t)}$ and consider a simplified version of Problem~\eqref{clusteringproblem-mip}:
\begin{align}\label{l0clustering-mip-relaxed}
  \widehat\obj_{T+1}=   \min_{(\B\beta,\alpha,\B{z},\B{l})\in\widehat{C}_T} \quad &   \frac{1}{n}\|\B y-\B{X\beta}-\alpha\B{1}\|_2^2 + \lambda_0 \sum_{i=1}^p z_i + \lambda \sum_{i=1}^p l_i, 
\end{align}
where~$\widehat{C}_T$ be the constraint set for $(\B\beta,\alpha,\B{z},\B{l})$ in~\eqref{clusteringproblem-mip} with one key modification: in the inequalities $|\beta_k-\beta_i|\leq 2M z^j_{i,k}$ we restrict~$j$ to a smaller set~$\cup_{t\in[T]} S_t$.
We now show that Problem~\eqref{l0clustering-mip-relaxed} can deliver optimality guarantees for Problem~\eqref{clusteringproblem-mip}.
\begin{proposition}\label{mip-lower-prop}
    Suppose that $\hat{\B\beta}$ is an optimal solution to Problem~\eqref{l0clustering-mip-relaxed}. Then, \\
    \noindent\textbf{(1)} The optimal objectives of~\eqref{clusteringproblem-mip} and~\eqref{l0clustering-mip-relaxed} satisfy $\widehat\obj_{T+1}\leq \obj$.\\
    \noindent\textbf{(2)} If $S(\hat{\B \beta})=S( \B\beta^{(t)})$ for some $t\in[T]$, then  $\hat{\B \beta}$ is also optimal for Problem~\eqref{clusteringproblem-mip}.
\end{proposition}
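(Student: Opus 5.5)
Part (1) is a relaxation argument. The feasible set $\widehat{C}_T$ of Problem~\eqref{l0clustering-mip-relaxed} is obtained from the constraint set of Problem~\eqref{clusteringproblem-mip} by deleting the constraints $|\beta_k-\beta_i|\le 2Mz^j_{i,k}$ for $j\notin\cup_{t\in[T]}S_t$. All other constraints and the objective are unchanged. Hence the feasible set of~\eqref{clusteringproblem-mip} is contained in $\widehat{C}_T$, so the minimum over the larger set satisfies $\widehat\obj_{T+1}\le\obj$. One detail needs care. The variables $z^j_{i,k}$ for $j\notin\cup_t S_t$ still appear in the last (cluster-counting) constraint, but they are now unconstrained binaries, so the argument is unaffected: any feasible point of~\eqref{clusteringproblem-mip} is still feasible for the relaxed problem.

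For part (2), let $(\hat{\B\beta},\hat\alpha,\hat{\B z},\hat{\B l})$ be optimal for~\eqref{l0clustering-mip-relaxed} with $S(\hat{\B\beta})=S(\B\beta^{(t)})$. I will show that it can be modified into a point feasible for~\eqref{clusteringproblem-mip} without increasing the objective; part (1) then forces optimality.

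\textbf{Step 1: the coefficients.} Take $j\notin\cup_{t'}S_{t'}$. In particular $j\notin S_t$, so $\beta^{(t)}_i=0$ for all $i\in\cI_j$. Since $S(\hat{\B\beta})=S(\B\beta^{(t)})$, this gives $\hat\beta_i=0$ for all $i\in\cI_j$. The dropped constraints only restrict $j\notin\cup_{t'}S_{t'}$, and for those $j$ the differences $|\hat\beta_k-\hat\beta_i|$ are all zero. So every dropped constraint holds for any nonnegative choice of $z^j_{i,k}$.

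\textbf{Step 2: the counting variables.} For these $j$, I reset $z^j_{i,k}=0$ for all relevant $(i,k)$ and set $l_i=0$ for $i\in\cI_j$ with $i>s_j$. For $i=s_j$ the left side of the last constraint is $0-(-1)=1$, so $l_{s_j}=1$ is forced in both problems. For $i>s_j$ the left side becomes $-(i-s_j-1)\le 0$, so $l_i=0$ is feasible. The resulting point is feasible for~\eqref{clusteringproblem-mip}, and its objective value is at most $\widehat\obj_{T+1}$, because only some $l_i$ can decrease. Note that optimality of $\hat{\B\beta}$ in the relaxed problem already forces the minimal $l$ values, so nothing actually decreases. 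Combining with part (1):
\[
\obj\le \text{objective of the modified point}\le \widehat\obj_{T+1}\le \obj .
\]
All inequalities are therefore equalities, and $\hat{\B\beta}$ (with $\hat\alpha$ and the modified $\B z,\B l$) is optimal for~\eqref{clusteringproblem-mip}.

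\textbf{Main obstacle.} The delicate point is the bookkeeping in Step 2. The relaxed problem does not bound $z^j_{i,k}$ from below for the dropped predictors, so in principle it could set them to $1$ and make $l_i$ appear free. I will handle this by the explicit resetting above. This works because setting all $z^j_{i,k}=0$ is the choice most favourable to making $l_i=0$, and it is permitted because the dropped predictors have all-zero coefficients. The result is that the feasible values of $l$ for these predictors are identical in the two problems.
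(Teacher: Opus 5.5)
Your proof is correct and follows the paper's argument. Part (1) is the same relaxation observation, and your Step 1 is exactly the paper's key point: the dropped predictors have all-zero coefficients in $\hat{\B\beta}$, so every removed constraint $|\hat\beta_k-\hat\beta_i|\le 2M\hat z^j_{i,k}$ holds with left side zero. The paper stops there. All retained constraints, including the counting ones, are already satisfied, so the relaxed optimum itself is feasible for the full MIP and $\widehat\obj_{T+1}\le\obj\le\widehat\obj_{T+1}$. Your Step 2 resetting of $z^j_{i,k}$ and $l_i$ is therefore harmless but unnecessary, and the ``main obstacle'' you describe never actually arises.
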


Motivated by Proposition~\ref{mip-lower-prop}, we consider the following procedure. After obtaining a warm-start such as $\B\beta^{(1)}$, we solve~\eqref{l0clustering-mip-relaxed} to obtain ${\B\beta}^{(2)}$. If we cannot certify the optimality of this solution for~\eqref{clusteringproblem-mip} using Proposition~\ref{mip-lower-prop}, we add new constraints to~\eqref{l0clustering-mip-relaxed} based on the support of ${\B\beta}^{(2)}$ and solve the resulting problem. We repeat this process until an optimal solution to the original MIP~\eqref{clusteringproblem-mip} is found. This procedure is summarised in Algorithm~\ref{alg:rowgen}.
Problem~\eqref{l0clustering-mip-relaxed} has significantly 
fewer constraints compared to Problem~\eqref{clusteringproblem-mip}, especially if $\B\beta^{(t)}$s are sparse. 
Hence Problem~\eqref{l0clustering-mip-relaxed} is more desirable from a computational perspective. 
As we discuss in Section~\ref{sec:time-bench}, in our numerical experiments we can solve the MIP~\eqref{clusteringproblem-mip} to (near) global optimality using our row generation procedure for $q\approx250, p\approx 4500$ within 15 minutes using a laptop. We refer to Table~\ref{table:runtimebenchmark-exact} in the appendix for more details.

\begin{algorithm}
\caption{Row generation for \ourmethodlz~in Problem~\eqref{clusteringproblem-mip}}\label{alg:rowgen}
\begin{algorithmic}
    \Require Data matrix $\B X$, response vector $\B{y}$, $\lambda,\lambda_0\geq 0$, initial $\B\beta^{(1)} \in \R^p$
\State $T\gets 1$
\While{not converged}

\State Obtain $\B\beta^{(T+1)}$ by solving Problem~\eqref{l0clustering-mip-relaxed}

  \State $S_{T+1}\gets S(\B\beta^{(T+1)})$
  \If{ $S_{T+1}= S_t$ for some $t\in[T]$}\Comment{Optimal solution to~\eqref{clusteringproblem-mip} found}
  \State Break
  \EndIf
  \State $T \gets T+1$
  \EndWhile
\end{algorithmic}

\end{algorithm}

\subsection{Approximate Solutions for Problem~\eqref{clusteringproblem-reg}}\label{subsec:bcd}
From a practical perspective, it is often desirable to obtain good feasible solutions to our estimator quickly. They are useful for tuning hyperparameters, warm-starting a MIP solver, or when we prefer to obtain good solutions quickly over optimality certificates.
Here we present an approximate algorithm for Problem~\eqref{clusteringproblem-reg}.

Motivated by the recent work on $\ell_0$-sparse linear regression~\citep{hazimeh2020fast,hazimeh2023grouped}, we explore a Block Coordinate Descent (BCD) algorithm~\citep{tseng2001convergence}. SCOPE~\citep{scope} also use a BCD procedure. Due to the challenging presence of both the clustering and the sparsity penalties in the objective function, our BCD algorithm goes well beyond being a simple application, or extension, of the earlier work. We use problem structure to improve the efficiency of our algorithm. For example, as illustrated by the numerical experiments, 
 our approximate solver can be up to 500 times faster than the approximate solver of SCOPE in certain settings.\\
 Given $j_0\in [q]$, our BCD algorithm partially minimises the objective of Problem~\eqref{clusteringproblem-reg} with respect to the block of variables $\{\beta_i:i\in\cI_{j_0}\}$ while keeping other coefficients fixed. We cycle through these blocks of variables and update them. Formally, for a block of variables $\{\beta_i:i\in\cI_{j_0}\}$, the corresponding optimisation problem is given by
\begin{equation}\label{blockcd-1}
    \min_{\beta_i:i\in\cI_{j_0}} \tfrac{1}{n} \|\B{y}-\B{X\beta}-\alpha\B{1}\|_2^2 + \lambda_0 |\{i:\beta_i\neq 0, i\in\cI_{j_0}\}|+ \lambda |\{\beta_i: i\in\cI_{j_0}\}|.
\end{equation}
A key workhorse of our BCD algorithm is an efficient solver for Problem~\eqref{blockcd-1}. To this end, we discuss how Problem~\eqref{blockcd-1} can be reduced to a univariate problem with a single categorical predictor.

\subsubsection{Solving Problem~\eqref{blockcd-1}}
\noindent\textbf{Reduction to a univariate problem.}
Suppose $C_{j_0}$ for some $j_0\in[q]$ is a categorical predictor.  We can write the quadratic loss $ \|\B{y}-\B{X\beta}-\alpha\B{1}\|_2^2$ as
\begin{equation*}
    \Big\Vert \B{y}-\sum_{\substack{j\in[q] \\ j\neq j_0}}\sum_{i\in\cI_j} \B{X}_{i}\beta_i- \sum_{i=p-N+1}^p \B{X}_{i}\beta_i -\sum_{i\in\cI_{j_0}}\B{X}_{i}\beta_i-\alpha\B{1}\Big\Vert_2^2  = \sum_{i\in\cI_{j_0}}\sum_{k\in\cJ^{j_0}_i} \left(\tilde{y}_k-\beta_{i}\right)^2,
\end{equation*}
with $\cJ_i^{j}$ defined in~\eqref{J_k_j} and $\tilde{\B{y}}= \B{y}-\sum_{j\in[q], j\neq j_0}\sum_{i\in\cI_j} \B{X}_{i}\beta_i- \sum_{i=p-N+1}^p \B{X}_{i}\beta_i-\alpha\B{1}.$
As a result, Problem~\eqref{blockcd-1} can be equivalently written as 
\begin{align}\label{blockcd-2}
 \min_{\B{\beta} \in \R^{p_{j_0}} } \sum_{i=1}^{p_{j_0}}\sum_{k\in\cJ^{j_0}_i} \left(\tilde{y}_k-\beta_{i}\right)^2+ n\lambda_0 \|\B{\beta}\|_0 + n\lambda |\{\beta_1,\cdots,\beta_{p_{j_0}}\}|.
\end{align}
Furthermore, we note that Problem~\eqref{blockcd-2} is equivalent to the univariate problem
\begin{align}\label{univariate-1}
 \min_{\B{\beta} \in \R^{p_{j_0}} } \sum_{i=1}^{p_{j_0}} |\cJ^{j_0}_i| \bigg(\beta_i - \frac{1}{|\cJ^{j_0}_i|}\sum_{k\in\cJ^{j_0}_i}\tilde{y}_k\bigg)^2 + n\lambda_0 \|\B{\beta}\|_0 + n\lambda |\{\beta_1,\cdots,\beta_{p_{j_0}}\}|.
\end{align}

\noindent\textbf{Solving the univariate problem.}
The main building block for our BCD-based solver is Problem~\eqref{univariate-1}. Therefore, we consider a general form of this problem as
\begin{equation}\label{one-dim-general}
    \min_{\B\beta\in\R^p} \frac{1}{2} \sum_{i=1}^{p}n_i(\beta_i-\bar{y}_i)^2 + \tilde \lambda_0 \|\beta\|_0 + \tilde \lambda |\{\beta_1,\cdots,\beta_p\}|,
\end{equation}
for some $\bar{\B y}\in\R^p$, $\tilde{\lambda}_0,\tilde{\lambda}\geq 0$, and $n_i\geq 1$ for $i\in[p]$.
Without loss of generality, we assume $\bar{y}_1\geq\cdots\geq \bar{y}_p$, which implies  $\beta_1\geq\cdots\geq \beta_p$ at optimality. Hence, we rewrite~\eqref{one-dim-general} as the following discrete optimization problem:
\begin{align}\label{one-dim-jumps}
  \min_{\B\beta} \frac{1}{2} \sum_{i=1}^{p}n_i(\beta_i-\bar{y}_i)^2 + \tilde \lambda_0 \|\B\beta\|_0 + \tilde \lambda \sum_{i=1}^{p-1} \1(\beta_i \neq \beta_{i+1}). 
\end{align}
We design an algorithm that obtains globally optimal solutions to this problem. We denote the optimal solution to~\eqref{one-dim-jumps} by DpSegPen-$L_0(\B{y},\{n_i\}_i,\tilde{\lambda}_0,\tilde{\lambda})$.
Our algorithm for solving~\eqref{one-dim-jumps}, based on  Dynamic Programming (DP), 
extends the DP framework of~\citet{johnson2013dynamic} which was developed in the context of piecewise constant signal approximation [aka signal segmentation]. \citet{johnson2013dynamic} considers a special case of our problem with $\tilde \lambda_0 = 0$ and $n_i = 1$  for all $i \in [p]$. We provide the details of our DP framework in the Appendix~\ref{app:dp}. However, we note that due to the presence of the term $\tilde\lambda_0 \|\B\beta\|_0$ in Problem~\eqref{one-dim-jumps}, certain convexity structures used by~\citet{johnson2013dynamic} to speed up their method do not apply to our case. Thus, we need to develop a new approach that can efficiently deal with such non-convex structures.

Algorithm~\ref{alg:bcd} below summarises our BCD algorithm for Problem~\eqref{clusteringproblem-reg}.
\begin{algorithm}
\caption{\ourmethodlz~ block CD for Problem~\eqref{clusteringproblem-reg}}\label{alg:bcd}
\begin{algorithmic}
    \Require Data $\B X=[\B X_{\text{cat}},\B X_{\text{cont}}]\in\R^{n\times p}$ with continuous features $\B X_{\text{cont}}$ and $\B{y}\in\R^n$, $\lambda_0, \lambda \geq 0$, initial $\B\beta^0 \in \R^p$ 
\State $\B\beta \gets \B\beta^0$\;
\While{not converged}
  \For{$j = 1$ to $q$}
      \State  $\tilde{\B{y}} \gets \B{y}-\B{X\beta} + \sum_{i\in\cI_j} \B{X}_{i}\beta_i-\alpha\B{1}$\; 

           \State $\bar{y}_i\gets \frac{1}{|\cJ^{j}_i|}\sum_{k\in\cJ^{j}_i}\tilde{y}_k~~\forall~i\in\cI_{j}$\Comment{$\bar{\B{y}}\in\R^{|\cI_{j}|}$}

      \State  $\sigma \gets \text{SortDescending}(\bar{\B{y}})$ \Comment{$\bar{{y}}_{\sigma(1)} \geq \bar{{y}}_{\sigma(2)} \geq \ldots \geq \bar{{y}}_{\sigma(|\cI_{j_0}|)}$}
       \State $\left(\B\beta_{\cI_{j}}\right)_\sigma \gets \text{DpSegPen-$L_0$}(\bar{\B{y}}_\sigma,\{|\cJ^{j}_{\sigma(i)}|\}_i,n\lambda_0/2,n\lambda/2)$\Comment{Solve~\eqref{one-dim-jumps} and update the subvector of $\B\beta$}
    
    \EndFor
   
      \For{$j = 1$ to $N$}\Comment{Update the continuous predictors via coordinate descent}
      \State $\beta_{p-N+j}\gets \argmin_{b}\|\B{y}-\alpha\B{1}-\sum_{i\neq p-N+j}\B{X}_i\beta_i-\B{X}_{p-N+j}b\|_2^2+n\lambda_0\1(b\neq 0)$
      \EndFor
     
\State $\alpha \gets \argmin_{a\in\R} \|\B y - \B{X\beta} - a \B{1}\|_2^2$
 \EndWhile
\end{algorithmic}

\end{algorithm}

\textbf{Active Sets:} To further speed up the convergence of our BCD algorithm and anticipating a sparse solution to Problem~\eqref{clusteringproblem-reg}, we make use of active sets. Starting with an initial active set $\cA\subseteq[p]$, we apply Algorithm~\ref{alg:bcd} with the additional constraint $\beta_i=0$, $\forall i\in\cA,$
which effectively reduces the problem size from $p$ to $|\cA|$. After Algorithm~\ref{alg:bcd} converges and returns a solution, we run an iteration of Algorithm~\ref{alg:bcd} on $[p]$, i.e.,  without an active set, and update~$\cA$ as follows: $\cA\gets \cA\cup\{i\in[p]:\beta_i\neq 0\}.$
We repeat this procedure until the active set stabilises.

\subsubsection{Extension to Binary Classification}\label{sec:logistic} 
We note that our BCD algorithm discussed above can also be applied to binary classification tasks with, for example, logistic loss (cf~\citet{10.5555/3546258.3546393} for $\ell_0$-penalised classification problems). 
 Suppose that we observe $y^{(i)}\in\{-1,1\},\B{x}^{(i)}\in\R^p$ for $i\in[n]$, where $\B{x}^{(i)}$ contains the values of the~$p$ dummy variables for the $i$-th individual. The logistic loss for the $i$-th datapoint is $L(y^{(i)},\B{x}^{(i)};\B\beta,\alpha) = \log(1+\exp(-y^{(i)}[\B\beta^\top\B{x}^{(i)}+\alpha]))$. 
Under this loss function, we consider the estimator
\begin{equation}\label{logstic-reg}
    \min_{\B\beta\in\R^p,\alpha\in\R} \frac{1}{n}\sum_{i=1}^n L(y^{(i)},\B x^{(i)};\B\beta,\alpha) + \lambda_0 \|\B{\beta}\|_0 + \lambda \sum_{j=1}^{q} \left\vert\left\{\beta_i:i\in\cI_j\right\}\right\vert,
\end{equation}
which is similar to~\eqref{clusteringproblem-reg}, but uses the logistic loss. The following result provides a quadratic upper bound to the logistic loss function around $\B\beta_0$.

\begin{lemma}\label{classification-lemma}
Suppose $\B\beta,\B\beta_0\in\R^p$ and $\alpha,\alpha_0\in\R$. Let $\tilde{\B{y}}=\alpha_0\B{1} + \B{X}\B{\beta}_0-4\B{g}$ with $$g_i =\frac{-y^{(i)}\exp(-y^{(i)} [\B\beta_0^\top\B{x}^{(i)}+\alpha_0])}{1+\exp(-y^{(i)} [\B\beta_0^{\top}\B{x}^{(i)}+\alpha_0])}~\forall i\in[n].$$
Then, $ \sum_{i=1}^n \{L(y^{(i)},\B{x}^{(i)};\B\beta,\alpha)-L(y^{(i)},\B{x}^{(i)};\B\beta_0,\alpha_0)\}  \leq \frac{1}{8}\|\tilde{\B{y}}-\B{X}\B\beta-\alpha\B{1}\|_2^2 - 2\|\B g\|_2^2$. 
\end{lemma}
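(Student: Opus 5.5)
The plan is to bound each summand separately and then add the bounds. For the $i$-th observation, let $u_i=\B\beta^\top\B{x}^{(i)}+\alpha$ and $u_i^0=\B\beta_0^\top\B{x}^{(i)}+\alpha_0$, and write $\ell_i(u)=\log(1+\exp(-y^{(i)}u))$ as a function of the scalar $u$. Then $L(y^{(i)},\B x^{(i)};\B\beta,\alpha)=\ell_i(u_i)$. The argument has three steps.

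\textbf{Derivatives of $\ell_i$.} Differentiating gives
\[
\ell_i'(u)=\frac{-y^{(i)}\exp(-y^{(i)}u)}{1+\exp(-y^{(i)}u)},
\]
so $\ell_i'(u_i^0)=g_i$. Since $(y^{(i)})^2=1$, the second derivative is $\ell_i''(u)=\sigma(y^{(i)}u)\big(1-\sigma(y^{(i)}u)\big)$, where $\sigma$ is the sigmoid. Therefore $\ell_i''(u)\le 1/4$ for all $u$.

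\textbf{Scalar descent lemma.} A second-order Taylor expansion with this uniform curvature bound gives
\[
\ell_i(u_i)-\ell_i(u_i^0)\le g_i(u_i-u_i^0)+\tfrac18(u_i-u_i^0)^2 .
\]

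\textbf{Completing the square.} The right-hand side equals
\[
\tfrac18\big(u_i-u_i^0+4g_i\big)^2-2g_i^2 .
\]
Here $u_i^0-4g_i=\tilde y_i$ by the definition of $\tilde{\B y}$, so the squared term is $\tfrac18(\tilde y_i-u_i)^2$. Summing over $i\in[n]$ and noting that $u_i$ is the $i$-th entry of $\B X\B\beta+\alpha\B 1$, we obtain
\[
\sum_{i=1}^n\{\ell_i(u_i)-\ell_i(u_i^0)\}\le\tfrac18\|\tilde{\B y}-\B X\B\beta-\alpha\B 1\|_2^2-2\|\B g\|_2^2,
\]
which is the claim.

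The only step that needs care is the curvature bound $1/4$. It follows from $t(1-t)\le 1/4$ for $t\in[0,1]$. The rest is bookkeeping: the constant $4=1/(2\cdot\tfrac18)$ in $\tilde{\B y}$ and the term $2\|\B g\|_2^2=\tfrac18\cdot 16\|\B g\|_2^2$ must come out consistently when completing the square.
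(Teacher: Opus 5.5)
Your proposal is correct and follows essentially the same route as the paper: a quadratic upper bound from the descent lemma with curvature at most $1/4$, then completing the square and summing over $i$. The only difference is cosmetic. You differentiate $\ell_i(u)=\log(1+\exp(-y^{(i)}u))$ directly, so $g_i$ appears as $\ell_i'(u_i^0)$ and the intercept is handled at once. The paper instead works with $h(x)=\log(1+e^{-x})$ at $x=y^{(i)}u$, uses $(y^{(i)})^2=1$, and leaves the case $\alpha,\alpha_0\neq 0$ as ``similar.''
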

It follows from Lemma \ref{classification-lemma}  that the objective in~(\ref{logstic-reg}) can be upper-bounded by $\tilde{P}(\B\beta_0,\alpha_0)+P(\B\beta,\B\beta_0,\alpha,\alpha_0)$, 
where $\tilde{P}(\B\beta_0,\alpha_0)=(1/n)\sum_{i=1}^n L(y^{(i)},\B{x}^{(i)};\B\beta_0,\alpha_0)$ and
$P(\B\beta,\B\beta_0,\alpha,\alpha_0) = \|\tilde{\B y}-\B{X\beta}-\alpha\B{1}\|_2^2/(8n)+\lambda_0 \|\B{\beta}\|_0 + \lambda \sum_{j=1}^{q} \left\vert\left\{\beta_i:i\in\cI_j\right\}\right\vert$. 
We extend our BCD algorithm, developed in Section~\ref{subsec:bcd}, to the binary classification setting by cyclically minimising our upper bound with respect to the block of coordinates $\{\beta_i:i\in\cI_{j}\}$ for $j\in \{1,\cdots,q\}$ and $\{\beta_i: p-N+1\leq i\leq p\}$, which are the coefficients corresponding to continuous features. Our approach for obtaining good feasible solutions to Problem~\eqref{logstic-reg} is summarised in Algorithm~\ref{alg:bcd_classification} of the Appendix.

\section{Numerical Experiments}\label{sec:numerical}
We investigate the performance of our estimator on several synthetic and real-world datasets. The code for implementing \ourmethodlz~can be accessed at \href{https://github.com/mazumder-lab/ClusterLearn}{https://github.com/mazumder-lab/ClusterLearn}.
With the exception of Section~\ref{sec:time-bench}, we use the approximate solver for our estimator in all of the experiments. In Appendix~\ref{app:numerical}, we present additional numerical experiments, as well as further details of our experimental setup and some resources.

\subsection{Synthetic data}
\subsubsection{Experimental Setup}
We start by generating a series of observations with categorical features $C_1,\ldots,C_q$. To this end, following~\citet{scope}, we draw $\zeta_1,\cdots,\zeta_q$ from a zero-mean $q$-dimensional normal distribution with pairwise correlation of $\rho\geq 0$. Throughout this section, we set $\rho=0.2$. Then, we pass $\zeta_1,\cdots,\zeta_q$ through the normal CDF to obtain $q$ uniformly distributed random variables $\tilde\zeta_1,\cdots,\tilde\zeta_q\sim\text{Unif}[0,1]$. The correlation among $\tilde\zeta_1,\cdots,\tilde\zeta_q$ is controlled by~$\rho$. We bin $\tilde\zeta_1,\cdots,\tilde\zeta_q$ into discrete values to obtain the categorical features $C_1,\cdots,C_q$.
We generate $3n$ different observations, which we split equally into a training set used to estimate the coefficients, a validation set used to find the best hyperparameters, and a test set used to evaluate the performance (the details of hyperparameter tuning can be found in Appendix~\ref{app:numerical}). We generate the response vector $\B{y}$ as $\B y = \B X\B\beta^* + \B\epsilon$ where $\B X$ is as defined in Section~\ref{prob-form-sec}. The noise vector $\B\epsilon\sim\cN(\B 0,\sigma^2 \B I)$ is generated independently of~$\B X$. We define $\text{SNR}=\|\B{X\beta}^*\|_2^2/\|\B\epsilon\|_2^2$.

\subsubsection{Alternative Methods and Metrics}
We compare the performance of our estimators with the following approaches:\\
    \noindent\textbf{(1)} Elastic Net~\citep{elasticnet}, which imposes the penalty $\lambda_1 \|\B\beta\|_1 + \lambda_2\|\B\beta\|_2^2$. Lasso is a special case of Elastic Net with $\lambda_2=0$. We use the R package \texttt{glmnet}.\\
    \noindent\textbf{(2)}  Iterative Hard Thresholding \citep[IHT,][]{iht0}, which imposes the penalty $\lambda_0\|\B\beta\|_0$. We use our implementation of IHT.\\
    \noindent\textbf{(3)}  The state-of-the-art approach SCOPE~\citep{scope}  which performs clustering of levels of categorical variables\footnote{They use a minimax concave penalty on the differences between the order statistics of the coefficients for a categorical variable.}.
    We use the R package \texttt{CatReg}.

The details of hyperparameter tuning can be found in Appendix~\ref{app:numerical}. To measure the prediction performance of each estimator, we use the coefficient of determination ($R^2$) computed on the test set. To quantify the cluster recovery performance of the estimators, we report 
their clustering purity~\citep{puritydef}, computed for the categorical predictors with nonzero true regression coefficients\footnote{For the predictors with zero true coefficients, the purity is always one by definition.}. Specifically, we let 
$A = \{j\in[q]\;\text{s.t.}\; \bbeta^*_{\cI_j}\neq \B 0\}\quad \text{and}\quad \nu = \sum_{j\in G}|\cI_j|$, where $A$ indexes the ``true'' predictors and~$\nu$ counts the total number of regression coefficients for the predictors in $A$, and define purity as:
$$\text{Purity}(\B\beta)=\frac{1}{\nu}\sum_{k\in A} \; \sum_{a\in\{\beta_j:\,j\in\cI_k\}} \max_{b\in\{\beta_j^*:\,j\in\cI_k\}}\big|\{l\in\cI_k: {\beta}_l=a,\,\beta^*_l=b\}\big|.$$
Purity is complementary to our impurity measure $\mathcal{E}(\mathcal{G})$, introduced in Section~\ref{sec.clust.th.prelim}.
Larger values of purity generally indicate better clustering performance. When the number of clusters is correctly estimated, the maximal purity of~$1$ corresponds to perfect cluster recovery. In addition to purity, we report the total number of the estimated regression coefficient clusters,
averaging the results over 50 replications.

\subsubsection{Statistical Comparisons}\label{results-set1}
We explore the statistical properties of various estimators. For given values of $q,q_s,r_1,r_2>0$, we let
\begin{equation}\label{eq:beta_star_setting_2}
    \B\beta^{*}_{\cI_j} = \left\{
    \begin{array}{ll}
        (\overbrace{-2,\ldots ,-2}^{r_1\text{ times}},\overbrace{0,\ldots ,0}^{ r_2 \text{ times}},\overbrace{2,\ldots ,2}^{r_1\text{ times}} ) & \mbox{if } j \in \{1,\cdots,q_s\} \vspace{0.2cm}\\
        (\overbrace{0,\ldots ,0}^{2r_1+r_2\text{ times}}) & \mbox{otherwise.}
    \end{array}
\right.
\end{equation}
First, we let $r_1=4,r_2=12, q=20, q_s=3$. This simulates a case where each categorical predictor has a large zero cluster. We also set $\sigma=2$ ($\text{SNR}=1.23$ on average). The results for this case are shown in Figure~\ref{fig:set1-n} for several values of $n$. We do not include Elastic Net in the two rightmost panels of Figure~\ref{fig:set1-n} as it produces significantly more clusters (around 133 on average) than the other methods.

We observe that \ourmethodlz~works better than SCOPE and \ourmethod~in terms of $R^2$ for a wide range of $100\leq n\leq 400$. For larger values of $n$, all methods seem to have a similar prediction performance. \ourmethodlz~also shows a good clustering performance with a high purity value and a small number of clusters.
Interestingly,  \ourmethodlz~does better than \ourmethod~(Problem~\eqref{clusteringproblem-reg} with $\lambda_0=0$), illustrating the benefits of $\ell_0$ penalisation.

Next, we set $r_1=r_2=10,q=20,q_s=5$ and $\sigma=2$ (average
$\text{SNR}=3.70$). This simulates a setting where nonzero predictors do not have a dominating zero cluster. Similarly, we show the results over a wide range of $n$ for this setup in Figure~\ref{fig:set2-n}. On average, Elastic Net produced 377 clusters, so we omitted it in the purity and the number of clusters plots. We see that \ourmethodlz~and \ourmethod~seem to outperform other methods in terms of $R^2$ when $n$ is smaller. In particular, \ourmethodlz~performs well when $200\leq n\leq 500$. For larger values of $n$, SCOPE performs better in terms of $R^2$. However, we see that in terms of clustering, \ourmethodlz~has the best performance, followed by \ourmethod.

In addition to the experiments presented here, our  experiments in Appendix~\ref{supp:synthetic} investigate the effect of changing SNR and consider other data generation settings. We observe that \ourmethodlz~can perform well for a wide range of SNR values. We also present an example of a data generation setting where \ourmethod~can outperform \ourmethodlz.
In summary, \ourmethodlz~(or \ourmethod) mostly appears to display excellent overall performance in our synthetic data experiments, demonstrating the benefits of our proposed estimator.

\begin{figure}[t!]
\setlength{\tabcolsep}{1pt}
     \centering
    \scalebox{1.1}{\begin{tabular}{ccc}
      Test $R^2$ & Purity & \# of Clusters \\
       \includegraphics[width=0.28\linewidth]{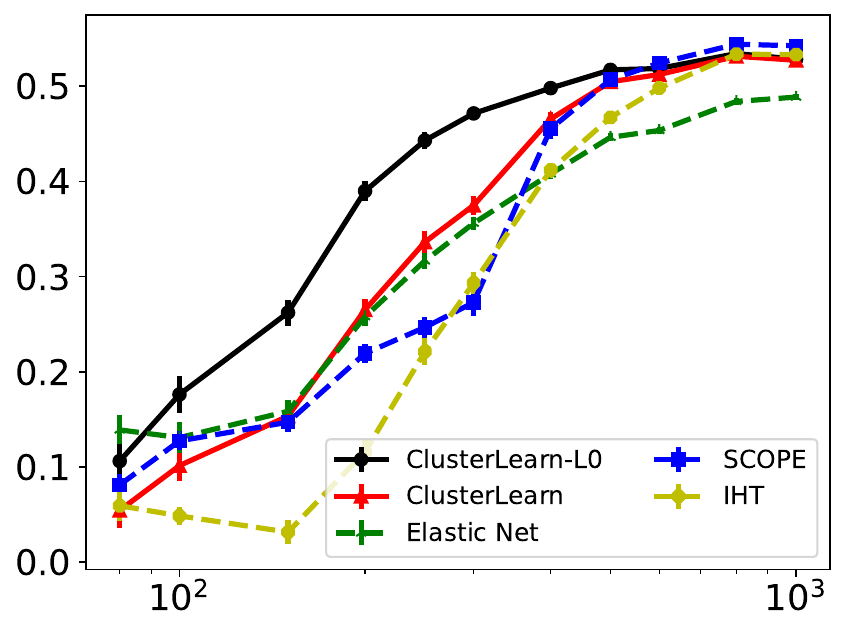}  &  \includegraphics[width=0.28\linewidth]{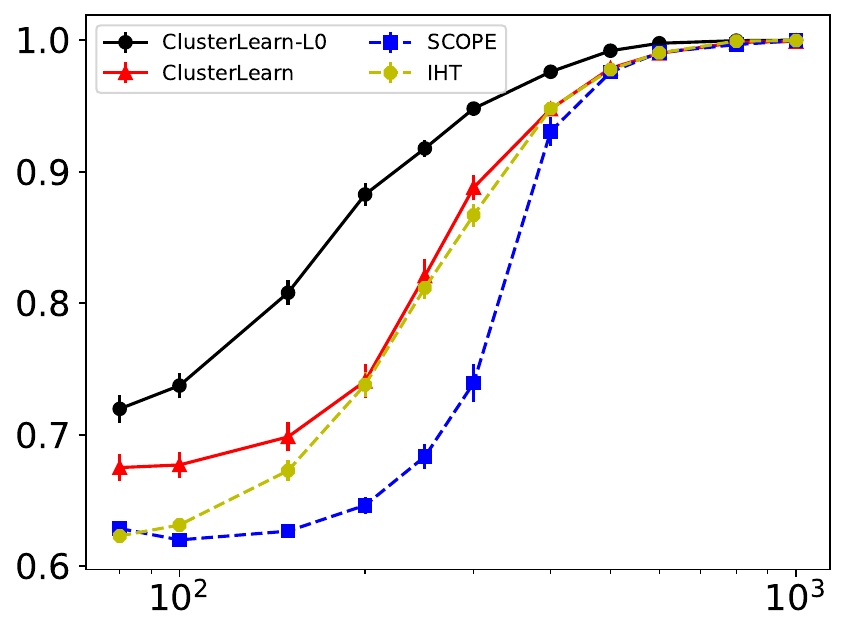} & 
       \includegraphics[width=0.28\linewidth]{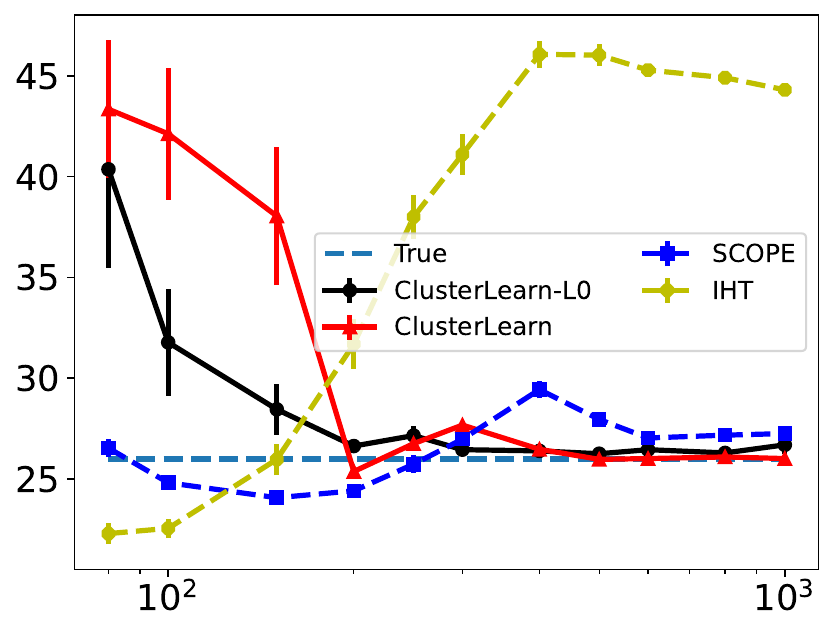} \\
       $n$ & $n$ & $n$
    \end{tabular}}
    \captionof{figure}{Experiments from Section~\ref{results-set1} with $r_1=4,r_2=12, q=20, q_s=3$ and $\rho=0.2,\sigma=2$. We do not plot the purity and the number of clusters for Elastic Net as it produces a large number of clusters. The vertical bars at each point indicate the corresponding standard errors.}
        \label{fig:set1-n}
\end{figure}

\begin{figure}[t!]
 \setlength{\tabcolsep}{1pt}
    \centering
    \scalebox{1.1}{\begin{tabular}{ccc}
      Test $R^2$ & Purity & \# of Clusters \\
       \includegraphics[width=0.28\linewidth]{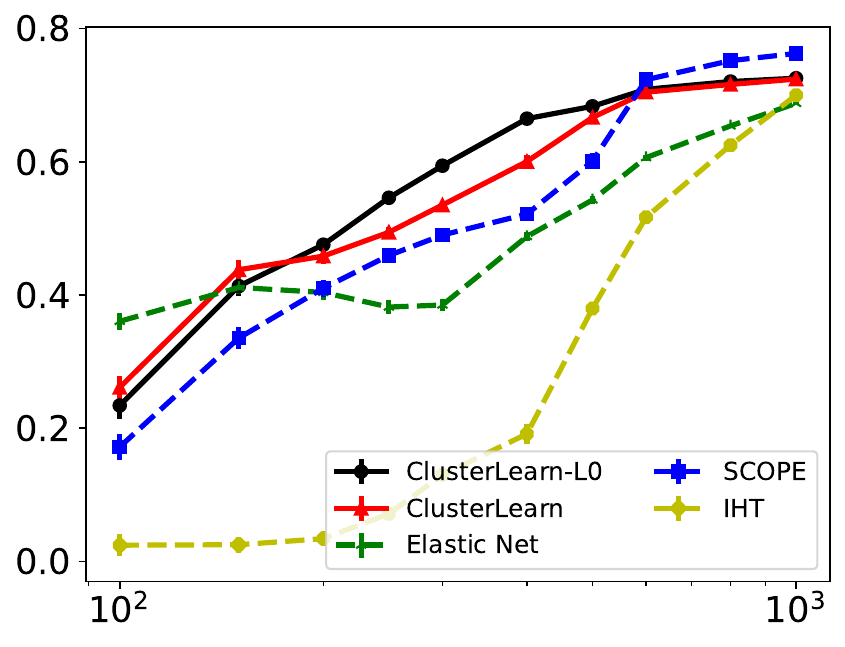}  &  \includegraphics[width=0.28\linewidth]{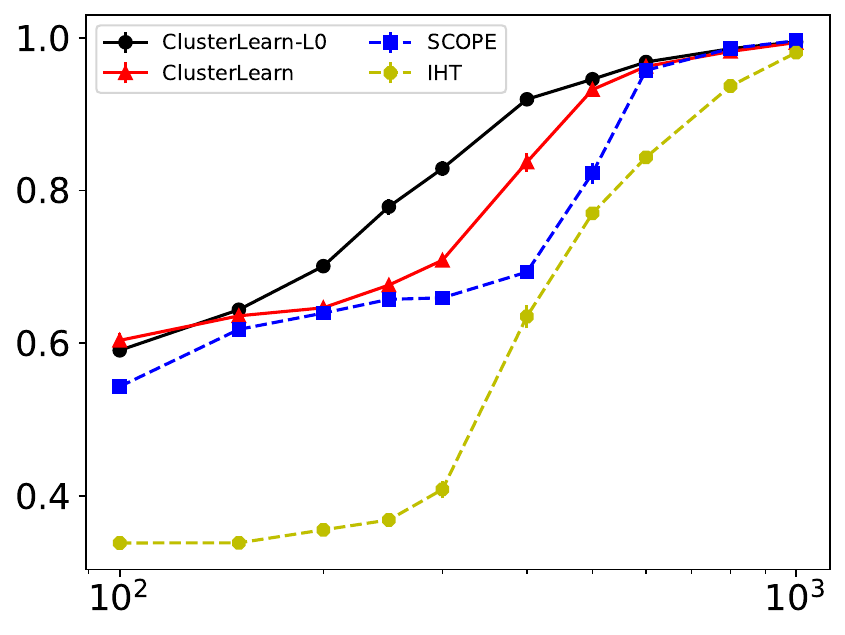} &
       \includegraphics[width=0.28\linewidth]{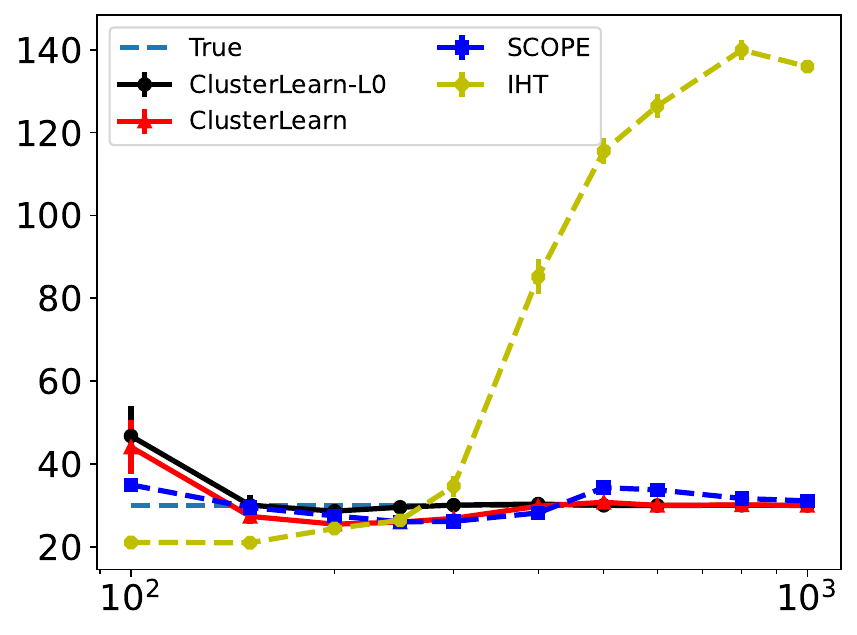} \\
       $n$ & $n$ & $n$
    \end{tabular}}
    \captionof{figure}{Experiments from Section~\ref{results-set1} with $r_1=r_2=10,q=20,q_s=5, \rho=0.2$ and $\sigma=2$ with varying $n$. We do not plot purity and the number of clusters for Elastic Net as it produces a large number of clusters. The vertical bars at each point indicate the corresponding standard errors. }
    \label{fig:set2-n}

\end{figure}

\subsubsection{Algorithm Runtime Comparisons}\label{sec:time-bench}
We study the computational performance of our estimator with respect to runtime and scalability. To this end, in~\eqref{eq:beta_star_setting_2} we set $r_1=2$ and consider varying values of $r_2>1$, $q>1$ and $q_s>1$.
 We also set $\sigma=1$ in this section.
 
 \noindent\textbf{Approximate Algorithms:} We first compare our approximate algorithm from Section~\ref{subsec:bcd} with the state-of-the-art (approximate) solver for SCOPE. We report the runtime of fitting a path of solutions for 100 different tuning hyperparameters (for \ourmethodlz,~this corresponds to a $10\times 10$ grid of $\lambda_0,\lambda$-values). For these experiments, we set $q_s=2,\sigma=1$ and $n=2000$.
 Table~\ref{table:runtimebenchmark} in the Appendix 
 summarises the results for different values of~$q$ and~$r_2$. We observe that for large problems with~$r_2$ in the hundreds, \ourmethod~is up to 500 times faster than SCOPE, and \ourmethodlz~(which includes an additional sparsity penalisation)~is about 100 times faster than SCOPE. These results, as well as the results in Table~\ref{table:runtimebenchmark-approx2}, demonstrate the superior computational efficiency/scalability of our approximate algorithms on large problems with numerous categorical levels.

\noindent\textbf{Exact Algorithms:} We study exact MIP algorithms for our estimator --- see Section~\ref{numerical-results-runtime} in the Appendix for the results. We compare our custom row generation framework discussed in Section~\ref{subsec:rowgen} vs Gurobi's off-the-shelf solver. (Recall that the SCOPE method does not discuss exact algorithms for multiple predictors). 
As expected, the exact algorithms are slower than the approximate solvers due to their focus on certifying global optimality (via branch-and-bound), but they are able to address reasonably large problems with $n \approx 5000$ and $p\approx4500$ within 15 minutes or so (we refer to Appendix~\ref{app:numerical} and Table~\ref{table:runtimebenchmark-exact} for further results and details). We observe that our specialised row-generation algorithm is promising and can be faster than Gurobi's off-the-shelf solvers in certifying optimality. 

Additional experiments in Appendix~\ref{numerical-results-runtime} demonstrate that in terms of optimisation, our exact and approximate solvers achieve close objective values, confirming that our approximate solver can obtain high-quality solutions. We also illustrate the scalability of our specialized solver DpSegPen-$L_0$ in Table~\ref{table:runtimebenchmark-oneidm}.

\subsection{Real Datasets}\label{sec:realdata}
\subsubsection{Bike Sharing Dataset}\label{bikedata}
We consider the bike sharing dataset~\citep{misc_bike_sharing_275} from the UCI data repository, where the goal is to predict the number of bike rentals in a given time period (we use the hourly version of the dataset). This is a regression dataset with 8 categorical and 4 continuous features.  We take $n=100$ training data points, 100 validation data points, and use the rest of the data as a test set. This leads to a problem with about $p=60$ regression coefficients (the exact number varies slightly depending on the train/test split after we remove the unused categorical levels). In Table~\ref{tab:bike}, we report the test~$R^2$, the runtime of each method, and the total number of regression coefficient levels, $\sum_{j\in[q]} |\{\beta_i:i\in\cI_j\}|$. We observe that \ourmethodlz~has the best test $R^2$, improving upon both \ourmethod~and SCOPE. \ourmethodlz~and \ourmethod~also have fewer coefficient levels  than SCOPE. In terms of runtime, our methods are competitive with the alternatives.

\begin{table}[]
    \centering
    \scriptsize
        \caption{Mean performance of different algorithms on the bike sharing dataset in Section~\ref{bikedata}, averaged across 100 runs. We report the average and the standard error (in parenthesis).}
    \label{tab:bike}
    \begin{tabular}{cccc}
\toprule
  Method & Test $R^2$ & Number of levels & Runtime (seconds) \\
\midrule
		
IHT &  $0.267(\pm 0.078)$ & $16.1(\pm 1.1)$  & $0.41(\pm 0.02)$  \\
Elastic Net & $0.437(\pm	0.007)$	& $27.5(\pm	0.9)$     &  $\mathbf{0.22(\pm0.00)}$ \\
SCOPE & $0.472(\pm 0.005)$ &	$48.6(\pm	0.6)$    & $1.07(\pm0.02)$  \\
\ourmethod & $0.514(\pm	0.007)$ &	$\mathbf{15.4 (\pm	0.5)}$    &  $0.48 (\pm 0.01)$ \\
\ourmethodlz &  $\mathbf{0.532 (\pm0.004)}$ &	$17.2 (\pm 0.7)$   & $1.44(\pm0.01)$  \\
\bottomrule
\end{tabular}

\end{table}

\subsubsection{Insurance Dataset}\label{sec:insurance}
We consider the Prudential Life Insurance Assessment
dataset~\citep{prudential-life-insurance-assessment}, where the goal is to predict the risk level of life insurance applicants. The original dataset's response is an ordinal measure of risk that has~8 levels. To adapt the data to our context, we generate a new binary response indicating whether the original risk measure falls into the highest 4 risk levels.

We use $30,400$ observations from the original data, and use $n=1000$ observations as a training set, 2000 as a validation set, and the rest as a test set. We utilise the full set of 13 continuous and 60 categorical features but drop any feature with more than $80\%$ missing values, which results in the model dimension of $p=723$. Similar to the previous dataset, we compare with SCOPE, Elastic Net, and Lasso. For SCOPE, we use the concavity parameter of 100 that is recommended for this data in~\cite{scope}. As illustrated by Table~\ref{tab:insurance}, \ourmethod~and \ourmethodlz~achieve the best prediction performance.

\begin{table}[]
    \centering
    \scriptsize
        \caption{Mean performance of different algorithms on the insurance dataset in Section~\ref{sec:insurance}, averaged across 100 runs.  We report the average and the standard error (in parenthesis).}
    \label{tab:insurance}
    \begin{tabular}{cccc}
\toprule
  Method & Accuracy & Number of levels & Runtime (seconds) \\
\midrule
Lasso & $0.6800 (\pm0.0006)$ & $184.0 (\pm6.4)$ & $2.8 (\pm0.1)$ \\
Elastic Net & $0.6877(\pm0.0005)$ & $296.5(\pm0.7)$ & $2.6(\pm0.1)$\\
SCOPE-100 & $0.6846 (\pm0.0008)$ & $\mathbf{82.9 (\pm1.1)}$ & $281.2 (\pm11.0)$ \\
\ourmethod & $\mathbf{0.6906 (\pm0.0004)}$ & $143.7 (\pm2.9)$ & $\mathbf{2.4 (\pm0.1)}$ \\
\ourmethodlz & $\mathbf{0.6908 (\pm0.0004)}$ & $122.1 (\pm2.7)$ & $5.5 (\pm0.1)$ \\
\bottomrule
\end{tabular}

\end{table}

\section*{Acknowledgements}
Kayhan Behdin contributed to this work while he was a PhD student at MIT. Riade Benbaki contributed to this work while he was a Master student at MIT.
Authors would like to thank MIT SuperCloud~\citep{reuther2018interactive} for providing the computational resources that made this work possible. Our research was funded by grants from the Office of Naval Research (ONR N000142512504).

\bibliographystyle{plainnat}
\bibliography{ref.bib}

\clearpage
\appendix
\numberwithin{equation}{section}
\numberwithin{lemma}{section}
\numberwithin{proposition}{section}
\numberwithin{figure}{section}
\numberwithin{table}{section}
\numberwithin{algorithm}{section}

\section*{Additional Numerical Experiments, Proofs and Technical Details}

\section{Details of DpSegPen-$L_0$}\label{app:dp}

We reformulate Problem~\eqref{one-dim-jumps} as a maximisation problem:

\begin{align}\label{one-dim-jumps-max}
     \max_{\B\beta\in\R^p} \sum_{i=1}^{p}e_i(\beta_i) - \tilde \lambda \sum_{i=1}^{p-1} \1(\beta_i \neq \beta_{i+1}),
\end{align}
where $e_i(x) = {-n_i}(x-\bar{y}_i)^2/2 + \tilde \lambda_0 \1(x = 0)$. We note that~\citet{johnson2013dynamic} considers a special case of~\eqref{one-dim-jumps-max} with $\tilde{\lambda}_0 = 0$. Following derivations in~\citet{johnson2013dynamic}
Problem~\eqref{one-dim-jumps-max} can be solved using the following recursion:
\begin{equation}\label{dp-recursions}
    \begin{aligned}
        \delta_1(b) &= e_1(b) \\
        b^*_k &= \argmax_{\tilde{b}} \delta_{k-1}(\tilde{b})\quad \forall k \in \{2,\ldots,p\}\\
        f_k(b) &= \max_{\tilde b} \delta_{k-1}(\tilde b) - \tilde \lambda \1(b \neq \tilde{b}) = \max \left\{\delta_{k-1}(b),\delta_{k-1}(b^*_k) -\tilde \lambda \right\}~~\forall k \in \{2,\ldots,p\}\\
    \delta_k(b)& = e_k(b) + f_k(b) \quad \forall k \in \{2,\ldots,p\}.
    \end{aligned}
\end{equation}
Once the functions $f_k$ and $\delta_k$ in~\eqref{dp-recursions} are computed, a solution $\hat{\B\beta}$ of~\eqref{one-dim-jumps-max} can be obtained by:
\begin{equation}\label{dp-recursion-2}
\begin{aligned}
    \hat{\beta}_p &= \argmax_b \delta_p(b)\\
    \hat{\beta}_k &= \argmax_b \delta_k(b) - \tilde \lambda \1(b \neq \hat{\beta}_{k+1}).
\end{aligned}    
\end{equation}
We refer to~\citet{johnson2013dynamic} for the derivation of~\eqref{dp-recursion-2}.
However, we note that as $\tilde\lambda_0=0$ in the work of~\citet{johnson2013dynamic}, the functions $e_i$ are concave quadratics, while in our case, the functions $e_i$ are non-concave, which shows that their algorithm does not directly extend to our problem. Nevertheless, we discuss that~\eqref{dp-recursions} and~\eqref{dp-recursion-2} can still be efficiently calculated in our case. To this end, let us define the set of piece-wise quadratic functions as
\begin{equation}
    \cQ = \bigg\{\sum_{i=1}^{m_1+1} \1(x \in (x^{(1)}_{i-1},x^{(1)}_i]) \left( a_i x^2 + b_i x + c_i \right):-
    \infty=x_0^{(1)}\leq x_1^{(1)}\leq\cdots\leq x_{m_1+1}^{(1)}=\infty,m_1\geq 0\bigg\},
\end{equation}
and the set of positive ``spike'' functions as
\begin{equation}\label{csdef}
    \cS = \left\{\sum_{i=1}^{m_2}s_i\1(x=x^{(2)}_i):x_i^{(2)}\in\R,~s_i>0~\forall i\in[m_2], m_2\geq 0\right\}.
\end{equation}
We define the set of functions that can be decomposed into a sum of a piece-wise quadratic function and a spike function as
\begin{equation}
    \cF = \cQ + \cS
\end{equation}
where $A+B=\{a+b:a\in A, b\in B\}$ denotes the Minkowski sum. Our first result in this section is to show that the functions appearing in~\eqref{dp-recursions} belong to~$\cF$.
\begin{lemma}\label{opt-lemma}
    For every $k \geq 1$, functions $f_k,\delta_k$ defined in~\eqref{dp-recursions} belong to~$\cF$.
\end{lemma}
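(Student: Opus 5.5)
We argue by induction on $k$, after establishing two closure properties of $\cF$. The first is closure under addition: if $g=q+s$ and $h=q'+s'$ with $q,q'\in\cQ$ and $s,s'\in\cS$, then $g+h=(q+q')+(s+s')$. Here $q+q'\in\cQ$, since we can take the common refinement of the two breakpoint sets and add the quadratic coefficients on each piece. Likewise $s+s'\in\cS$, since spikes at a common point add and their heights remain positive. The second is closure under pointwise maximum with a constant: if $g\in\cF$ and $c\in\R$, then $\max\{g,c\}\in\cF$.

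The base case is direct. We have $\delta_1=e_1$, and $e_1(x)=-n_1(x-\bar y_1)^2/2+\tilde\lambda_0\1(x=0)$. This is a single quadratic piece with $m_1=0$, plus one spike at $0$ of height $\tilde\lambda_0$. If $\tilde\lambda_0=0$, we take $m_2=0$; otherwise the spike is positive. The same description shows that every $e_k$ lies in $\cF$.

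For the inductive step, assume $\delta_{k-1}\in\cF$. Then $f_k(b)=\max\{\delta_{k-1}(b),c_k\}$, where $c_k=\delta_{k-1}(b_k^*)-\tilde\lambda$. We first check that this constant is finite and attained. Each $\delta_{k-1}$ is a finite sum of concave quadratics (the $e_i$ restricted to the relevant pieces) plus spikes, so it is bounded above. The supremum is attained either on some closed piece of the quadratic part or at one of finitely many spike points. Hence $b_k^*$ exists; if needed, we take $c_k=\sup\delta_{k-1}-\tilde\lambda$.

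Once $f_k\in\cF$ is known, the additive closure gives $\delta_k=e_k+f_k\in\cF$, which completes the induction.

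The main obstacle is the max-with-constant closure. Write $g=q+s$. Off the finite spike set $Z=\{x^{(2)}_i\}$, $\max\{g,c\}$ equals $\max\{q,c\}$. On each piece $(x_{i-1},x_i]$, the function $q$ is a quadratic $a x^2+bx+c_i$, and $\{x:a x^2+bx+c_i\ge c\}$ is a union of at most two intervals. So $\max\{q,c\}$ is piecewise quadratic, with at most two new breakpoints per piece, and the constant $c$ counts as a quadratic with $a=b=0$. This gives $\tilde q:=\max\{q,c\}\in\cQ$.

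One technicality is the half-open convention $(x_{i-1},x_i]$. Strict versus non-strict inequality at a new breakpoint matters only at that single point, and there both sides agree because $q=c$ at the crossing. If it is convenient, any one-point discrepancy is absorbed by the spike correction below. Even a pure point value is fine: we can insert a degenerate interval $(x_{i-1},x_i]$ with $x_{i-1}<x_i$ adjusted, or note that continuity at a crossing makes the assignment irrelevant.

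Next, at each $z\in Z$ we set $\tilde s(z)=\max\{g(z),c\}-\tilde q(z)$. Since the spikes have positive height, $g(z)>q(z)$, so $\max\{g(z),c\}\ge\max\{q(z),c\}$ and therefore $\tilde s(z)\ge 0$. We discard the points where $\tilde s(z)=0$; the remaining values are positive. Thus $\tilde s\in\cS$ and $\max\{g,c\}=\tilde q+\tilde s\in\cF$.

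The case $c_k$ involving $\sup$ versus $\max$ needs care only in showing that $b_k^*$ exists, and we handle it as described above using the pieces being closed on the right.
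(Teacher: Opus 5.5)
Your proof is correct and follows the same route as the paper. You argue by induction, handle $\delta_k=e_k+f_k$ by the additive closure of $\cF$, and write $f_k=\max\{\delta_{k-1},c_k\}$ as $\max\{Q,c_k\}$ plus a nonnegative correction at the finitely many spike points of $\delta_{k-1}$, exactly as in the paper's Step~(2).

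The differences are additions. You verify $\max\{Q,c_k\}\in\cQ$ directly, where the paper cites \citet{johnson2013dynamic}. You also discuss whether $b_k^*$ exists, which the paper defers to after the lemma. In that discussion, your appeal to ``closed'' pieces is imprecise, since the pieces are half-open. Attainment does hold, because the quadratic part stays continuous (continuity is preserved by sums and by $\max$ with a constant) and the spikes are positive. In any case, membership in $\cF$ needs only that $c_k$ is finite.
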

\begin{proof}[Proof of Lemma~\ref{opt-lemma}]
    We prove the lemma by induction. To this end, note that $\delta_1=e_1\in\cF$.
    We next show the following two steps: (1) For $2\leq k\leq n$, assuming $f_k\in\cF$  we have $\delta_k\in\cF$, (2) For $2\leq k\leq n$, assuming $\delta_{k-1}\in\cF$, we have $f_k\in\cF$.
    These two steps together complete the induction argument.

    \textbf{Proof of Step (1)}: Note that $\cF$ is closed under addition, $\cF=\cF+\cF$. 
    Since $e_k\in\cF$, and $f_k\in\cF$ (assumed for this step), we have that $\delta_k=e_k+f_k$ lies in $\cF$. 
    
    \textbf{Proof of Step (2)}: Let $\delta_{k-1}=Q+S$ be the piece-wise quadratic and spike decomposition of $\delta_{k-1}\in\cF$ with $Q\in\cQ,S\in\cS$. Define 
    $$\tilde{Q}(b)=\max\{Q(b),\delta_{k-1}(b_k^*)-\tilde\lambda\},~~\forall b\in\R.$$
    As discussed by~\citet{johnson2013dynamic}, $\tilde{Q}$ has a piece-wise quadratic representation. Moreover, for $b\in\R$ such that $S(b)=0$,
    $$\max\{\delta_{k-1}(b),\delta_{k-1}(b_k^*)-\tilde\lambda\}=\tilde{Q}(b).$$
    Therefore, for such $b$, $f_k(b)=\tilde{Q}(b)$. Hence, for at most finitely many points, $f_k-\tilde{Q}$ can be nonzero which shows $f_k-\tilde{Q}$ is a spike function. We still need to show that the spikes of $f_k-\tilde{Q}$ are positive to demonstrate $f_k-\tilde{Q}\in\cS$ defined in~\eqref{csdef}. However, for any $b\in\R$ such that $S(b)>0$,
    $$f_k(b) \stackrel{(a)}{=} \max\{\delta_{k-1}(b),\delta_{k-1}(b_k^*)-\tilde\lambda\} \stackrel{(b)}{\geq} \max\{Q(b),\delta_{k-1}(b_k^*)-\tilde\lambda\}=\tilde{Q}(b)$$
    where $(a)$ is by the definition of $f_k$ in~\eqref{dp-recursions} and $(b)$ is true as for $b\in\R$, $\delta_{k-1}(b)=Q(b)+S(b)\geq Q(b)$.
    This completes the proof.
\end{proof}
Given Lemma~\ref{opt-lemma}, it remains to show that we can find the maximum of functions from $\cF$ efficiently---this is required in~\eqref{dp-recursions} (for $b_k^*$) and~\eqref{dp-recursion-2} (for $\hat{\beta}_k$).

Let $f=Q+S\in\cF$ and let $\{x_1^{(2)},\cdots,x_{m_2}^{(2)}\}$ be the set of spikes for $S$, as defined in~\eqref{csdef}. Note that
\begin{equation}\label{max-twopart}
    \max_x f(x) = \max_{x\in\{x_i^{(2)}\}} f(x) \lor \sup_{x\notin\{x_i^{(2)}\}} Q(x).
\end{equation}
We note that the maximum in~\eqref{max-twopart} exists and is well-defined, as the spikes of function in $\cS$ are always positive.
The second optimisation in rhs of~\eqref{max-twopart} requires maximising a piece-wise quadratic function, which can be done as described by~\citet{johnson2013dynamic}. The feasible set of the first problem in the rhs of~\eqref{max-twopart} is a finite set, so $\max_{x\in\{x_i^{(2)}\}} f(x)$ can be computed by calculating the function value $f(x)$ for finitely many values of $x$. 
Based on the above discussion, an optimal solution to Problem~\eqref{max-twopart} and hence  steps~\eqref{dp-recursions} and~\eqref{dp-recursion-2} can be computed efficiently---this results in our algorithm DpSegPen-$L_0$.

\section{Proof of Theorem~\ref{estimation_thm}}
We present the proof for the special case where $\hat\alpha=\alpha=0$. The general proof follows by repeating the same steps after appending a column of all ones to~$\B X$ and appending the intercept~$\alpha$ to~$\B\beta$. The proof of Theorem~\ref{estimation_thm} is based on the following lemma.
\begin{lemma}\label{thm1-lem1}
For any $t>0$ and $p\geq s\geq 0$, define 
\begin{equation}
    E_1(t,s)=\left\{\frac{4}{n}\sup_{\|\B\beta\|_0=s}\left(\B{\epsilon}^\top\frac{\B{X}(\hat{\B{\beta}}-\B{\beta})}{\|\B{X}(\hat{\B{\beta}}-\B{\beta})\|_2}\right)^2 -\lambda_0 \|\hat{\B\beta}\|_0 >\frac{t}{n}\right\}.
\end{equation}

Under the conditions of Theorem~\ref{estimation_thm}, 
\begin{equation}
     \p\left( E_1(t,s)\right)
         \stackrel{}{\leq}   4\exp\left(-\frac{t}{a\sigma^2} + 2s\log(3ep)\right)
\end{equation}   
for some universal constant $a>0$. 
\end{lemma}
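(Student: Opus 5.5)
The plan is a union bound over the support of $\hat{\B\beta}$, with a Gaussian chi-square tail bound on each fixed subspace. Since $\hat{\B\beta}$ is random, I first replace it by a supremum over all vectors. Fix $k=\|\hat{\B\beta}\|_0$ and let $T=S(\hat{\B\beta})\cup S(\B\beta)$, so that $|T|\le k+s$. For every $\B\beta$ with $\|\B\beta\|_0=s$, the direction $\B X(\hat{\B\beta}-\B\beta)/\|\B X(\hat{\B\beta}-\B\beta)\|_2$ is a unit vector in the column span of $\B X_T$. Hence
\[
\Big(\B\epsilon^\top\tfrac{\B X(\hat{\B\beta}-\B\beta)}{\|\B X(\hat{\B\beta}-\B\beta)\|_2}\Big)^2\le \|\B P_T\B\epsilon\|_2^2 ,
\]
where $\B P_T$ is the orthogonal projection onto $\mathrm{col}(\B X_T)$; the case $0/0$ is treated as zero, as in the paper's convention. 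Therefore
\[
E_1(t,s)\subseteq \bigcup_{k=0}^{p}\ \bigcup_{|T|\le k+s}\Big\{\tfrac4n\|\B P_T\B\epsilon\|_2^2>\tfrac{t}{n}+\lambda_0 k\Big\},
\]
and the union over $T$ can be restricted to sets of the form $T=T_1\cup T_2$ with $|T_1|=k$ and $|T_2|=s$.

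Next I bound each event. For a fixed $T$, $\|\B P_T\B\epsilon\|_2^2/\sigma^2$ is $\chi^2$ with at most $k+s$ degrees of freedom. A Laurent--Massart-type bound (or a net argument on the unit sphere of the subspace) gives
\[
\p\big(\|\B P_T\B\epsilon\|_2^2>u\big)\le 2\exp\Big(-\tfrac{u}{a'\sigma^2}+c(k+s)\Big)
\]
for universal constants $a'$ and $c$. I apply this with $u=(t+n\lambda_0 k)/4$. Using the assumption $\lambda_0\ge c_{\lambda_0}\sigma^2\log(ep)/n$, the exponent is at most
\[
-\tfrac{t}{a\sigma^2}-\tfrac{c_{\lambda_0}}{4a'}\,k\log(ep)+c(k+s).
\]

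Finally I count and sum. The number of admissible $T$ is at most $\binom{p}{k}\binom{p}{s}\le (ep)^k(ep)^s$. So each $k$ contributes at most
\[
2\exp\Big(-\tfrac{t}{a\sigma^2}+s\log(ep)+cs+k\big[\log(ep)+c-\tfrac{c_{\lambda_0}}{4a'}\log(ep)\big]\Big).
\]
Choosing $c_{\lambda_0}$ large makes the bracket at most $-\log 2$, so the sum over $k\ge 0$ is a geometric series bounded by a factor of $2$. The remaining $s$-terms are absorbed by $2s\log(3ep)$, since $s\log(ep)+cs\le 2s\log(3ep)$ once constants are adjusted (for example by enlarging $a$ or using a $1/2$-net on the sphere, which gives a factor $3^{k+s}$). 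This yields the claimed bound $4\exp(-t/(a\sigma^2)+2s\log(3ep))$.

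The main obstacle is handling the random $\|\hat{\B\beta}\|_0$ without losing a $\log p$ factor per unit of $k$. The penalty $\lambda_0 k$ must dominate the entropy cost $k\log(ep)$ of choosing the support of $\hat{\B\beta}$, and this is exactly where the requirement that $c_{\lambda_0}$ be sufficiently large enters. Some care is also needed in tracking the constant in the chi-square tail, so that a single universal $a$ works.
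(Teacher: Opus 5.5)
Your proposal is correct and follows essentially the same route as the paper: bound the squared inner product by the squared norm of the projection of $\B\epsilon$ onto the span of $\B X$ restricted to the joint support, apply a chi-square ($\epsilon$-net) tail bound on each fixed subspace, and take a union bound over supports, with $\lambda_0\gtrsim\sigma^2\log(ep)/n$ paying for the entropy of the random support of $\hat{\B\beta}$. The only cosmetic difference is bookkeeping. You charge the full $\lambda_0\|\hat{\B\beta}\|_0$ against choosing $S(\hat{\B\beta})$ and $S(\B\beta)$ separately, with count $\binom{p}{k}\binom{p}{s}$. The paper first lowers $\|\hat{\B\beta}\|_0$ to $|S(\hat{\B\beta})\setminus S(\B\beta)|$ and counts sets of size $k+s$. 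Both give the same $(ep)^{k+s}$ entropy.
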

\begin{proof}[Proof of Lemma~\ref{thm1-lem1}]
   Fix $\B\beta$ such that $\|\bbeta\|_0=s$ and let $\hat{S}=\{i\in[p]:\hat{\beta}_i\neq 0\}$ and $S(\bbeta)=\{i\in[p]:\beta\neq 0\}$. Moreover, for $S\subseteq[p]$, 
 let $\B\Phi_{S}\in\R^{n\times |S|}$ be an orthonormal basis for the column span of $\B X_{S}$, the submatrix of $\B X$ with columns indexed by $S$. Then,
\begin{align}
\frac{4}{n} \left(\B{\epsilon}^\top\frac{\B{X}(\hat{\B{\beta}}-\B{\beta})}{\|\B{X}(\hat{\B{\beta}}-\B{\beta})\|_2}\right)^2 -\lambda_0 \|\hat{\B\beta}\|_0
    & \leq \frac{4}{n}\sup_{\B{v}\in\R^{|\hat{S}\cup S(\bbeta)|}}\left(\B{\epsilon}^\top\frac{\B X_{\hat{S}\cup S(\bbeta)} \B{v}}{\|\B X_{\hat{S}\cup S(\bbeta)} \B{v}\|_2}\right)^2 -\lambda_0 \|\hat{\B\beta}\|_0\nonumber \\
    & = \frac{4}{n} \sup_{\substack {\B{v}\in\R^{|\hat{S}\cup S(\bbeta)|} \\ \|\B{v}\|_2=1}}\left(\B{\epsilon}^\top\B{\Phi}_{\hat{S}\cup S(\bbeta)}\B{v}\right)^2 -\lambda_0 |\hat{S}|\nonumber \\
     & \leq \frac{4}{n} \sup_{\substack {\B{v}\in\R^{|\hat{S}\cup S(\bbeta)|} \\ \|\B{v}\|_2=1}}\left(\B{\epsilon}^\top\B{\Phi}_{\hat{S}\cup S(\bbeta)}\B{v}\right)^2 -\lambda_0 |\hat{S}\setminus S(\bbeta)|\nonumber \\
    & \leq \max_{0\leq k\leq p-s}\left[\frac{4}{n}\max_{\substack{S\subseteq[p]\\|S|=k+s}}\sup_{\substack { \|\B{v}\|_2=1}}\left((\B{\Phi}_{S}^\top\B{\epsilon})^\top\B{v}\right)^2 -\lambda_0 k\right].\label{ineq2}
\end{align}
As a result,

\begin{equation}
    \frac{4}{n}\sup_{\|\B\beta\|_0=s}\left(\B{\epsilon}^\top\frac{\B{X}(\hat{\B{\beta}}-\B{\beta})}{\|\B{X}(\hat{\B{\beta}}-\B{\beta})\|_2}\right)^2 -\lambda_0 \|\hat{\B\beta}\|_0 \leq \max_{0\leq k\leq p-s}\left[\frac{4}{n}\max_{\substack{S\subseteq[p]\\|S|=k+s}}\sup_{\substack { \|\B{v}\|_2=1}}\left((\B{\Phi}_{S}^\top\B{\epsilon})^\top\B{v}\right)^2 -\lambda_0 k\right].\label{ineq2-new}
\end{equation}

 Next, for a given $S\subseteq[p]$ with $|S|=k+s$, we have
\begin{align}
     \p\left( \frac{4}{n}\sup_{\substack { \|\B{v}\|_2=1}}\left((\B{\Phi}_{S}^\top\B{\epsilon})^\top\B{v}\right)^2 -\lambda_0 k>\frac{t}{n}\right)& \stackrel{(a)}{\leq} 2\exp\left(-\frac{t+n\lambda_0 k}{a\sigma^2}\right)6^{|S|}\nonumber \\
     & \leq  2\exp\left(-\frac{t+n\lambda_0 k}{a\sigma^2}+ k\log(3ep) + s\log(3ep) \right)
     \label{epsnet}
\end{align}
for some sufficiently large universal constant $a>0$, where $(a)$ is true by an $\epsilon$-net argument~\citep[Chapter 4]{vershynin2018high} as $(\B{\Phi}_{S}^\top\B{\epsilon})\sim\cN(\B{0},\sigma^2\B{I}_{|S|})$. In the next step, we use union bound on $S$. To this end, from~\eqref{ineq2-new},
\begin{align}
    & \p\left( \frac{4}{n}\sup_{\|\B\beta\|_0=s}\left(\B{\epsilon}^\top\frac{\B{X}(\hat{\B{\beta}}-\B{\beta})}{\|\B{X}(\hat{\B{\beta}}-\B{\beta})\|_2}\right)^2 -\lambda_0 \|\hat{\B\beta}\|_0 >\frac{t}{n}\right)\nonumber\\
   \stackrel{(a)}{\leq} & \sum_{k=0}^{p-s}\sum_{\substack{S\subseteq[p]\\|S|=k+s}} \p\left( \frac{4}{n}\sup_{\substack { \|v\|_2=1}}\left((\B\Phi_{S}^\top\B\epsilon)^\top v\right)^2-\lambda_0 k>\frac{t}{n}\right) \nonumber \\
    \stackrel{(b)}{\leq} &  2\sum_{k=0}^{p-s} {p\choose k+s}\exp\left(-\frac{t+n\lambda_0 k}{a\sigma^2}+ s\log(3ep)+k\log(3ep)\right)\nonumber \\
    \stackrel{(c)}{\leq} &  2\sum_{k=0}^{p-s} \exp\left(-\frac{t+n\lambda_0 k}{a\sigma^2}+ 2s\log (3ep)+2k\log (3ep)\right)\nonumber \\
    \stackrel{(d)}{\leq} &  2\sum_{k=0}^{p-s} \exp\left(-\frac{t}{a\sigma^2}-(c_{\lambda_0}/a)k\log(ep) + 2s\log(3ep)+2k\log (3ep)\right)\nonumber \\
        \stackrel{(e)}{\leq} &  2\sum_{k=0}^p \exp\left(-\frac{t}{a\sigma^2}-2k\log(ep) + 2s\log (3ep)\right)\nonumber \\
         \stackrel{}{\leq} &  2\exp\left(-\frac{t}{a\sigma^2} + 2s\log (3ep)\right)\sum_{k=0}^{\infty} (ep)^{-2k}\nonumber \\
         \stackrel{}{\leq} &  4\exp\left(-\frac{t}{a\sigma^2} + 2s\log (3ep)\right)
\label{after-g-size}
\end{align}
where $(a)$ is by union bound and~\eqref{ineq2-new}, $(b)$ is by~\eqref{epsnet}, $(c)$ is true as ${p \choose k}\leq (ep)^{k}$, $(d)$ is by the lower-bound $\lambda_0\geq c_{\lambda_0}\frac{\sigma^2 \log (ep)}{n}$, and $(e)$ is by taking $c_{\lambda_0}/a$ to be sufficiently large. 
\end{proof}

\begin{lemma}\label{lemmab.2-new}
    For $p\geq s\geq 0$ define
    \begin{equation}
        E_2(s)=\left\{ \frac{1}{n}\|\mathbf{f}^*-\B{X}\hat{\B{\beta}}\|_2^2 + \lambda \hat{K} \gtrsim {\inf_{\|\B\beta\|_0=s}  \left[\frac{1}{n}\|\mathbf{f}^*-\B X\B\beta\|_2^2 + \lambda_0 \|\bbeta\|_0+ \lambda K(\bbeta)\right]}+\frac{\sigma^2 \log(ep)}{n}\right\}
    \end{equation}
    with a sufficiently large multiplicative constant. Then, under the assumptions of Theorem~\ref{estimation_thm}, 
    \begin{equation}
        \p(E_2(s))\leq 4\exp(-10(s\lor 1)\log(ep)). 
    \end{equation}
\end{lemma}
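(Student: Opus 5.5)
Fix $s$ and any $\B\beta$ with $\|\B\beta\|_0=s$. We follow the standard basic-inequality argument, with $\alpha=\hat\alpha=0$ as in the preamble of the proof. Optimality of $\hat{\B\beta}$ in Problem~\eqref{clusteringproblem-reg} gives
\[
\tfrac1n\|\B y-\B X\hat{\B\beta}\|_2^2+\lambda_0\|\hat{\B\beta}\|_0+\lambda\textstyle\sum_j|\{\hat\beta_i:i\in\cI_j\}|\le \tfrac1n\|\B y-\B X\B\beta\|_2^2+\lambda_0\|\B\beta\|_0+\lambda\textstyle\sum_j|\{\beta_i:i\in\cI_j\}|.
\]
The number of distinct values for predictor $j$ equals the number of distinct nonzero values plus either $0$ or $1$. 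Hence the two fusion penalties differ from $\lambda\hat K$ and $\lambda K(\B\beta)$ by at most $\lambda q$ in total. To avoid this slack, we bound $\sum_j|\{\hat\beta_i\}|\ge \hat K$ on the left and $\sum_j|\{\beta_i\}|\le K(\B\beta)+|\{j:\B\beta_{\cI_j}\text{ has a zero}\}|$ on the right. The cleaner route is to apply the argument with $\B\beta$ replaced by its sparsest shift within each block, so that every block with a nonzero coefficient also contains a zero and the extra ``$+1$'' appears on both sides. I expect this bookkeeping between $|\{\cdot\}|$ and $K(\cdot)$ to be the first technical point. If it cannot be made exact, I will absorb the discrepancy by noting that each extra cluster on the right-hand side is paid for by at least one nonzero coefficient, so $K$-type slack is at most $\|\B\beta\|_0$, which folds into the $\lambda_0\|\B\beta\|_0$ term up to constants provided $\lambda$ is compared with $\lambda_0$. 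Otherwise I will simply state the bound with the full count.

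Substituting $\B y=\bff^*+\B\epsilon$ and expanding gives
\[
\tfrac1n\|\bff^*-\B X\hat{\B\beta}\|_2^2+\lambda_0\|\hat{\B\beta}\|_0+\lambda\hat K\le \tfrac1n\|\bff^*-\B X\B\beta\|_2^2+\lambda_0\|\B\beta\|_0+\lambda K(\B\beta)+\tfrac2n\B\epsilon^\top\B X(\hat{\B\beta}-\B\beta).
\]
Write $u=\B\epsilon^\top\B X(\hat{\B\beta}-\B\beta)/\|\B X(\hat{\B\beta}-\B\beta)\|_2$. Then $\frac2n u\|\B X(\hat{\B\beta}-\B\beta)\|_2\le \frac{4}{n}u^2+\frac{1}{4n}\|\B X(\hat{\B\beta}-\B\beta)\|_2^2$. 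Next use $\|\B X(\hat{\B\beta}-\B\beta)\|_2^2\le 2\|\bff^*-\B X\hat{\B\beta}\|_2^2+2\|\bff^*-\B X\B\beta\|_2^2$. This lets us move half of the $\hat{\B\beta}$-error to the left-hand side. The remaining noise term is $\frac4n u^2-\lambda_0\|\hat{\B\beta}\|_0$, where the $\lambda_0\|\hat{\B\beta}\|_0$ is taken from the left-hand side, and this is exactly the quantity controlled by Lemma~\ref{thm1-lem1}.

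Outside $E_1(t,s)$, after the supremum over $\|\B\beta\|_0=s$ (which makes the bound uniform over $\B\beta$), we obtain
\[
\tfrac12\cdot\tfrac1n\|\bff^*-\B X\hat{\B\beta}\|_2^2+\lambda\hat K\le \tfrac32\Big[\tfrac1n\|\bff^*-\B X\B\beta\|_2^2+\lambda_0\|\B\beta\|_0+\lambda K(\B\beta)\Big]+\tfrac tn .
\]
Taking the infimum over $\|\B\beta\|_0=s$ then gives the complement of $E_2(s)$, after we choose $t$ so that $t/n$ is dominated by the right-hand side. Choose $t=a\sigma^2\cdot 12(s\vee1)\log(3ep)$. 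Lemma~\ref{thm1-lem1} then yields
\[
\p(E_1)\le 4\exp\big(-12(s\vee1)\log(3ep)+2s\log(3ep)\big)\le 4\exp(-10(s\vee1)\log(ep)).
\]
For the size of $t/n$, we have $t/n\lesssim \sigma^2 s\log(ep)/n+\sigma^2\log(ep)/n\lesssim \lambda_0\|\B\beta\|_0+\sigma^2\log(ep)/n$, using $\lambda_0\ge c_{\lambda_0}\sigma^2\log(ep)/n$. So $t/n$ is absorbed into the right-hand side of $E_2(s)$ with a large multiplicative constant.

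The main obstacle is thus not probabilistic, since Lemma~\ref{thm1-lem1} handles that part. It is making the penalty comparison between the fusion counts and $K(\cdot)$ exact, and ensuring that the sup/inf over $\B\beta$ in $E_1$ matches the infimum in $E_2$. The latter holds because the event complement of $E_1$ holds simultaneously for all $\B\beta$ with $\|\B\beta\|_0=s$.
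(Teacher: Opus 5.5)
The probabilistic part of your argument is sound and is essentially the paper's. Expanding with $\B y=\bff^*+\B\epsilon$ and using the triangle inequality is a slightly cleaner substitute for the paper's decomposition $\B y=\B X\bbeta+\B\epsilon+\B\omega$ with Cauchy--Schwarz. Your choice of $t$, the resulting probability bound, and the absorption of $t/n$ into $\lambda_0\|\bbeta\|_0+\sigma^2\log(ep)/n$ are all correct.

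\textbf{The gap is the penalty comparison.} You flag it but do not resolve it, so the display with $\lambda\hat K$ on the left and $\lambda K(\bbeta)$ on the right is not yet justified. Let $Z(\B\gamma)$ be the number of blocks $j$ with $0\in\{\gamma_i:i\in\cI_j\}$. The fusion penalty equals $\lambda\{K(\B\gamma)+Z(\B\gamma)\}$, so you need $Z(\hat\bbeta)\ge Z(\bbeta)$.

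Shifting the comparator addresses the wrong side. Since $Z(\bbeta)\le q$ for \emph{every} $\bbeta$, the bound $\sum_j|\{\beta_i:i\in\cI_j\}|\le q+K(\bbeta)$ needs no work. (With $\alpha=0$, shifting $\bbeta$ would also change $\B X\bbeta$ and could move $\|\bbeta\|_0$ off $s$.)

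The missing idea is a property of the estimator: every block of $\hat\bbeta$ contains a zero, so $\sum_j|\{\hat\beta_i:i\in\cI_j\}|=q+\hat K$ exactly. This is what the paper uses. It follows from $\lambda_0>0$ together with the unpenalised intercept. If some block had no zero, subtract one of its values $c$ from that block and add $c$ to $\hat\alpha$. This leaves the fitted values and the number of distinct values unchanged but strictly decreases $\|\hat\bbeta\|_0$, contradicting optimality. With this fact the two $\lambda q$ terms cancel and the rest of your chain goes through.

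\textbf{Your fallback does not work.} The slack $Z(\bbeta)-Z(\hat\bbeta)$ is not paid for by nonzero coefficients, because identically-zero blocks of $\bbeta$ count in $Z(\bbeta)$. For $\bbeta=\B 0$ the slack can be $q$ while $\|\bbeta\|_0=0$. Theorem~\ref{estimation_thm} also imposes no relation between $\lambda$ and $\lambda_0$, so a leftover $\lambda q$ cannot be absorbed. Keeping ``the full count'' would prove a different statement.

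The intercept is essential, so this step must be argued with the intercept present, not after the reduction $\alpha=\hat\alpha=0$. Without an intercept the statement fails:
\begin{itemize}
\item Take $q=1$, $N=0$ and $\bff^*=c\B 1$ with $c^2\gg\lambda_0 p$.
\item For $\lambda$ large, the minimiser is $\B 1$ times the sample mean of $\B y$, so $\hat K=1$ regardless of $\lambda$.
\item The left side of $E_2(0)$ is then at least $\lambda$, while the right side is of order $c^2+\sigma^2\log(ep)/n$.
\item Hence $E_2(0)$ occurs with probability near one.
\end{itemize}
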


\begin{proof}
   For an arbitrary $\B\beta$ such that $\|\B\beta\|_0=s$ define $\B{\omega}\in\R^n$ to be 
\begin{equation}\label{w-def}
    \B{\omega}=\mathbf{f}^*-\B{X\beta}.
\end{equation}
Note that as $\lambda_0>0$, at least one of the regression coefficients of $\hat{\B\beta}$ corresponding to each categorical predictor is zero:
$$0\in\{\hat\beta_i:i\in\cI_j\}~~\forall~j\in[q].$$
Therefore,
$$\sum_{j=1}^{q} \left\vert\left\{\hat\beta_i:i\in\cI_j\right\}\right\vert = q + \sum_{j=1}^{q} \left\vert\left\{\hat\beta_i:i\in\cI_j, \hat\beta_i\neq 0\right\}\right\vert = q + \hat{K}.$$
Similarly, because we focus on the sparsest representation of~$\bbeta$, and hence every categorical predictor has one cluster with zero $\bbeta$-coefficients, we derive:
$$\sum_{j=1}^{q} \left\vert\left\{\beta_i:i\in\cI_j\right\}\right\vert \leq q + K(\B\beta).$$

Then, by optimality of $\hat{\B{\beta}}$ and the feasibility of $\B{\beta}$ for Problem~\eqref{clusteringproblem-reg},
\begin{align}
    &\frac{1}{n}\|\B{y}-\B{X}\hat{\B{\beta}}\|_2^2+ \lambda \hat{K} + \lambda_0 \|\hat{\B\beta}\|_0  \leq \frac{1}{n}\|\B{y}-\B{X\beta}\|_2^2+ \lambda K(\bbeta) + \lambda_0 \|\B\beta\|_0 \nonumber \\
    \stackrel{(a)}{\Rightarrow} & \frac{1}{n} \|\B{X}(\B{\beta}-\hat{\B{\beta}})\|_2^2 \leq \frac{2}{n}(\B{\epsilon}+\B{\omega})^\top\B{X}(\hat{\B{\beta}}-\B{\beta}) + \lambda (K(\bbeta)-\hat{K}) + \lambda_0 (\|\B\beta\|_0 -\|\hat{\B\beta}\|_0)\nonumber\\
    \stackrel{(b)}{\Rightarrow} & \frac{1}{2n} \|\B{X}(\B{\beta}-\hat{\B{\beta}})\|_2^2 + \lambda \hat{K}\leq \frac{2}{n}\left((\B{\epsilon}+\B{\omega})^\top\frac{\B{X}(\hat{\B{\beta}}-\B{\beta})}{\|\B{X}(\hat{\B{\beta}}-\B{\beta})\|_2}\right)^2 +\lambda K(\bbeta) + \lambda_0 (\|\B\beta\|_0 -\|\hat{\B\beta}\|_0)\nonumber\\
     \stackrel{(c)}{\Rightarrow} & \frac{1}{2n} \|\B{X}(\B{\beta}-\hat{\B{\beta}})\|_2^2 +\lambda \hat{K}\leq \frac{4}{n}\left(\B{\epsilon}^\top\frac{\B{X}(\hat{\B{\beta}}-\B{\beta})}{\|\B{X}(\hat{\B{\beta}}-\B{\beta})\|_2}\right)^2 + \lambda K(\bbeta)+ \lambda_0 (\|\B\beta\|_0 -\|\hat{\B\beta}\|_0) +\frac{4\|\B{\omega}\|_2^2}{n},\label{thm-helper1.4}
\end{align}
where $(a)$ is by substituting $\B{y}=\B{X\beta}+\B{\epsilon}+\B{\omega}$, $(b)$ is true as
\begin{align*}
    2(\B{\epsilon}+\B{\omega})^\top\B{X}(\hat{\B{\beta}}-\B{\beta}) & = 2\frac{(\B{\epsilon}+\B{\omega})^\top\B{X}(\hat{\B{\beta}}-\B{\beta})}{\|\B{X}(\hat{\B{\beta}}-\B{\beta})\|_2}\|\B{X}(\hat{\B{\beta}}-\B{\beta})\|_2 \\ &\leq 2\left((\B{\epsilon}+\B{\omega})^\top\frac{\B{X}(\hat{\B{\beta}}-\B{\beta})}{\|\B{X}(\hat{\B{\beta}}-\B{\beta})\|_2}\right)^2+\frac{1}{2}\|\B{X}(\hat{\B{\beta}}-\B{\beta})\|_2^2
\end{align*}
and $(c)$ is true as
\begin{align*}
    \left((\B{\epsilon}+\B{\omega})^\top\frac{\B{X}(\hat{\B{\beta}}-\B{\beta})}{\|\B{X}(\hat{\B{\beta}}-\B{\beta})\|_2}\right)^2 & \leq 2\left(\B{\epsilon}^\top\frac{\B{X}(\hat{\B{\beta}}-\B{\beta})}{\|\B{X}(\hat{\B{\beta}}-\B{\beta})\|_2}\right)^2 + 2\left(\B{\omega}^\top\frac{\B{X}(\hat{\B{\beta}}-\B{\beta})}{\|\B{X}(\hat{\B{\beta}}-\B{\beta})\|_2}\right)^2 \\
    & \leq 2\left(\B{\epsilon}^\top\frac{\B{X}(\hat{\B{\beta}}-\B{\beta})}{\|\B{X}(\hat{\B{\beta}}-\B{\beta})\|_2}\right)^2 + 2\|\B{\omega}\|_2^2\left\Vert\frac{\B{X}(\hat{\B{\beta}}-\B{\beta})}{\|\B{X}(\hat{\B{\beta}}-\B{\beta})\|_2}\right\Vert_2^2.
\end{align*}
Take
$$t=c_t\sigma^2(s\lor 1)\log(ep)$$
in Lemma~\ref{thm1-lem1} for some sufficiently large absolute constant $c_t>0$. Then, from Lemma~\ref{thm1-lem1} and the event $E_1(t,s)$, with probability at least
\begin{equation}\label{thm1-prob-old}
    1-4\exp\left(-10(s\lor 1)\log (ep)\right)
\end{equation}
we have
\begin{equation}\label{thm1-finalbound}
     \frac{4}{n}\left(\B{\epsilon}^\top\frac{\B{X}(\hat{\B{\beta}}-\B{\beta})}{\|\B{X}(\hat{\B{\beta}}-\B{\beta})\|_2}\right)^2 -\lambda_0\|\hat{\B\beta}\|_0\lesssim \frac{\sigma^2 (s\lor 1)\log (ep)}{n}.
\end{equation}
Finally,
\begin{align}
   \|\mathbf{f}^*-\B{X}\hat{\B{\beta}}\|_2^2 \lesssim  \|\mathbf{f}^*-\B{X}{\B{\beta}}\|_2^2 + \|\B{X}{\B{\beta}}-\B{X}\hat{\B{\beta}}\|_2^2 =\|\B{\omega}\|_2^2+ \|\B{X}{\B{\beta}}-\B{X}\hat{\B{\beta}}\|_2^2.\label{thm1-final-helper}
\end{align}
We complete the proof of the lemma by substituting~\eqref{thm1-finalbound} and~\eqref{thm-helper1.4} into~\eqref{thm1-final-helper}.
\end{proof}

\begin{proof}[Proof of Theorem~\ref{estimation_thm}]

The proof follows from Lemma~\ref{lemmab.2-new}. In particular,
\begin{align}
    \p\left(\bigcap_{s=0}^p E_2^c(s)\right) & \geq 1 - \sum_{s=0}^p \p(E_2(s)) \nonumber\\
    & \geq 1- \sum_{s=0}^p 4\exp(-10(s\lor 1)\log(ep)) \nonumber \\
    & \geq 1-4(ep)^{-9}.\label{thm1-prob}
\end{align}

\end{proof}

\section{Proof of Theorem~\ref{estimation_thm2}}
When $\lambda_0=0$, there are infinitely many solutions to our optimisation problem~\eqref{clusteringproblem-reg}, because we can shift all the coefficients of a categorical predictor by a constant, and adjust the intercept accordingly, without changing either the vector of predictions or the clustering pattern of the solution. Because all  these solutions have the same prediction error, we can focus on any one of them without loss of generality. We will focus on the \emph{sparsest} solution for convenience. Thus, the largest estimated cluster for each categorical predictor will have zero estimated coefficients.

Following our approach in Theorem~\ref{estimation_thm}, we present the proof for the setting where $\alpha=\hat\alpha=0$. The general proof follows after minor adjustments to account for the intercept.

For $\B\beta\in\R^p$, define 
$$\bG(\B\beta)=\{j\in[q]:\beta_k\neq 0 \text{ for some } k\in\cI_j\},~~g(\B\beta)=|\bG(\B\beta)|.$$
\begin{lemma}\label{thm2-lem1}
For $q\geq \tilde g\geq 0$ and $t>0$ define
\begin{equation}
    E_3(t,\tilde g)=\left\{\frac{4}{n}\sup_{g(\bbeta)=\tilde g}\left(\B{\epsilon}^\top\frac{\B{X}(\hat{\B{\beta}}-\B{\beta})}{\|\B{X}(\hat{\B{\beta}}-\B{\beta})\|_2}\right)^2 -\lambda  \hat{K}>\frac{t}{n}\right\}.
\end{equation}
Under the conditions of Theorem~\ref{estimation_thm2} with $\lambda\geq c_{\lambda}\sigma^2 p_{\max} \log(eq)/n$, 
\begin{equation}
     \p\left( E_3(t,\tilde g)\right)
         \stackrel{}{\leq}   4\exp\left(-\frac{t}{a\sigma^2} + \tilde{g}p_{\max}\log (eq)\right)
\end{equation}   
for some universal constant $a>0$.
\end{lemma}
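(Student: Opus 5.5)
The plan mirrors the proof of Lemma~\ref{thm1-lem1}, but the counting is done over groups (categorical predictors) rather than individual coordinates, and the $\lambda\hat K$ penalty pays for the estimated groups. Fix $\bbeta$ with $g(\bbeta)=\tilde g$. The vector $\B X(\hat{\bbeta}-\bbeta)$ lies in the column span of $\B X_{S}$, where $S=\bigcup_{j\in \bG(\hat{\bbeta})\cup\bG(\bbeta)}\cI_j$. This span has dimension at most $p_{\max}\,|\bG(\hat{\bbeta})\cup\bG(\bbeta)|$.

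The first step is to relate $\hat K$ to the number of active estimated groups. Using the sparsest representation of $\hat{\bbeta}$, every $j\in\bG(\hat{\bbeta})$ has at least one nonzero estimated cluster, so $\hat K\ge g(\hat{\bbeta})\ge |\bG(\hat{\bbeta})\setminus\bG(\bbeta)|$. Writing $k=|\bG(\hat{\bbeta})\setminus \bG(\bbeta)|$, I will bound the quantity inside $E_3$ as in \eqref{ineq2}:
\[
\frac4n\Big(\B\epsilon^\top \tfrac{\B X(\hat{\bbeta}-\bbeta)}{\|\B X(\hat{\bbeta}-\bbeta)\|_2}\Big)^2-\lambda\hat K
\le \max_{0\le k\le q-\tilde g}\Big[\frac4n\max_{\substack{G\subseteq[q]\\ |G|=k+\tilde g}}\ \sup_{\|\B v\|_2=1}\big((\B\Phi_{G}^\top\B\epsilon)^\top\B v\big)^2-\lambda k\Big].
\]
Here $\B\Phi_G$ is an orthonormal basis of the column span of $\B X_{\cup_{j\in G}\cI_j}$, which has at most $(k+\tilde g)p_{\max}$ columns. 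The right-hand side no longer depends on $\bbeta$ beyond $\tilde g$, so it also bounds the supremum over $g(\bbeta)=\tilde g$.

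For a fixed $G$, $\B\Phi_G^\top\B\epsilon\sim\cN(\B 0,\sigma^2\B I)$ with dimension at most $(k+\tilde g)p_{\max}$. The $\epsilon$-net bound used in \eqref{epsnet} then gives probability at most
\[
2\exp\Big(-\frac{t+n\lambda k}{a\sigma^2}\Big)\,6^{(k+\tilde g)p_{\max}}.
\]
I will take a union bound over the at most $\binom{q}{k+\tilde g}\le (eq)^{k+\tilde g}$ choices of $G$ and sum over $k$. The lower bound $\lambda\ge c_\lambda\sigma^2p_{\max}\log(eq)/n$ gives $n\lambda k/(a\sigma^2)\ge (c_\lambda/a)kp_{\max}\log(eq)$. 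For $c_\lambda/a$ large, this absorbs the $k$-dependent terms $k\log(eq)+kp_{\max}\log 6$ with room to spare, leaving a geometric series in $(eq)^{-k}$ that sums to at most $2$. The remaining $\tilde g$-dependent terms are $\tilde g\log(eq)+\tilde g p_{\max}\log 6$.

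The main obstacle is matching the stated exponent $\tilde g p_{\max}\log(eq)$ exactly. The crude bound gives $\tilde g p_{\max}\log 6+\tilde g\log(eq)\le \tilde g p_{\max}(\log 6+\log(eq))$, which is of the right order but has a larger constant when $q$ is small. I would handle this in one of two ways. The first is to absorb the extra constant into $a$ by rescaling $t$: since $\log 6\lesssim\log(eq)$ always, the bound becomes $\tilde g p_{\max}\log(eq)$ up to a universal factor. The second is to use a net argument with a constant adapted to $\log(eq)$, via $6^{d}\le (eq)^{d\log 6}$. Either way the result is the claimed inequality with possibly adjusted universal constants, which suffices for the downstream use analogous to Lemma~\ref{lemmab.2-new}.
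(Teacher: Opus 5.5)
Your proposal is correct and follows the paper's proof essentially step for step. You fix $\bbeta$, use $\hat K\ge \hat g\ge|\bG(\hat\bbeta)\setminus\bG(\bbeta)|$ to trade $-\lambda\hat K$ for $-\lambda k$, and reduce to a maximum over group sets of size $\tilde g+k$ whose spans have dimension at most $(\tilde g+k)p_{\max}$. You then apply the $\epsilon$-net bound and take a union bound over groups and over $k$.

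On the constant mismatch, which the paper glosses over, your first fix is the one that works. Write the crude bound as $4e^{Cy}$ with $y=-t/(Ca\sigma^2)+\tilde g p_{\max}\log(eq)$ and universal $C\ge 1$. Then $4e^{Cy}\le 4e^{y}$ when $y\le 0$, and for $y>0$ the bound $4e^{y}>1$ holds trivially, so replacing $a$ by $Ca$ gives the stated exponent exactly. Your second fix, $6^d\le (eq)^{d\log 6}$, still leaves a factor larger than one in front of $\tilde g p_{\max}\log(eq)$, so on its own it does not reach the stated form.
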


\begin{proof}[Proof of Lemma~\ref{thm2-lem1}]
Fix $\B\beta$ such that $\tilde g=g(\bbeta)$ and let $\hat{g}=g(\hat\bbeta)$. For any $\mathbb{G}\subseteq[q]$, let us define 
$$S_{\mathbb{G}} = \{i\in\cI_j: j\in\mathbb{G}\}.$$ 
Under this notation, $\B{X}(\hat{\B{\beta}}-\B{\beta})$ lies in the column span of $\B X_{{S}_{\mathbb{G}(\hat\bbeta)}\cup {S}_{\mathbb{G}(\bbeta)}}$.
Then, following the same steps leading to~\eqref{ineq2}, we obtain: 
    \begin{align}
   \frac{4}{n} \left(\B{\epsilon}^\top\frac{\B{X}(\hat{\B{\beta}}-\B{\beta})}{\|\B{X}(\hat{\B{\beta}}-\B{\beta})\|_2}\right)^2 -\lambda \hat g & = \frac{4}{n} \left(\B{\epsilon}^\top\frac{\B{X}(\hat{\B{\beta}}-\B{\beta})}{\|\B{X}(\hat{\B{\beta}}-\B{\beta})\|_2}\right)^2 -\lambda |\bG(\hat\bbeta)| \nonumber\\
   & \leq \frac{4}{n} \left(\B{\epsilon}^\top\frac{\B{X}(\hat{\B{\beta}}-\B{\beta})}{\|\B{X}(\hat{\B{\beta}}-\B{\beta})\|_2}\right)^2 -\lambda |\bG(\hat\bbeta)\setminus \bG(\bbeta)| \nonumber \\
   & \leq \max_{0\leq k\leq q-\tilde g}\left[\frac{4}{n}\max_{\substack{\mathbb{G}\subseteq[q]\\|\mathbb{G}|=\tilde g+k}}\sup_{\substack { \|\B{v}\|_2=1}}\left((\B{\Phi}_{S_{\mathbb{G}}}^\top\B{\epsilon})^\top\B{v}\right)^2 -\lambda k\right].\label{ineq2_2-old}
\end{align}
Hence,
\begin{equation}
       \frac{4}{n} \sup_{g(\bbeta)=\tilde g}\left(\B{\epsilon}^\top\frac{\B{X}(\hat{\B{\beta}}-\B{\beta})}{\|\B{X}(\hat{\B{\beta}}-\B{\beta})\|_2}\right)^2 -\lambda \hat g \leq \max_{0\leq k\leq q-\tilde g}\left[\frac{4}{n}\max_{\substack{\mathbb{G}\subseteq[q]\\|\mathbb{G}|=\tilde g+k}}\sup_{\substack { \|\B{v}\|_2=1}}\left((\B{\Phi}_{S_{\mathbb{G}}}^\top\B{\epsilon})^\top\B{v}\right)^2 -\lambda k\right].\label{ineq2_2}
\end{equation}

Additionally, as $\hat{K}\geq \hat{g}$, 
    \begin{align}
   \frac{4}{n} \sup_{g(\B\beta)=\tilde g}\left(\B{\epsilon}^\top\frac{\B{X}(\hat{\B{\beta}}-\B{\beta})}{\|\B{X}(\hat{\B{\beta}}-\B{\beta})\|_2}\right)^2 -\lambda \hat K
    \leq \frac{4}{n} \sup_{g(\B\beta)=\tilde g}\left(\B{\epsilon}^\top\frac{\B{X}(\hat{\B{\beta}}-\B{\beta})}{\|\B{X}(\hat{\B{\beta}}-\B{\beta})\|_2}\right)^2 -\lambda \hat g.\label{ineq2_3}
\end{align}

By union bound over $\bG$ and $k$ in~\eqref{ineq2_2} (similar to the one leading to~\eqref{after-g-size}), and using the fact 
$|S_{\mathbb{G}}|\leq  p_{\max}|\bG| $
we obtain 
$$     \p\left( \frac{4}{n}\sup_{g(\bbeta)=\tilde g}\left(\B{\epsilon}^\top\frac{\B{X}(\hat{\B{\beta}}-\B{\beta})}{\|\B{X}(\hat{\B{\beta}}-\B{\beta})\|_2}\right)^2 -\lambda  \hat{K}>\frac{t}{n}\right)
         \stackrel{}{\leq}   4\exp\left(-\frac{t}{a\sigma^2} + \tilde{g}p_{\max}\log (eq)\right).$$
\end{proof}

\begin{lemma}\label{lemma-c2-new}
     For $q\geq \tilde g\geq 0$ define
    \begin{equation}
        E_4(\tilde g)=\left\{ \frac{1}{n}\|\mathbf{f}^*-\B{X}\hat{\B{\beta}}\|_2^2  \gtrsim {\inf_{g(\bbeta)=\tilde g}  \left[\frac{1}{n}\|\mathbf{f}^*-\B X\B\beta\|_2^2 + \lambda_0 \|\bbeta\|_0+ \lambda (K(\bbeta)\lor 1)\right]}\right\}
    \end{equation}
    with a sufficiently large multiplicative constant. Then, under the assumptions of Theorem~\ref{estimation_thm2} with $\lambda\geq c_{\lambda}\sigma^2 p_{\max} \log(eq)/n$,
    \begin{equation}
        \p(E_4(\tilde g))\leq 4\exp(-10(\tilde g \lor 1)p_{\max}\log(eq)). 
    \end{equation}
\end{lemma}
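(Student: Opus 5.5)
The proof follows Lemma~\ref{lemmab.2-new}, with Lemma~\ref{thm2-lem1} replacing Lemma~\ref{thm1-lem1} and the group count $g(\bbeta)$ replacing the sparsity level. Fix any $\bbeta$ with $g(\bbeta)=\tilde g$ and set $\B\omega=\bff^*-\bX\bbeta$. As in the proof of Theorem~\ref{estimation_thm2}, we work with the sparsest representation of $\hat\bbeta$, so every categorical predictor has a zero cluster under $\hat\bbeta$. Then the clustering penalty of $\hat\bbeta$ equals $q+\hat K$, and that of $\bbeta$ is at most $q+K(\bbeta)$.

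Optimality of $\hat\bbeta$ and feasibility of $\bbeta$ give
\[
\tfrac1n\|\B y-\bX\hat\bbeta\|_2^2+\lambda\hat K+\lambda_0\|\hat\bbeta\|_0\le \tfrac1n\|\B y-\bX\bbeta\|_2^2+\lambda K(\bbeta)+\lambda_0\|\bbeta\|_0 .
\]
Substitute $\B y=\bX\bbeta+\B\epsilon+\B\omega$ and reuse steps (a)--(c) leading to \eqref{thm-helper1.4}. Since $\lambda_0\|\hat\bbeta\|_0\ge 0$, that term can be dropped from the left. This yields
\[
\tfrac1{2n}\|\bX(\bbeta-\hat\bbeta)\|_2^2\le \Big[\tfrac4n\Big(\B\epsilon^\top\tfrac{\bX(\hat\bbeta-\bbeta)}{\|\bX(\hat\bbeta-\bbeta)\|_2}\Big)^2-\lambda\hat K\Big]+\lambda K(\bbeta)+\lambda_0\|\bbeta\|_0+\tfrac4n\|\B\omega\|_2^2 .
\]

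Next apply Lemma~\ref{thm2-lem1} with $t=c_t\sigma^2(\tilde g\vee1)p_{\max}\log(eq)$ for a large constant $c_t$. On the complement of $E_3(t,\tilde g)$, which has probability at least $1-4\exp(-10(\tilde g\vee1)p_{\max}\log(eq))$, the bracketed term is at most $c_t\sigma^2(\tilde g\vee 1)p_{\max}\log(eq)/n$. Because the event is a supremum over all $\bbeta$ with $g(\bbeta)=\tilde g$, this holds uniformly, so we may later take the infimum over $\bbeta$.

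The main obstacle is absorbing this stochastic term into $\lambda(K(\bbeta)\vee1)$. Each active group of $\bbeta$ carries at least one nonzero cluster, so $K(\bbeta)\ge\tilde g$, and hence $K(\bbeta)\vee1\ge\tilde g\vee1$. Combined with $\lambda\ge c_\lambda\sigma^2p_{\max}\log(eq)/n$ and $c_\lambda\gtrsim c_t$, this gives
\[
\sigma^2(\tilde g\vee1)p_{\max}\log(eq)/n\lesssim\lambda(K(\bbeta)\vee1).
\]
It is this step that forces the $\vee1$ in the bound, since $\tilde g=0$ still costs one unit of $\lambda$.

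Finally, the triangle inequality gives $\|\bff^*-\bX\hat\bbeta\|_2^2\lesssim\|\B\omega\|_2^2+\|\bX(\bbeta-\hat\bbeta)\|_2^2$. Taking the infimum over $\bbeta$ with $g(\bbeta)=\tilde g$ then shows that $E_4(\tilde g)$ fails on the complement of $E_3(t,\tilde g)$, which is the claimed probability bound. The intercept is handled, as before, by appending a column of ones to $\bX$.
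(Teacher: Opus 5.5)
Your proposal is correct and follows essentially the same route as the paper. Both use the basic inequality \eqref{thm-helper1.4} under the zero-cluster normalisation of $\hat\bbeta$, apply Lemma~\ref{thm2-lem1} with $t\asymp\sigma^2(\tilde g\vee1)p_{\max}\log(eq)$ to control the stochastic term against $\lambda\hat K$, and then use $K(\bbeta)\ge\tilde g$ with the lower bound on $\lambda$ to absorb it into $\lambda(K(\bbeta)\vee1)$; you simply make explicit a few steps the paper leaves implicit, such as dropping $\lambda_0\|\hat\bbeta\|_0$ and obtaining uniformity over $\bbeta$ from the supremum in $E_3$.
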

\begin{proof}
Take
   $$t\gtrsim \sigma^2 (\tilde{g} \lor 1)p_{\max}\log (eq) $$
   in Lemma~\ref{thm2-lem1}. Then, we have that
   \begin{equation}
       \frac{4}{n}{\sup_{g(\bbeta)=\tilde g}}\left(\B{\epsilon}^\top\frac{\B{X}(\hat{\B{\beta}}-\B{\beta})}{\|\B{X}(\hat{\B{\beta}}-\B{\beta})\|_2}\right)^2 -\lambda  \hat{K} \lesssim \frac{\sigma^2 (\tilde g\lor 1)p_{\max}\log (eq)}{n}.\label{lemmanewc2-helper}
   \end{equation}
   with probability greater than
      \begin{equation}
       1-4\exp(-10(\tilde g\lor 1)p_{\max}\log (eq)).
   \end{equation}
   Fix $\bbeta$ such that $g(\bbeta)=\tilde g$. Noting that $\tilde g\lesssim K(\bbeta)$, from~\eqref{lemmanewc2-helper} we obtain
   \begin{equation}\label{lemmanewc2-helper2}
       \frac{4}{n}\left(\B{\epsilon}^\top\frac{\B{X}(\hat{\B{\beta}}-\B{\beta})}{\|\B{X}(\hat{\B{\beta}}-\B{\beta})\|_2}\right)^2 -\lambda  \hat{K} \lesssim \frac{\sigma^2 (K(\B\beta)\lor 1)p_{\max}\log (eq)}{n}.
   \end{equation}
Under $\lambda\geq c_{\lambda}\sigma^2 p_{\max} \log(eq)/n$, note that without loss of generality we can assume $0\in\{\hat{\beta}_i:i\in\cI_j\}$ for $j\in[q]$ because of the intercept, and therefore,~\eqref{thm-helper1.4} holds. This completes the proof with~\eqref{lemmanewc2-helper2}.
\end{proof}

\begin{proof}[Proof of Theorem~\ref{estimation_thm2}]
Note that when the lower-bound on~$\lambda_0$ holds, the theorem is a direct result of Theorem~\ref{estimation_thm}.  Under lower-bound on~$\lambda$, the proof follows from Lemma~\ref{lemma-c2-new}. In particular,
\begin{align}
    \p\left(\bigcap_{\tilde g=0}^q E_4^c(\tilde g)\right) & \geq 1 - \sum_{\tilde g=0}^q \p(E_4(\tilde g)) \nonumber\\
    & \geq 1- \sum_{\tilde g=0}^q 4\exp(-10(\tilde g\lor 1)p_{\max}\log(eq)) \nonumber \\
    & \geq 1-8(eq)^{-9p_{\max}}.\label{thm1.5-0-prob}
\end{align}
Therefore, by union bound over the two possible cases, i.e., with the lower-bound on~$\lambda_0$ or the lower-bound on~$\lambda$, the theorem holds with probability at least
      \begin{equation}\label{thm2new-prob}
       1-8(ep)^{-9}-8(eq)^{-9p_{\max}}.
   \end{equation}
\end{proof}

\section{Proofs of the Cluster Recovery Results (Section~\ref{sec:clusterrecovery})}
\label{sec.proofs}

\subsection{Preliminaries}
\label{sec.prf.prelim}

To formalise our results, we first define the clustering pattern of $\B\beta\in\R^p$.

\begin{definition}[Clustering Pattern]\label{cg-def}
    We write $\cG$ or $\cG(\B{\beta})$ for the clustering pattern of $\B\beta$, which we define as the collection of non-empty subsets of $[p]$ such that: \\
\noindent\textbf{(a)} If $G\in\cG(\B{\beta})$, then $\beta_j=\beta_k$ for all $j,k\in G$.\\
\noindent\textbf{(b)}  If $G\in\cG(\B{\beta})$, then there exists $j\in [q]$ such that $G\subseteq\cI_j$.\\
\noindent\textbf{(c)} If $G_1,G_2\in\cG(\B{\beta})$ and $G_1,G_2\subseteq \cI_j$, then  $\beta_{k_1}\neq \beta_{k_2}$ for all $k_1\in G_1$ and $k_2\in G_2$.\\
\noindent\textbf{(d)} $\cup_{G\in\cG(\B{\beta})}G=[p]$.
\end{definition}

In Definition~\ref{cg-def}, condition (a) means that the $\bbeta$-coefficients in each cluster are equal; (b) means that the coefficients in each cluster correspond to the same categorical variable; (c) means that the clusters are maximal; and (d) means that the clustering is exhaustive.

\begin{definition}
\label{DefinitionB} Given $\mathcal{G}=\mathcal{G}(\bbeta)$ with $\bbeta\in\mathbb{R}^p$, we define~$r({\mathcal{G}})$ as the \textit{smallest possible} number of reassignments in stage S1 of the following general procedure for converting~$\mathcal{G}$ into~$\mathcal{G^*}$.
\end{definition}
We note that the conversion procedure and the associated quantities depend on both~$\cG$ and~$\bbeta$. However, to simplify the expressions we suppress the dependence on~$\bbeta$ in our notation.

The three stages of the procedure (S1 - S3) are implemented in sequence. The procedure starts with clustering pattern~$\mathcal{G}$ and a partition into the ``zero'' and the ``nonzero'' clusters (made on the basis of the $\bbeta$-coefficients) and ends with clustering pattern~$\mathcal{G}^*$. Furthermore, the zero and nonzero clusters at the end of the procedure exactly match the corresponding true clusters determined by~$\cG^*$.   Even though the description below does not specify a unique procedure, the quantity~$r({\mathcal{G}})$ is well-defined. 
\begin{enumerate}

\item[S1.] \textit{Reassignment}: change the cluster assignment for some of the variables (increasing the number of clusters is allowed).

\item[S2.]  \textit{Zeroing-out}: Convert some of the nonzero clusters into zero clusters.

\item[S3.]
\textit{Merging}: merge some of the nonzero clusters.
\end{enumerate}

We will write~$\widehat{r}$ for~$r(\widehat{\mathcal{G}})$. There may exist many procedures described in Definition~\ref{DefinitionB} such that the number of stage S1 reassignments is equal to~$\widehat{r}$. For the rest Section~\ref{sec.proofs}, we focus on one such procedure.

We note that it is possible to increase the number of merges in stage~S3 of the procedure without increasing the number of reassignments in stage~S1; this can be achieved by replacing a reassignment to an existing cluster with a reassignment to a new cluster coupled with a merge. We will avoid this situation by focusing on the procedure with~$\widehat{r}$ reassignments and the \textit{smallest} possible number of merges in stage~S3.

Similarly, we could increase the number of coefficients zeroed-out in stage~S2 without increasing the number of reassignments in stage~S1; this can be achieved by replacing a zeroing-out action with a reassignment to a new nonzero cluster coupled with a conversion into a zero cluster. We will avoid this situation by focusing on the procedure with~$\widehat{r}$ reassignments and the \textit{smallest} possible number of coefficients zeroed out in stage~S2.

We let~$\widehat{\bbeta}_j$ and~${\bbeta}^*_j$ be the sub-vectors of~$\widehat{\bbeta}$ and~${\bbeta}^*$, respectively, corresponding to the $j$-th categorical predictor.
We define $\widehat{r}_j$ as the number of reassignments corresponding to the $j$-th predictor that are implemented  in stage S1 of the procedure.  Note that $\widehat{r}=\sum_{j=1}^{q}\widehat{r}_j$. Similarly, we let~$\widehat{t}_j$ denote the number of the $\widehat{\bbeta}_j$-coefficients zeroed-out in stage S2 and define $\widehat{t}=\sum_{j=1}^{q}\widehat{t}_j$. We write~$\widehat{m}_j$ for the corresponding number of stage S3 merges and let $\widehat{m}=\sum_{j=1}^{q}\widehat{m}_j$. We also let $\widehat{s}=\|\widehat{\bbeta}\|_0$.

We define matrix $\B{X}_{\cG}\in\R^{n\times |\cG|}$ so that each of its columns corresponds to a distinct cluster $G\in\cG$ and is given by
$\sum_{i\in G}\B{X}_{i}$.  We also define $\bP_{\bX_{\mathcal{G}}}$ as an orthonormal matrix for projecting onto the column span of~$\B{X}_{\cG}$, and write~$\bI_n$ for the $n\times n$ identity matrix.

We will need the following two lemmas, which  are proved in Section~\ref{sec:prf.lemmas}.
\begin{lemma}
\label{max.ineq.lem1}
Let $\widehat{w}=w(\widehat{\cG})$ and suppose that Assumption~\ref{assumptiona} holds. Then, inequalities
\begin{enumerate}

\item[(i)] \; $\frac{\bepsilon^{\top}}{\sigma}(\bP_{\bX_{\widehat{\mathcal{G}}}}-\bP_{\bX_{\mathcal{G}^*}})\frac{\bepsilon}{\sigma}\lesssim \big[(\widehat{r}+\widehat{t})\log(p)+\widehat{m}c^*\log(K^*\vee 2)\big]\wedge\big[(\widehat{s}+s^*)\log(p)\big]+\log(1/\epsilon_0)$ \; and\\

\item[(ii)] \, $\big|\frac{\bepsilon^{\top}}{\sigma}(\bI_n-\bP_{\bX_{\widehat{\mathcal{G}}}})\bff^*\big|\lesssim \sqrt{\big[(\widehat{r}+\widehat{t})\log(p)+\widehat{m}c^*\log(K^*\vee 2)\big]\wedge\big[(\widehat{s}+s^*)\log(p)\big]+\log(1/\epsilon_0)}\sqrt{n\widehat{w}}$

\end{enumerate}
hold uniformly over $\epsilon_0\in(0,1/2)$, with probability at least $1-\epsilon_0$.
\end{lemma}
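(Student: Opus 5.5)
The plan is to treat both bounds as suprema of Gaussian quadratic and linear forms over a random family of subspaces. Each bound is then controlled by a chi-square tail estimate, together with a union bound whose cost is the log-cardinality of the family of clustering patterns that $\widehat{\cG}$ could be. The two terms inside the minimum come from two different ways of counting that family.

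\textbf{Reduction to a fixed pattern.} First fix a deterministic pattern $\cG$, with associated numbers $(r,t,m,s)$ defined as for $\widehat{\cG}$. The key structural observation is that $\bP_{\bX_{\cG}}-\bP_{\bX_{\cG^*}}$ is bounded above by $\bP_{\bX_{\cG}}-\bP_{V}$, where $V=\mathrm{span}(\bX_{\cG})\cap\mathrm{span}(\bX_{\cG^*})$. Hence
\[
\bepsilon^\top(\bP_{\bX_{\cG}}-\bP_{\bX_{\cG^*}})\bepsilon\le\|\bP_{U}\bepsilon\|_2^2,
\]
where $U$ is the orthogonal complement of $V$ inside $\mathrm{span}(\bX_{\cG})$. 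We then bound $\dim U$ in two ways.
\begin{itemize}
\item Every column of $\bX_{\cG}$ lies in the span of the columns indexed by $S(\bbeta)\cup S(\bbeta^*)$ plus the intercept, so $\dim U\lesssim s+s^*$.
\item Using the S1--S3 procedure: undoing the $r$ reassignments and $t$ zeroings changes the column span by at most $r+t$ dimensions. After these are undone, the resulting pattern is a refinement of $\cG^*$ obtained by splitting true clusters, and splitting adds exactly $m$ dimensions. So $\dim U\le r+t+m$.
\end{itemize}
Since $\|\bP_U\bepsilon\|_2^2/\sigma^2\sim\chi^2_{\dim U}$, for a fixed $\cG$ it exceeds $\dim U+C\log(1/\delta)$ only with probability at most $\delta$.

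\textbf{Counting patterns (the main obstacle).} We must count how many patterns share given values of $(r,t,m)$, or of $s$, and choose $\delta$ to absorb the count.
\begin{itemize}
\item For the $\widehat{s}$ bound: choose a support of size $s$ (at most $p^{s}$ ways), then choose a partition of it into clusters within each predictor (at most $s^{s}\le p^{s}$ ways). This gives about $e^{Cs\log p}$ patterns.
\item For the second bound: each reassignment or zeroing is specified by a variable and a destination, costing $\log p$ each. Each merge in S3 corresponds to splitting one true nonzero cluster, of size $\asymp c^*$ by Assumption~\ref{assumptiona}, into pieces. Choosing which of the $K^*$ clusters is split, and how, costs about $\log K^* + c^*\log 2$, which is $\lesssim c^*\log(K^*\vee2)$ per merge.
\end{itemize}
Getting this merge count right, and keeping the minimal-merge convention so the count is not inflated, is the delicate step. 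We then take $\delta=\epsilon_0\exp(-C[\text{log-count}])$ and union-bound over all patterns and all values of $(r,t,m,s)$; the sum converges geometrically. This replaces $\dim U$ by the log-count, which dominates it, and yields (i) uniformly.

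\textbf{Part (ii).} Write $(\bI_n-\bP_{\bX_{\cG}})\bff^*=\bff^*-\bff_{\cG}$. For any $\bff$ in the column space of $\bX_{\cG}$ together with the intercept, this vector has norm at most $\|\bff^*-\bff\|_2$. Taking the infimum over $\bff$ with $\cG_{\bff}=\cG$, after a limiting argument that handles patterns whose values coincide, gives norm at most $\sqrt{n\,w(\cG)}$.

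This vector also lies in $\mathrm{span}(\bX_{\cG^*}\cup\bX_{\cG})\ominus\mathrm{span}(\bX_{\cG})$, because $\bff^*$ is in the span of $\bX_{\cG^*}$. That subspace has the same dimension bounds and the same counting as in (i). So $\bepsilon^\top$ applied to the unit vector in this direction is a fixed Gaussian for each $\cG$. We bound it by $\sigma\sqrt{2\log(1/\delta)}$ and apply the same union bound, which gives (ii).
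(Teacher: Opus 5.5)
Your proof is correct and uses the same machinery as the paper. Both arguments rely on chi-square tails for projections of $\bepsilon$ onto fixed subspaces, the S1--S3 procedure to control dimensions, and a union bound in which each split costs $\log K^*+c^*\log 2$. For (ii), both use the fixed unit vector in the direction of $(\bI_n-\bP_{\bX_{\mathcal G}})\bff^*$.

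The real difference is how the quadratic form in (i) is dominated. You use the single subspace $U(\mathcal G)=\mathrm{span}(\bX_{\mathcal G})\ominus\big(\mathrm{span}(\bX_{\mathcal G})\cap\mathrm{span}(\bX_{\mathcal G^*})\big)$, whose dimension is at most $r+t+m$ and at most $s+s^*$. The paper instead telescopes through the refinement $\tilde{\mathcal G}$ reached after S1--S2. It bounds the form by $\big(\|\bP_{\bX_{\tilde{\mathcal G}\cup\mathcal I}}\bepsilon\|^2-\|\bP_{\bX_{\tilde{\mathcal G}}}\bepsilon\|^2\big)+\big(\|\bP_{\bX_{\tilde{\mathcal G}}}\bepsilon\|^2-\|\bP_{\bX_{\mathcal G^*}}\bepsilon\|^2\big)$. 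For the sparse version it uses the increment from $\mathrm{span}(\bX_{\mathcal G^*})$ to the joint-support span.

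What each choice buys:
\begin{itemize}
\item The paper's dominating subspaces depend only on $(\tilde{\mathcal G},\mathcal I)$ or on the support. So its union bound in (i) never records where reassigned variables originally sat, or how the support splits into clusters.
\item Your $U(\mathcal G)$ depends on the whole pattern, so you must count patterns themselves. You do this, at an extra $O(\log p)$ per reassigned, zeroed, or support coordinate, which leaves the rate unchanged.
\item In exchange your argument is slightly cleaner: one chi-square variable per pattern yields both terms of the minimum. Your fuller pattern count is also exactly what (ii) requires, since there $\widehat{\balpha}$ depends on $\widehat{\mathcal G}$ itself.
\end{itemize}

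Two minor points. First, in (ii) you only need $\|(\bI_n-\bP_{\bX_{\mathcal G}})\bff^*\|_2^2\le n\,w(\mathcal G)$. This is immediate because every $\bff$ with $\mathcal G_{\bff}=\mathcal G$ lies in $\mathrm{span}(\bX_{\mathcal G})$, so no limiting argument is required. Second, the minimal-merge convention plays no role in the validity of this lemma: any fixed procedure per pattern gives a valid count. It matters later, in the proof of Theorem~\ref{clust.recov.thm}, to obtain $\widehat m_j\le \widehat L_j-L_j^*$.
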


\begin{lemma}
\label{lem.BA}
Suppose that Assumption~\ref{assumptiona} holds. Then, $\mathcal{E}(\cG)\gtrsim r(\cG)$.
\end{lemma}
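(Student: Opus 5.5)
The plan is to bound $r(\cG)$ from above by exhibiting one explicit conversion procedure (S1--S3) whose number of stage~S1 reassignments is at most a constant multiple of $\mathcal{E}(\cG)$. Since $r(\cG)$ is the minimum over all such procedures, this gives $\mathcal{E}(\cG)\gtrsim r(\cG)$. The construction is done separately for each predictor $k\in[q]$, because both $\mathcal{E}$ and $r$ decompose as sums over predictors.

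\textbf{Step 1: purify each cluster (stage S1).} Fix $k$. For each cluster $A\in\cG$ with $A\subseteq\cI_k$, let $b_A$ be a majority true label, i.e.\ a maximiser of $|\{l\in A:\beta^*_l=b\}|$.
\begin{itemize}
\item The members $l\in A$ with $\beta^*_l\neq b_A$ are exactly the ones counted by the contribution of $A$ to $\mathcal{E}(\cG)$.
\item Reassign each such $l$ to a cluster of its own true label: to the estimated zero cluster if $\beta^*_l=0$ and that cluster is labelled $0$, and otherwise to a newly created cluster, one per true value. Creating new clusters is allowed in S1.
\item The number of reassignments is exactly the contribution of predictor $k$ to $\mathcal{E}(\cG)$, and after this stage every cluster is pure with respect to $\cG^*$.
\end{itemize}

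\textbf{Step 2: zero out and merge (stages S2 and S3, free of charge).}
\begin{itemize}
\item In S2, convert every pure nonzero cluster whose true label is $0$ into a zero cluster, absorbing it into the zero cluster.
\item In S3, merge all pure nonzero clusters sharing the same nonzero true label.
\item The result is exactly $\cG^*$, with the zero and nonzero clusters matching the true ones, provided the estimated zero cluster $A_0$ of predictor $k$ carries true label $b_{A_0}=0$ (or is empty after Step~1). In that case $r(\cG)\le\mathcal{E}(\cG)$ with constant one.
\end{itemize}

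\textbf{Step 3 (main obstacle): a zero cluster with nonzero majority label.} Suppose $b_{A_0}=b\neq0$. A zero cluster cannot be turned into a nonzero one in S2--S3, so its $b$-members must also be reassigned in S1. This costs at most $|G^*_b|\asymp c^*$ by Assumption~\ref{assumptiona}, and this cost must be charged to $\mathcal{E}(\cG)$.

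The plan is to use that in the sparsest representation the true zero cluster $G^*_0$ is the largest true cluster, so $|G^*_0|\ge|G^*_b|\gtrsim c^*$. I would first try to show that, because of this, either:
\begin{itemize}
\item a constant fraction of $G^*_0$ lies as minority members in impure clusters, so that $\mathcal{E}$ restricted to predictor $k$ is already $\gtrsim c^*$; or
\item $G^*_0$ forms the majority of some nonzero estimated cluster.
\end{itemize}
In the second case the zero/nonzero roles are simply swapped. There I would exploit the freedom in the representation of $\cG$: since $r(\cG)$ is built from $\cG$ together with a representative $\bbeta$, and the remark after the definition of $\mathcal{E}$ says $\mathcal{E}$ depends only on $\cG$, I would choose the representative (shifting the coefficients of predictor $k$ by a constant, as in Section~\ref{sec:proposed.est}) so that the zero cluster is the one whose majority label is $0$. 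This returns us to Step~2.

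If this relabelling is not permitted, I would fall back on the first alternative and try to prove it directly via a counting argument on where the $\gtrsim c^*$ members of $G^*_0$ can sit. This is the step I expect to need the most care, and it is precisely where Assumption~\ref{assumptiona} enters.

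Summing over $k$ would then give $r(\cG)\lesssim\mathcal{E}(\cG)$.
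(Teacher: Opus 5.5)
Your Steps 1--2 are correct, and for the case they cover they are cleaner than the paper's argument. The definition of $r$ counts only S1 reassignments, so zeroing-out and merging cost nothing. Sending every minority member of every cluster of predictor $k$ to a fresh cluster of its own true label therefore costs exactly $\mathcal{E}_k$ reassignments. This gives $r_k\le\mathcal{E}_k$ with constant one, without using Assumption~\ref{assumptiona}, whenever the zero cluster $A_0$ of $\boldsymbol{\beta}_k$ is empty or has $0$ among its majority labels. That includes $S(\boldsymbol{\beta}_k)=S(\boldsymbol{\beta}^*_k)$, which is the only case the paper works out. There the paper uses a more elaborate procedure: it pre-merges clusters with a common majority, moves variables to the cluster where their label has the largest majority, and then splits off leftovers. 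It needs $c_{\max}/c_{\min}$ to pay for the majority variables this procedure moves needlessly.

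The gap is your Step 3, and neither of your routes closes it. The relabelling is not available. The paper states that the conversion procedure, and hence $r$, depends on the representative $\boldsymbol{\beta}$, which fixes which cluster is the zero cluster. In Theorem~\ref{clust.recov.thm} the lemma is applied to $\widehat{\boldsymbol{\beta}}$ itself, whose $\ell_0$ norm is tied to the same procedure through $\widehat{s}-s^*\ge\widehat{t}-\widehat{r}$; shifting predictor $k$ changes $r$, $t$ and the sparsity at once. The fallback is false, as this example shows:
\begin{itemize}
\item Take $q=1$ with $|G^*_0|=c+1$ and $|G^*_b|=c=c^*$, and fix $l_0\in G^*_0$.
\item Let $\boldsymbol{\beta}$ be $0$ on $A_0=G^*_b\cup\{l_0\}$ and $\gamma\neq 0$ on $A_1=G^*_0\setminus\{l_0\}$.
\item Assumption~\ref{assumptiona} holds, and $\boldsymbol{\beta}$ is the unique sparsest representative of its pattern.
\item Only one member of $G^*_0$ is a minority, and $\mathcal{E}(\mathcal{G})=1$.
\item All $c^*$ members of $G^*_b$ lie in the zero cluster. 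Stages S2 and S3 never remove a variable from the zero cluster, so each must be reassigned in S1, giving $r(\mathcal{G})=c^*$.
\end{itemize}
Hence $r/\mathcal{E}$ is unbounded, and in your Step 3 scenario no counting argument can rescue the inequality for a fixed representative.

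Two statements are provable instead. One is the restricted statement from your Steps 1--2. The other is the full statement with $r$ minimised over representatives of $\mathcal{G}$. In that version your relabelling works: designate a cluster with zero majority as the zero cluster, or make all coefficients of predictor $k$ nonzero if no such cluster exists, and you get $r\le\mathcal{E}$. That version, however, forces a re-check of how $\widehat{s}$ is used in the proof of Theorem~\ref{clust.recov.thm}. The paper's proof does not supply the missing idea either. Its general case is a one-sentence appeal to ``analogous arguments'', and the example shows that reassignments out of the zero cluster are not controlled by $\mathcal{E}$.
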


We note that Lemma~\ref{lem.BA} and Assumption~\ref{assumptiona} imply 
\begin{equation}
\label{mu.bnd2}
w(\cG)\gtrsim \frac{r(\cG)\sigma^2\log(p)}{n}~~\text{for all~} \kappa s^*\text{-sparse\ clustering patterns~} \cG.
\end{equation}
Moreover, the multiplicative constant in the above inequality can be made arbitrarily large by increasing the corresponding multiplicative constant in bound~(\ref{assumptionA.bnd}) of Assumption~\ref{assumptiona}. 

\subsection{Proof of Theorem~\ref{clust.recov.thm}}
\begin{proof}[Proof of Theorem~\ref{clust.recov.thm}]
To simplify the exposition, we suppose that every nonzero cluster in $\mathcal{G}^*$ has the same size: $c^*$. The general case follows with minor modifications. To simplify the expressions we also assume, without loss of generality, that $\sigma=1$.

For the rest of the proof, we restrict our attention to the high-probability event on which the inequalities in Lemma~\ref{max.ineq.lem1} are satisfied, with~$\epsilon_0$ specified at the end.

We write~$L^*_j$ for the number of true clusters (including the zero cluster) for the $j$-th predictor and let $L^*=\sum_{j=1}^{q} L^*_j$. We define~$\widehat{L}_j$ and~$\widehat{L}$ similarly.

Given a clustering pattern~$\cG$ we define
\begin{equation*}
  \CR_{\cG}(\bY)=  \min_{\alpha\in\mathbb{R}, \bgamma\in\mathbb{R}^{|\cG|}} \,\frac{1}{n}\|\bY-\alpha\B{1}-\bX_{\cG}\gamma\|_2^2= \frac{1}{n}\bY^{\top}(\bI_n-\bP_{\bX_{\cG}})\bY,
\end{equation*}
and observe that 
$$
\CR_{\widehat{\mathcal{G}}}(\bY)-\CR_{\cG^*}(\bY)=\widehat{w} +\frac2n\bepsilon^{\top}(\bI_n-\bP_{\bX_{\widehat{\mathcal{G}}}})\bff^* +
\frac1n\bepsilon^{\top}(\bP_{\bX_{\mathcal{G}^*}}-\bP_{\bX_{\widehat{\mathcal{G}}}})\bepsilon.
$$
Using Lemma~\ref{max.ineq.lem1} to bound the last two terms, and noting that
\begin{equation}
\label{main.ineq.th1.prf}
\CR_{\widehat{\mathcal{G}}}(\bY)-\CR_{\cG^*}(\bY)+\lambda(\widehat{L}-L^*)+\lambda_0(\widehat{s}-s^*)\le0,
\end{equation}
we derive

\begin{align}
\widehat{w}+\lambda(\widehat{L}-L^*)+\lambda_0(\widehat{s}-s^*) \; \lesssim & 
\sqrt{\frac{[\widehat{r}+\widehat{t}]\log(p)+\widehat{m}c^*\log(K^*\vee 2)-\log(\epsilon_0)}{n}} \sqrt{\widehat{w}}
\nonumber \\
\nonumber\\
& + \frac{[\widehat{r}+\widehat{t}]\log(p)+\widehat{m}c^*\log(K^*\vee 2)-\log(\epsilon_0)}{n}, \nonumber
\end{align}
which implies
\begin{equation}
\label{prof.th1.ineq3}
\widehat{w}+\lambda\widehat{L}- \lambda L^* +\lambda_0\widehat{s}-\lambda_0s^* \lesssim  \frac{[\widehat{r}+\widehat{t}]\log(p)+\widehat{m}c^*\log(K^*\vee 2)-\log(\epsilon_0)}{n}.
\end{equation}

We note that the conversion of zero coefficients to nonzero happens only in stage~S1 of the procedure, while the conversion of nonzero coefficients to zero happens only in stage~S2. Consequently,
$$
\|\widehat{\bbeta}_j\|_0-\|\bbeta^*_j\|_0\ge \widehat{t}_j-\widehat{r}_j \quad \text{for all}\;j\in[q],\;\text{and hence} \quad \widehat{s}-s^*\ge \widehat{t}-\widehat{r}.
$$
It follows that by taking the universal constant in the lower bound for~$\lambda_0$ sufficiently large, we can cancel the $\widehat{t}$ term on the right-hand side of~\eqref{prof.th1.ineq3} and still have a $\lambda_0\widehat{t}/2$ term on the left-hand side. Thus,
we can rewrite inequality~\eqref{prof.th1.ineq3} as
\begin{equation*}
\widehat{w}+\lambda\big(\widehat{L}- L^*\big) +\frac{\lambda_0\widehat{t}}{2} \lesssim  \frac{\widehat{r}\log(p)+\widehat{m}c^*\log(K^*\vee 2)-\log(\epsilon_0)}{n}.
\end{equation*}
In other words,
\begin{align}
\widehat{w}+\sum_{j:\,\widehat{L}_j\ge L^*_j}\lambda\big(\widehat{L}_j- L^*_j\big) +\sum_{j:\,\widehat{L}_j < L^*_j}\lambda\big(\widehat{L}_j- L^*_j\big) +\frac{\lambda_0\widehat{t}}{2}\nonumber\\
 \le \tilde{a}\frac{[\widehat{r}\log(p)+\widehat{m}c^*\log(K^*\vee 2)-\log(\epsilon_0)]}{n}
 \label{T1.prf.main.mu.ineq} 
 \end{align}
for some positive constant~$\tilde{a}$.

We now analyze the second and third terms on the left-hand side of inequality~(\ref{T1.prf.main.mu.ineq}) separately.

\textbf{Second term}:  $\sum_{j:\,\widehat{L}_j\ge L^*_j}\lambda\big(\widehat{L}_j- L^*_j\big)$. Because we chose a procedure with~$\widehat{r}$ reassignments but the smallest possible number of merges, we have $\widehat{m}_j \le \widehat{L}_j- L^*_j$ for all~$j$ with $\widehat{L}_j\ge L^*_j$.
Thus, letting $\tilde{b}=2a$, where~$a$ is the multiplicative constant in the lower bound on~$\lambda$, we derive
\begin{eqnarray}
\sum_{j:\,\widehat{L}_j\ge L^*_j}\lambda\big(\widehat{L}_j- L^*_j\big) &\ge& \lambda \widehat{m}\nonumber\\
&\ge& \frac{\lambda \widehat{m}}{2} + \tilde{b} \frac{\widehat{m}c^*\log(K^*\vee 2)}{n}.
\label{T1.prf.main.mu.ineq2}
\end{eqnarray}

\bigskip

\textbf{Third term}: $\sum_{j:\,\widehat{L}_j < L^*_j}\lambda\big(\widehat{L}_j- L^*_j\big)$.
Note that $\widehat{r}_j\ge c^*(L^*_j - \widehat{L}_j)$, because for each additional true cluster, there are~$c^*$ coefficients that need to be reassigned to a different cluster to get from~$\widehat{\mathcal{G}}_j$ to~$\mathcal{G}^*_j$. Consequently, recalling that~$b$ be the multiplicative constant in the upper bound on~$\lambda$ specified in the statement of the theorem, we derive
\begin{eqnarray}
\sum_{j:\,\widehat{L}_j < L^*_j}\lambda\big(\widehat{L}_j- L^*_j\big) &\ge&
 - b\frac{c^*\log(p)}{n}\sum_{j:\,\widehat{L}_j < L^*_j}\big(\widehat{L}_j- L^*_j\big)\nonumber\\
   &\ge& - b\frac{\widehat{r}\log(p)}{n}.
\label{T1.prf.main.mu.ineq3}
\end{eqnarray}

\bigskip

\textbf{Combining the terms together} via inequalities~(\ref{T1.prf.main.mu.ineq}), (\ref{T1.prf.main.mu.ineq2}) and~(\ref{T1.prf.main.mu.ineq3}), we arrive at
\begin{equation}
\label{prof.th1.ineq4}
\widehat{w}+\frac{\lambda\widehat{m}}{2}+\frac{\lambda_0\widehat{t}}{2} \lesssim  \frac{\widehat{r}\log(p)-\log(\epsilon_0)}{n}.
\end{equation}

Coming back to~(\ref{main.ineq.th1.prf}), and applying Lemma~\ref{max.ineq.lem1} again, this time using the $(\widehat{s}+s^*)\log(p)$ term in the bounds, we can rewrite inequality~(\ref{prof.th1.ineq3}) as follows:
\begin{equation*}
\lambda_0\widehat{s}-\lambda_0s^* \lesssim  \frac{(\widehat{s}+s^*)\log(p)-\log(\epsilon_0)}{n}.
\end{equation*}
We can ensure that $\widehat{s}\le \kappa s^*$ by using a sufficiently large multiplicative factor in the lower bound $\lambda_0\gtrsim \log(p)/n$ and setting $\epsilon_0=p^{-\tilde{c}}$ for a sufficiently small positive~$\tilde{c}$.
Consequently, we have $\widehat{w}\gtrsim  \frac{\widehat{r}\log(p)}{n}$ by inequality~(\ref{mu.bnd2}), and the multiplicative constant can be made arbitrarily large by increasing its counterpart in bound~(\ref{assumptionA.bnd}) . Thus, if the multiplicative constant in~(\ref{assumptionA.bnd}) is sufficiently large, inequality~(\ref{prof.th1.ineq4}) yields
$$
\frac{\widehat{r}\log(p)}{n}+\frac{\lambda\widehat{m}}{2}+\frac{\lambda_0\widehat{t}}{2} \lesssim  \frac{\log(1/\epsilon_0)}{n}.
$$
Let $\lambda_l$ denote the smallest possible value of~$\lambda$. If we take~$\epsilon_0=\max\{\exp(-cn\lambda_l),p^{-\tilde{c}}\}$ for a sufficiently small positive~$\tilde{c}$, we can rewrite the above inequality as
$$
\frac{\widehat{r}\log(p)}{n}+\frac{\lambda\widehat{m}}{2}+\frac{\lambda_0\widehat{t}}{2}\le  \min\Big\{\frac{\log(p)}{2n},\frac{\lambda_0}{4},\frac{\lambda_l}{4}\Big\},
$$
implying~$\widehat{r}=0$, $\widehat{m}=0$ and $\widehat{t}=0$ with probability at least $1-\epsilon_0$.

If $\lambda\asymp c^*\log(p)/n$, then $\lambda_l\gtrsim \log(p)/n$ and $\epsilon_0\le p^{-c}$ for some $c>0$.\\
On the other hand, if
$\lambda\gtrsim c^*\log(K^*\vee 2)/n$, then $\epsilon_0\le (K^*\vee 2)^{-bc^*}\vee p^{-c}$.
\end{proof}

\subsection{Proof of Theorem~\ref{th.BA}}
\label{sec:prf.thm.suff}

\begin{proof}[Proof of Theorem~\ref{th.BA}]

Consider a $\kappa s^*$-sparse clustering pattern~$\mathcal{G}$. Let $\bff$ be the prediction vector such that $w(\cG)=(1/n)\|\bff^*-\bff\|^2$, and note that $\bff\in\mathcal{F}(\kappa s^*)$. Let~$\tilde{\bbeta}$ and $\tilde{\bbeta}^*$ be the vectors from the statement of Theorem~\ref{th.BA}, such that 
\begin{equation}
\label{cond.BA.repeat}
\frac{\|\bff^*-\bff\|_2}{\|\tilde{\bbeta}^*-\tilde{\bbeta}\|_2}\gtrsim \sqrt{n^*},
\end{equation}
which leads to 
\begin{equation}
\label{lwr.bnd.mu}
w(\mathcal{G})\gtrsim \frac{n^*}{n}\|\tilde{\bbeta}-\tilde{\bbeta}^*\|^2.
\end{equation}

Breaking $\mathcal{E}(\cG)$ down by the categorical predictors  and by the values of $\tilde{\beta}_j$ , we write
$$\mathcal{E}(\cG)=\sum_{k=1}^{q}\sum_{a\in\{\tilde{\beta}_j:\,j\in\cI_k\}}\mathcal{E}_{k,a},$$
where
$$
\mathcal{E}_{k,a}=\big|\{l\in\cI_k:\, {\tilde{\beta}}_l=a\}\big|-\max_{b\in\{\tilde{\beta}_j^*:\,j\in\cI_k\}}\big|\{l\in\cI_k:\, {\tilde{\beta}}_l=a,\,\tilde{\beta}^*_l=b\}\big|.
$$
Let~$\tilde{\bbeta}_{k,a}$ and~$\tilde{\bbeta}^*_{k,a}$ be the sub-vectors of~$\tilde{\bbeta}$ and~$\tilde{\bbeta}^*$, respectively, indexed by the set $\{l\in\cI_k:\, {\tilde{\beta}}_l=a\}$. We will now establish the following lower bound for every~$k$ and~$a$:
\begin{equation}
\label{lwr.bnd.minority}
\|\tilde{\bbeta}_{k,a}-\tilde{\bbeta}^*_{k,a}\|^2\ge \mathcal{E}_{k,a}\Delta_{\text{min}}^2/4.
\end{equation}

Define~$B_a$ as the set of \textit{distinct} elements in $\{{\beta}^*_j:\,j\in\cI_k, \tilde{\beta}_j=a\}$, and note that all the elements of~$B_a$ are separated by at least~$\Delta_{\text{min}}$. Consequently, there exists at most \textit{one} element~$b$ in the set~$B_a$ that satisfies $|b-a|<\Delta_{\text{min}}/2$. If such an element exists, we denote it by~$b_a$, otherwise we can set~$b_a$ equal to an arbitrary element of~$B_a$. Because all elements of~$\tilde{\bbeta}_{k,a}$ are equal to~$a$, we then have
\begin{eqnarray*}
\|\tilde{\bbeta}_{k,a}-\tilde{\bbeta}^*_{k,a}\|^2&\ge& \frac{\Delta_{\text{min}}^2}{4}\big( |\{l\in\cI_k:\, {\tilde{\beta}}_l=a\}|-|\{l\in\cI_k:\, {\tilde{\beta}}_l=a,\,\tilde{\beta}^*_l=b_a\}| \big)\\
&\ge& \frac{\Delta_{\text{min}}^2}{4}\mathcal{E}_{k,a},
\end{eqnarray*}
and hence we have derived inequality~(\ref{lwr.bnd.minority}). Aggregating both sides of this inequality over all $a\in\{\tilde{\beta}_j:\,j\in\cI_k\}$ and $k\in[q]$, we arrive at
\begin{equation}
\label{lwr.bnd.minority2}
\|\tilde{\bbeta}-\tilde{\bbeta}^*\|^2\ge \mathcal{E}(\cG)\Delta_{\text{min}}^2/4.
\end{equation}
The conclusion of Theorem~\ref{th.BA} follows directly from~(\ref{lwr.bnd.mu}) and~(\ref{lwr.bnd.minority2}).
\end{proof}

\subsection{Proof of Lemma~\ref{max.ineq.lem1} and Lemma~\ref{lem.BA}}
\label{sec:prf.lemmas}

\begin{proof}[Proof of Lemma~\ref{max.ineq.lem1}]

To simplify the exposition, we again suppose that every nonzero cluster in $\mathcal{G}^*$ has the same size: $c^*$. The general case follows with minor modifications. To simplify the expressions we also assume, without loss of generality, that $\sigma=1$.

In what follows, we establish that each inequality in Lemma~\ref{max.ineq.lem1} holds with probability at least $1-\epsilon_0/2$.
We conduct most of our arguments for a general clustering pattern $\cG=\cG(\bbeta)$ rather than $\widehat{\cG}=\cG(\widehat{\bbeta})$ and establish the probability inequalities  uniformly over~$\cG$.
We consider a procedure for converting of~$\mathcal{G}$ into~$\mathcal{G}^*$ according to Definition~\ref{DefinitionB}. We focus on a procedure such that the number of reassignments in stage~S1 is $r(\mathcal{G})$, while both the number of merges in stage~S3 and the number of coefficients zeroed-out in stage~S2 are chosen as small as possible, as described in the beginning of Section~\ref{sec.proofs}. 

\textbf{Inequality (i) of Lemma~\ref{max.ineq.lem1}}. We write~$t(\mathcal{G})$ for the number of coefficients zeroed-out in stage~S2 of the procedure. We define~$\tilde{\mathcal{G}}$ for the collection of nonzero clusters after stages~S1 and~S2 of the procedure are completed. Note that we can recover~$\mathcal{G}^*$ by merging some (or none) of the clusters in~$\tilde{\mathcal{G}}$. We write~$\mathcal{I}$ for the index set of the coefficients that get reassigned and zeroed out in the stages~S1 and~S2 of the procedure. We observe that $|\mathcal{I}|=r(\mathcal{G})+t(\mathcal{G})$,  $m(\mathcal{G})=|\tilde{\mathcal{G}}|-|\mathcal{G}|$, and
$$
\text{span}(\bX_{\mathcal{G}})\subseteq\text{span}(\bX_{\tilde{\mathcal{G}}},\bX_{\mathcal{I}}).
$$
Consequently,
\begin{eqnarray}
\label{eps.proj.reform}
\bepsilon^{\top}(\bP_{\bX_{\mathcal{G}}}-\bP_{\bX_{\mathcal{G}^*}})\bepsilon &=& \|\bP_{\bX_{\mathcal{G}}}\bepsilon\|^2-\|\bP_{\bX_{\mathcal{G}^*}}\bepsilon\|^2 \nonumber \\
&=& \|\bP_{\bX_{\mathcal{G}}}\bepsilon\|^2-\|\bP_{\bX_{\tilde{\mathcal{G}}}}\bepsilon\|^2+\|\bP_{\bX_{\tilde{\mathcal{G}}}}\bepsilon\|^2-\|\bP_{\bX_{\mathcal{G}^*}}\bepsilon\|^2 \nonumber \\
&\le& \Big(\|\bP_{\bX_{\tilde{\mathcal{G}}\cup\mathcal{I}}}\bepsilon\|^2-\|\bP_{\bX_{\tilde{\mathcal{G}}}}\bepsilon\|^2\Big)+\Big(\|\bP_{\bX_{\tilde{\mathcal{G}}}}\bepsilon\|^2-\|\bP_{\bX_{\mathcal{G}^*}}\bepsilon\|^2\Big),
\end{eqnarray}
where we write $\bP_{\bX_{\tilde{\mathcal{G}}\cup\mathcal{I}}}$ for the orthogonal projection onto $\text{span}(\bX_{\tilde{\mathcal{G}}},\bX_{\mathcal{I}})$. We will deal with the two terms on the right hand side of inequality~(\ref{eps.proj.reform}) separately.

\medskip

\textbf{Term 1}: $\|\bP_{\bX_{\tilde{\mathcal{G}}\cup\mathcal{I}}}\bepsilon\|^2-\|\bP_{\bX_{\tilde{\mathcal{G}}}}\bepsilon\|^2$.

Let $\mathcal{A}$ be the orthogonal complement of the space spanned by the columns of $\bX_{\tilde{\mathcal{G}}}$ within the space spanned by the columns of $\bX_{\tilde{\mathcal{G}}\cup\mathcal{I}}$ and note that
$$
\|\bP_{\bX_{\tilde{\mathcal{G}}\cup\mathcal{I}}}\bepsilon\|^2-\|\bP_{\bX_{\tilde{\mathcal{G}}}}\bepsilon\|^2=\|\bP_{\mathcal{A}}\bepsilon\|^2.
$$

Suppose that $r(\mathcal{G})+t(\mathcal{G})=d$ for some fixed~$d$, which implies $|\mathcal{I}|=d$. Suppose also that $m(\mathcal{G})=m$ and both~$\tilde{\mathcal{G}}$ and~$\mathcal{I}$ are fixed. Then, the space~$\mathcal{A}$ is fixed as well, and its dimension is at most~$d$. Hence, random variable $\|\bP_{\mathcal{A}}\bepsilon\|^2$ has a Chi-square distribution with at most~$d$ degrees of freedom. When $d=0$, we have $\|\bP_{\mathcal{A}}\bepsilon\|=0$ with probability one. When $d\ge 1$, standard Chi-square tail bounds imply that
\begin{equation}
\label{eps.proj.bnd}
\|\bP_{\mathcal{A}}\bepsilon\|^2\lesssim d(1+a)
\end{equation}
with probability at least $1-\exp(-2da)$, for each positive~$a$.

We now extend the above bound to all possible $\tilde{\mathcal{G}}$ and~$\mathcal{I}$ for which $r(\mathcal{G})+t(\mathcal{G})=d$ and $m(\mathcal{G})=m$, focusing on the case $d\ge1$. The total number of such sets~$\mathcal{I}$ is $p \choose d$, which is bounded above by $(ep/d)^d$. We note that each~$\tilde{\mathcal{G}}$ can be obtained by~$m$ successive binary splits of the~$K^*$ nonzero clusters in~$\mathcal{G}^*$. Thus, the total number of relevant clustering patterns~$\tilde{\mathcal{G}}$ is bounded above by $(K^*)^m (2^{c^*})^m$, where $(K^*)^m$ corresponds to the sequences of the nonzero $\mathcal{G}^*$-clusters involved in the~$m$ splits, and $2^{c^*}$ corresponds to the binary splits of a nonzero $\mathcal{G}^*$-cluster or its subset. Applying the union bound, we deduce that inequality~(\ref{eps.proj.bnd}) holds with probability at least $1-\exp\big(-2da+d\log(ep/d)+mc^*\log(K^*\vee 2)\big)$, uniformly over $\tilde{\mathcal{G}}$ and~$\mathcal{I}$ for which $r(\mathcal{G})+t(\mathcal{G})=d$ and $m(\mathcal{G})=m$. Taking $a=\log(ep/d)+d^{-1}mc^*\log(K^*\vee 2)+[2d]^{-1}\log(16/\epsilon_0)$ for a small positive~$\epsilon_0<1/2$, we derive that
\begin{equation*}
\max_{\mathcal{G}:\,r(\mathcal{G})=d,\,m(\mathcal{G})=m}\|\bP_{\mathcal{A}}\bepsilon\|^2\lesssim d\log(p)+mc^*\log(K^*\vee 2)+\log(1/\epsilon_0)
\end{equation*}
with probability at least $1-(d/[ep])^d\big(1/[K^*\vee 2]\big)^{mc^*}(\epsilon_0/16)$.

Applying the union bound over the different values of $d=1,..., p$ and $m=0,..., p$, and recalling that $\|\bP_{\mathcal{A}}\bepsilon\|\equiv0$ when $d=0$, we then conclude that
\begin{equation}
\label{Term1.bnd}
\|\bP_{\bX_{\tilde{\mathcal{G}}\cup\mathcal{I}}}\bepsilon\|^2-\|\bP_{\bX_{\tilde{\mathcal{G}}}}\bepsilon\|^2 = \|\bP_{\mathcal{A}}\bepsilon\|^2\lesssim [r(\mathcal{G})+t(\mathcal{G})]\log(p)+m(\mathcal{G})c^*\log(K^*\vee 2)+\log(1/\epsilon_0)
\end{equation}
uniformly over~$\mathcal{G}$, with probability at least
$$
1-\sum_{m=0}^p\sum_{q=1}^p \big(q/[ep]\big)^q\big([1/K^*]\wedge [1/2]\big)^{mc^*}(\epsilon_0/16) \ge 1- \sum_{q=1}^{\infty} (1/e)^q\,\sum_{m=0}^{\infty}(1/2)^m\,(\epsilon_0/16) \ge 1-\epsilon_0/8.
$$

\smallskip

\textbf{Term 2}: $\|\bP_{\bX_{\tilde{\mathcal{G}}}}\bepsilon\|^2-\|\bP_{\bX_{\mathcal{G}^*}}\bepsilon\|^2$.

Suppose that $m(\mathcal{G})=m$ and recall that we can obtain~$\tilde{\mathcal{G}}$ from~$\mathcal{G}^*$ by~$m$ successive binary splits of the nonzero clusters in~$\mathcal{G}^*$.

Term 2 can be handled similarly to Term 1 by (a) noting that, for a fixed~$\tilde{\mathcal{G}}$, random variable $\|\bP_{\bX_{\tilde{\mathcal{G}}}}\bepsilon\|^2-\|\bP_{\bX_{\mathcal{G}^*}}\bepsilon\|^2$ can be bounded above by a Chi-square with at most $mc^*$ degrees of freedom; (b) using the previously derived bound on the total number of possible~$\tilde{\mathcal{G}}$ corresponding to $m(\mathcal{G})=m$; and (c) applying the union bound twice. We can conclude that
\begin{equation}
\label{Term2.bnd}
\|\bP_{\bX_{\tilde{\mathcal{G}}}}\bepsilon\|^2-\|\bP_{\bX_{\mathcal{G}^*}}\bepsilon\|^2\lesssim m(\mathcal{G})c^*\log(K^*\vee 2)+\log(1/\epsilon_0)
\end{equation}
uniformly over~$\mathcal{G}$, with probability at least $1-\epsilon_0/8$.

Consequently, combining bounds~(\ref{Term1.bnd}) and~(\ref{Term2.bnd}) for Terms~1 and~2 together with inequality~(\ref{eps.proj.reform}), we deduce that
\begin{equation}
\label{term1.final.bnd1}
\bepsilon^{\top}(\bP_{\bX_{\widehat{\mathcal{G}}}}-\bP_{\bX_{\mathcal{G}^*}})\bepsilon\lesssim (\widehat{r}+\widehat{t})\log(p)+\widehat{m}c^*\log(K^*\vee 2)+\log(1/\epsilon_0)
\end{equation}
holds with probability at least $1-\epsilon_0/4$.

We let~$\mathcal{J}$ denote the index set of the joint support of~$\bbeta$ and~$\bbeta^*$, i.e., $\mathcal{J}=S(\bbeta)\cup S(\bbeta^*)$. Then, revisiting inequality~(\ref{eps.proj.reform}), we note that
\begin{equation}
\label{eps.proj.reform2}
\bepsilon^{\top}(\bP_{\bX_{\mathcal{G}}}-\bP_{\bX_{\mathcal{G}^*}})\bepsilon \le \|\bP_{\bX_{\mathcal{G}\cup\mathcal{J}}}\bepsilon\|^2-\|\bP_{\bX_{\mathcal{G}^*}}\bepsilon\|^2.
\end{equation}
Let~$s=\|\bbeta\|_0$. We can now follow the argument we used for Term~1 by (a) noting that, for a fixed~$\mathcal{J}$, random variable $\|\bP_{\bX_{\mathcal{G}\cup\mathcal{J}}}\bepsilon\|^2-\|\bP_{\bX_{\mathcal{G}^*}}\bepsilon\|^2$ can be bounded above by a Chi-square with at most $s+s^*$ degrees of freedom; (b) bounding the total number of possible~$\mathcal{J}$ by $(ep/s)^s$; and (c) applying the union bound twice. We conclude that
\begin{equation}
\label{term1.final.bnd2}
\bepsilon^{\top}(\bP_{\bX_{\widehat{\mathcal{G}}}}-\bP_{\bX_{\mathcal{G}^*}})\bepsilon\lesssim (\widehat{s}+s^*)\log(p)+\log(1/\epsilon_0)
\end{equation}
holds with probability at least $1-\epsilon_0/4$.

Combining bounds~(\ref{term1.final.bnd1}) and~(\ref{term1.final.bnd2}), we deduce that inequality (i) of Lemma~\ref{max.ineq.lem1} holds with probability at least $1-\epsilon_0/2$.

\medskip

\textbf{Inequality (ii) of Lemma~\ref{max.ineq.lem1}}. We let
$$
\widehat{\balpha}=\frac{(\bI_n-\bP_{\bX_{\widehat{\mathcal{G}}}})\bff^*}{\|(\bI_n-\bP_{\bX_{\widehat{\mathcal{G}}}})\bff^*\|}
$$
and note that $\|(\bI_n-\bP_{\bX_{\widehat{\mathcal{G}}}})\bff^*\|^2=n\widehat{w}$. It follows that
\begin{equation*}
|\bepsilon^{\top}(\bI_n-\bP_{\bX_{\widehat{\mathcal{G}}}})\bff^*|\le |\bepsilon^{\top}\widehat\balpha| \sqrt{n\widehat{w}}.
\end{equation*}
We will consider vectors~$\balpha$ of the form
$$
\balpha=\frac{(\bI_n-\bP_{\bX_{\mathcal{G}}})\bff^*}{\|(\bI_n-\bP_{\bX_{\mathcal{G}}})\bff^*\|}
$$
and uniformly bound $|\bepsilon^{\top}\balpha|$ over clustering patterns~$\mathcal{G}$.

If~$\mathcal{G}$ is fixed, then~$\balpha$ is fixed as well, and random variable $|\bepsilon^{\top}\balpha|^2$ has a Chi-square distribution with one degree of freedom, satisfying a chi-square tail bound. From the Term~1 part in the proof of Lemma~\ref{max.ineq.lem1} we have the following bound on the total number of clustering patterns~$\mathcal{G}$ with $r(\mathcal{G})+t(\mathcal{G})=d$ and $m(\mathcal{G})=m$ for fixed~$d$ and~$m$:
$$
\left(\frac{ep}{d}\right)^d \times \left(\big[K^*\big] \big[2^{c^*}\big]\right)^m.
$$
Proceeding as in the Term~1 part in the proof of Lemma~\ref{max.ineq.lem1}, and repeating the corresponding argument with minor modifications, we deduce that
\begin{equation*}
|\bepsilon^{\top}\widehat\balpha|\lesssim \sqrt{\big(\widehat{r}+\widehat{t}\big)\log(p)+\widehat{m}c^*\log(K^*\vee 2)+\log(1/\epsilon_0)}
\end{equation*}
and, hence,
\begin{equation}
\label{lem1.ineq2.fin.bnd1}
|\bepsilon^{\top}(\bI_n-\bP_{\bX_{\widehat{\mathcal{G}}}})\bff^*|\le  \sqrt{\big(\widehat{r}+\widehat{t}\big)\log(p)+\widehat{m}c^*\log(K^*\vee 2)+\log(1/\epsilon_0)} \sqrt{n\widehat{w}}
\end{equation}
with probability at least $1-\epsilon_0/4$.

Following the argument below inequality~(\ref{term1.final.bnd1}), we also derive 
\begin{equation}
\label{lem1.ineq2.fin.bnd2}
|\bepsilon^{\top}(\bI_n-\bP_{\bX_{\widehat{\mathcal{G}}}})\bff^*|\le  \sqrt{\big(\widehat{s}+s^*\big)\log(p)+\log(1/\epsilon_0)} \sqrt{n\widehat{w}}
\end{equation}
with probability at least $1-\epsilon_0/4$.

Combining bounds~(\ref{lem1.ineq2.fin.bnd1}) and~(\ref{lem1.ineq2.fin.bnd1}), we conclude that inequality (ii) of Lemma~\ref{max.ineq.lem1} holds with probability at least $1-\epsilon_0/2$.
\end{proof}

\begin{proof}[Proof of Lemma~\ref{lem.BA}]

We will show that $\mathcal{E}_j\gtrsim r_j$ uniformly over $j\in[q]$. Recall that we break $\mathcal{E}(\cG)$ down by the categorical predictors, writing $\mathcal{E}(\cG)=\sum_{j=1}^{q}\mathcal{E}_j$,  where
$$
\mathcal{E}_j=\sum_{a\in\{\tilde{\beta}_j:\,j\in\cI_j\}} \Big(\big|\{l\in\cI_j: {\tilde{\beta}}_l=a\}\big|-\max_{b\in\{\tilde{\beta}_j^*:\,j\in\cI_j\}}\big|\{l\in\cI_j: {\tilde{\beta}}_l=a,\,\tilde{\beta}^*_l=b\}\big|\Big).
$$

To simplify the exposition, we first focus on the case $S(\bbeta_j)=S(\bbeta^*_j)$, where~$\bbeta_j$ and~$\bbeta^*_j$ are the sub-vectors of~${\bbeta}$ and~${\bbeta}^*$, respectively, corresponding to the $j$-th categorical predictor. Afterwards, we discuss the extension to the general case.

For concreteness, suppose that $\cG=\cG(\bbeta)$ and $\cG^*=\cG(\bbeta^*)$.
Let~$\mathcal{G}_j$ and~$\mathcal{G}^*_j$ be the subsets of~$\mathcal{G}$ and~$\mathcal{G}^*$, respectively, corresponding to the $j$-th categorical predictor. Consider the following procedure for converting $\mathcal{G}_j$ into $\mathcal{G}^*_j$. Note that Stage~2 is not required in the special case that we focus on, as the zero clusters in~$\cG$ match the ones in~$\cG^*$ and hence we will ignore them. For convenience, we reverse the order of stages~S1 and~S3 in the description below. However, we can easily represent this procedure in the form of Definition~\ref{DefinitionB} without changing the counts of merges and reassignments.
\begin{itemize}

\item[(A)] If $|\mathcal{G}_j|>|\mathcal{G}^*_j|$, we start by merging two (nonzero) clusters of $\mathcal{G}_j$ that have the same (nonzero) $\mathcal{G}_j^*$-cluster in the majority (note that we can do this when $|\mathcal{G}_j|>|\mathcal{G}^*_j|$); we continue this merging process until $|\mathcal{G}_j|=|\mathcal{G}^*_j|$. Note that the number of variables in the minority, i.e., $\mathcal{E}_j$, does not change after these merges. Thus, we can focus on the case the case $|\mathcal{G}_j|\le|\mathcal{G}^*_j|$ for the remainder of the proof.

\item[(B)] $|\mathcal{G}_j|\le|\mathcal{G}^*_j|$. \textbf{First} step: we reassign each variable into the $\mathcal{G}_j$-cluster where the $\mathcal{G}_j^*$-cluster of this variable is in the majority (if several such $\mathcal{G}_j$-clusters exist, then we choose the one with the largest majority). \textbf{Second} step: we repeat this process until no further reassignments are possible. Note that the first two steps may result in some empty clusters.
    With the final (\textbf{third}) step, we take all the variables that remain in the minority and reassign them into new clusters (either empty or completely new), separating the variables by their $\mathcal{G}_j^*$-cluster membership.

\end{itemize}

For illustration, consider the following example. Suppose that~$\mathcal{G}_j^*$ has four clusters which we refer to as ``$a$'',``$b$'',``$c$'', and ``$d$''. Suppose that~$\mathcal{G}_j$ also has four clusters, with the following $\mathcal{G}_j^*$-cluster membership:
\begin{equation}
\label{example.start.alloc}
T\mathcal{G}_j: \qquad \{\textbf{4a},2b\},\,\{{3a},2c,1b\},\,\{{2d},1c\},\,\{\textbf{3d},1b,1c\}.
\end{equation}
The number in front of each letter refers to the number of variables with the corresponding $\mathcal{G}_j^*$-cluster membership. For example, the first cluster in~$\mathcal{G}_j^*$ has $4$ variables in the``$a$'' cluster of~$\mathcal{G}_j^*$ and~$2$ variables in the ``$b$'' cluster. Note that we used the {\bf{bold}} font for the variables in the (largest) majority with respect to the $\mathcal{G}_j^*$-cluster membership, i.e., the variables that will attract other variables from the same $\mathcal{G}_j^*$-cluster (if any) at the next step of the procedure.

Clustering pattern~(\ref{example.start.alloc}) is the starting point of our procedure for converting~$\mathcal{G}_j$ into~$\mathcal{G}_j^*$.  The two steps in part~(B) of the procedure  yield the following result:
$$
~~~~~~~~~~~~~~~\text{After step 1:} \qquad \{\textbf{7a},2b\},\,\{\textbf{2c},1b\},\,\{{1c}\},\,\{\textbf{5d},1b,1c\}.
$$

$$
~~~~~~~\text{After step 2:} \qquad \{\textbf{7a},2b\},\,\{\textbf{4c},1b\},\,\emptyset,\,\{\textbf{5d},1b\}.
$$

$$
\text{After step 3:} \qquad \{\textbf{7a}\},\,\{\textbf{4c}\},\,\{\textbf{4b}\},\,\{\textbf{5d}\}.
$$

We denote the number of reassignments in the above procedure by~$\tilde{r}_j$, and observe that $\tilde{r}_j\ge r_j$ because, by definition, $r_j$ is the smallest possible number of reassignments.

Thus, it is only left to show that $\tilde{r}_j\lesssim \mathcal{E}_j$, which is done below.

{\bf Proving} $\mathbf{\tilde{r}_j}\boldsymbol{\lesssim \mathcal{E}_j}$. We make the following general observations about our conversion procedure: the variables that are originally in (any) majority with respect to the $\mathcal{G}_j^*$-cluster membership may get reassigned during step~1 but do not get reassigned afterwards. In the example, these are the ``$a$'' and the ``$d$'' variables, and they only get reassigned during step~1.

We note that we can bound~$\tilde{r}_j$ by the sum of~$\mathcal{E}_j$ and the number of variables (say, $\tilde{s}_j$) that are in the majority at the start of the procedure but still get reassigned during the first step.

Consider a cluster, say $G$, where the variables in the majority get reassigned during step~1. The number of such reassignments for cluster~$G$ is at most $c_{\text{max}}$, by which we denote the size of the largest true nonzero cluster. The variables that end up in~$G$ at the end of the procedure are either (a) already in (the minority of) cluster~$G$ at the start or (b) get reassigned after step~1 and hence are not in the majority in any cluster of~$\mathcal{G}_j$. Thus, all these variables are included in the~$\mathcal{E}_j$ count. Note that the number of such variables is at least $c_{\text{min}}$, by which we denote the size of the smallest true cluster.

Let's say there are~$N$ such clusters~$G$ at the start of the procedure. Then, according to the paragraph above, there are at most~$Nc_{\text{max}}$ majority variables that get reassigned during the procedure, however, there are also at least~$Nc_{\text{min}}$ minority variables. Consequently,
$$
 \tilde{r}_j\le \mathcal{E}_j+\tilde{s}_j \le \mathcal{E}_j + Nc_{\text{max}}\le  \mathcal{E}_j +  \frac{c_{\text{max}}}{c_{\text{min}}} Nc_{\text{min}} \le  \mathcal{E}_j +  \frac{c_{\text{max}}}{c_{\text{min}}} \mathcal{E}_j \lesssim \mathcal{E}_j.
$$
We conclude that $\tilde{r}_j\lesssim \mathcal{E}_j$ by noting that the multiplicative constant in the last inequality is determined by $c_{\text{max}}/c_{\text{min}}$ and, hence, is independent of~$j$.

The general case, where we may have $S(\bbeta_j)\ne S(\bbeta^*_j)$ follows by analogous arguments, accounting for the fact that some variables may need to be reassigned from and to the zero cluster. The main change to the proof occurs in part (A) of our conversion procedure, where we should not merge two $\mathcal{G}_j$-clusters that have the true zero cluster in the majority -- these clusters will be zeroed out in Stage~S2 of the original procedure. However, this change has no effect on the number of reassignments and, hence, is inconsequential.
\end{proof}

\section{Proof of Corollary~\ref{cor.univ.dense}}

\begin{proof}[Proof of Corollary~\ref{cor.univ.dense}]
    As we have shown after the statement of Theorem~\ref{th.BA}, in the univariate case condition~(\ref{cond.BA})  and (consequently) condition~(\ref{assumptionA.bnd}) are satisfied. Thus, we just need to check that Theorem~\ref{clust.recov.thm} holds when~$0\le\lambda_0\lesssim\log(p)/n$ and the sizes of \emph{all} the clusters are of order~$c^*$ (we note that the true zero clusters are at least as large as the corresponding nonzero clusters). We can verify this fact by repeating the proof of Theorem~\ref{clust.recov.thm} with the following simplifications:
    (a) remove stage S2 from the conversion procedure in Section~\ref{sec.prf.prelim}; (b) remove the word ``nonzero'' from stage S3 of the same procedure, thus allowing merges of any clusters; and (c)~set $\widehat{t}=0$ in the inequalities of Lemma~\ref{max.ineq.lem1}, and noting that Lemma~\ref{lem.BA} does not require sparsity and continues to hold under the changes in~(a) and~(b) as shown in the main part of its proof where we assume $S(\bbeta)=S(\bbeta^*)$.
 \end{proof}

\section{Proof of Proposition~\ref{prop.rand}}

\begin{proof}[Proof of Proposition~\ref{prop.rand}]
We consider two possible cases: $L\ge (4s^*)^2$ and $L< (4s^*)^2$. 

\textbf{Case 1:} $L\ge (4s^*)^2$.

Write $D_{1j},...,D_{Lj}$ for the dummy variables corresponding to predictor $j$, and let
 $$
\tilde{\B x} = \sqrt{L}\Big(\frac1{\sqrt{L}},D_{11},...,D_{L1},\ldots,D_{1q},...,D_{Lq}\Big)^\top.
 $$
Note that
\begin{align*}
&E[D_{kj}]=E[D_{kj}^2]=1/L & \text{for} & \; k\in[L], \; j\in[q]\\
&E[D_{kj}D_{lj}]=0 & \text{for}& \; k,l\in[L] \; \text{s.t.} \;  k\ne l, \; j\in[q]\\
&E[D_{kj}D_{lm}]=1/L^2 & \text{for}& \; k,l\in[L],\; j,m\in[q] \;\text{s.t.}\, j\ne m.
\end{align*}
Hence, letting $A=E[\tilde{\B x} \tilde{\B x}^\top]$, we conclude that $\text{diag}(A)=\B 1$; all the off-diagonal elements of~$A$ in both the first row and the first column equal $1/\sqrt{L}$; and all the other off-diagonal elements of~$A$ are either zero or $1/L$. Because $2s^*/\sqrt{L}\le 1/2$ under the Case 1 setting, we conclude that~$A^{1/2}$ satisfies the following condition:
\begin{equation}
\label{SE.cond}
\|A^{1/2}\balpha\|_2^2\ge (1/2)\|\balpha\|_2^2 \qquad \text{for all} \; \balpha\in\mathbb{R}^{p+1} \;\text{s.t.}\; \|\balpha\|_0\le 2s^*+1.
\end{equation}
Let~$\tilde{\B X}_{n\times (p+1)}$ be a random matrix whose rows are independent realisation of the vector~$\tilde{\B x}$.
By~(\ref{SE.cond}) and the random matrix theory results in, for example, \cite{lecue2017sparse}, condition
\begin{equation}
\label{SE.cond.sam}
\|\tilde{\B X}\bgamma\|_2^2\gtrsim n\|\bgamma\|_2^2 \qquad \text{for all} \; \bgamma\in\mathbb{R}^{p+1} \;\text{s.t.}\; \|\bgamma\|_0\le 2s^*+1
\end{equation}
holds with high probability when $n\gtrsim \log(p)$.

Let
 $$
 \B {x} = \Big(D_{11},...,D_{L1},\ldots,D_{1q},...,D_{Lq}\Big)^\top
 $$
and note that random matrix~$\B {X}_{n\times p}$ whose rows are independent realisation of the vector~$\B {x}$ is exactly the usual model matrix~$\B X$ in the random setting of the proposition. Applying~(\ref{SE.cond.sam}) and noting that $n=n^*L$, we derive
$$
\|\tau\B1+\B X\bbeta\|_2^2\gtrsim n\|\bbeta/\sqrt{L}\|_2^2 =n^*\|\bgamma\|_2^2 \qquad \text{for all} \; \tau \in\mathbb{R},\; \bbeta\in\mathbb{R}^{p} \;\text{s.t.}\; \|\bbeta\|_0\le 2s^*.
$$
The claim of the proposition follows directly.

\textbf{Case 2:} $L< (4s^*)^2$.

Note that both $s^*$ and~$L$ are bounded in this setting. We remove the dummy variable for one arbitrarily chosen (baseline) level for each categorical variable, and then center each random dummy variable by subtracting its expected value. The covariance matrix for the reduced set of dummy variables is block-diagonal, where each block is a well-behaved $(L-1)\times (L-1)$ matrix with the smallest eigen value bounded away from zero and the largest bounded above by one. 

Note that we can still represent the original linear predictions using the reduced set of (centered) dummy variables and the intercept. Coefficient vectors that were originally $s^*$-sparse become denser in the new representation, however, their $\ell_0$ norms remain uniformly bounded -- this is because categorical predictors that were originally excluded from the prediction (at least $q-s^*$ of them) remain excluded in the new representation and hence the $\ell_0$ norms of the new coefficient vectors are at most~$s^*L$, which is bounded in Case~2. Thus, arguing analogously to Case 1, we conclude that the claim of the proposition holds.
\end{proof}

\section{Other Proofs}

\subsection{Proof of Proposition~\ref{mip-prop}}
\begin{lemma}\label{mip-lemma}
    Let $\B\beta\in\R^p$. Then,
    $$|\{\beta_1,\cdots,\beta_p\}|=1+\sum_{i=2}^p \1(\beta_i\notin\{\beta_1,\cdots,\beta_{i-1}\}).$$
\end{lemma}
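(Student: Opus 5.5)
The plan is to count distinct values by scanning the coordinates in order and recording when each value appears for the first time. Let $V=\{\beta_1,\cdots,\beta_p\}$. For each $v\in V$, let $\iota(v)=\min\{i\in[p]:\beta_i=v\}$ be the index of its first occurrence. The map $\iota$ is injective, since $\beta_{\iota(v)}=v$ recovers $v$. Hence $|V|=|\iota(V)|$.

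The key step is to identify the image $\iota(V)$ explicitly. We claim that
\[
\iota(V)=\{i\in[p]:\beta_i\notin\{\beta_1,\cdots,\beta_{i-1}\}\},
\]
with the convention that for $i=1$ the set $\{\beta_1,\cdots,\beta_{0}\}$ is empty. We verify the two inclusions.
\begin{itemize}
\item Suppose $i=\iota(v)$ for some $v\in V$. Then $\beta_i=v$, and no earlier index carries the value $v$, so $\beta_i\notin\{\beta_1,\cdots,\beta_{i-1}\}$.
\item Conversely, suppose $\beta_i\notin\{\beta_1,\cdots,\beta_{i-1}\}$. Then $i$ is the smallest index at which the value $\beta_i$ occurs, so $i=\iota(\beta_i)$.
\end{itemize}

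It follows that $|V|=\sum_{i=1}^p \1(\beta_i\notin\{\beta_1,\cdots,\beta_{i-1}\})$. The $i=1$ term always equals $1$, because the condition is vacuous. Separating it from the sum gives the stated formula $|V|=1+\sum_{i=2}^p\1(\beta_i\notin\{\beta_1,\cdots,\beta_{i-1}\})$.

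Alternatively, one could argue by induction on $p$. The base case $p=1$ is immediate. For the inductive step, note that $|\{\beta_1,\cdots,\beta_p\}|$ exceeds $|\{\beta_1,\cdots,\beta_{p-1}\}|$ by exactly $\1(\beta_p\notin\{\beta_1,\cdots,\beta_{p-1}\})$.

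There is no real obstacle here. The only point needing care is the empty-prefix convention at $i=1$, which accounts for the leading term $1$ in the formula.
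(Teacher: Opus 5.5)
Your proof is correct and follows essentially the same route as the paper: both identify each distinct value with the index of its first occurrence and observe that the indicator equals $1$ exactly at those indices. Your explicit handling of the $i=1$ term via the empty-prefix convention is slightly cleaner than the paper's version, which implicitly labels the distinct values so that the first one occurs at index $1$.
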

\begin{proof}[Proof of Lemma~\ref{mip-lemma}]
Suppose $|\{\beta_1,\cdots,\beta_p\}|=K$. Let $\beta_{(1)},\cdots,\beta_{(K)}$ be the $K$ distinct values that $\B\beta$ takes, i.e. $\{\beta_1,\cdots,\beta_p\}=\{\beta_{(1)},\cdots,\beta_{(K)}\}$. Moreover, let $i_k$ for $k\in[K]$ be the first index where $\beta_{i_k}=\beta_{(k)}$.

Note that 
\begin{align}
  &  \sum_{i=2}^p \1(\beta_i\notin\{\beta_1,\cdots,\beta_{i-1}\})\nonumber \\
    =&\sum_{k=2}^K \underbrace{\1(\beta_{i_k}\notin\{\beta_1,\cdots,\beta_{i_k-1}\})}_{=1} + \sum_{\substack {i\in[p] \\ i\notin \{i_k\}}}\underbrace{ \1(\beta_i\notin\{\beta_1,\cdots,\beta_{i-1}\})}_{=0} = K-1.
\end{align}
Therefore,
$$|\{\beta_1,\cdots,\beta_p\}| = K = 1+\sum_{i=2}^p \1(\beta_i\notin\{\beta_1,\cdots,\beta_{i-1}).$$
\end{proof}

\begin{proof}[\textbf{{Proof of Proposition~\ref{mip-prop}}}.]
    The proof consists of two parts. \\
    
    \textbf{Part 1:} First, suppose $\alpha,\B\beta,\B{z},\B{l}$ is feasible for Problem~\eqref{clusteringproblem-mip} implying $\B\beta$ is feasible for~\eqref{clusteringproblem-reg}. We show that 
    $$\lambda_0\|\B\beta\|_0+\lambda \sum_{j=1}^q |\{\beta_i:i\in\cI_j\}|\leq \lambda_0\sum_{i=1}^p z_i+\lambda\sum_{i=1}^p l_i.$$
    Note that if $|\beta_i|>0$ for some $i\in[p]$, then we must have $z_i=1$ which shows $\sum_{i=1}^p z_i\geq \|\B\beta\|_0$. Next, fix $j\in[q]$. Note that 
    $$l_{s_j}\geq 1\Rightarrow l_{s_j}=1.$$
    For $i\in\{s_j+1,\cdots,s_j+p_j-1\}$ and $k\in\{s_j,\cdots,i-1\}$, if $\beta_k\neq \beta_i$, then $z^j_{i,k}=1$. Therefore, 
    if $\beta_i\notin\{\beta_{s_j},\cdots,\beta_{i-1}\}$, then $z^j_{i,k}=1$ for all $k=s_j,\cdots,i-1$, and we have:
    $$\sum_{k=s_j}^{i-1}z^j_{i,k}-(i-s_j-1)=i-s_j-(i-s_j-1)=1=\1(\beta_i\notin\{\beta_{s_j},\cdots,\beta_{i-1}\}).$$
    This implies $l_i\geq \1(\beta_i\notin\{\beta_{s_j},\cdots,\beta_{i-1}\})$ as $l_i\geq \sum_{k=s_j}^{i-1}z^j_{i,k}-(i-s_j-1)$ and $l_i\in\{0,1\}$. Hence,
    \begin{align}
        |\{\beta_i:i\in\cI_j\}| & \stackrel{(a)}{=} 1 + \sum_{i=s_{j}+1}^{s_j+p_j-1}\1(\beta_i\in\{\beta_{s_j}\cdots,\beta_{i-1}\})\nonumber \\
        & \leq \sum_{i=s_j}^{s_j+p_j-1}l_i.
    \end{align}
       where $(a)$ is by Lemma~\ref{mip-lemma}. Thus,
        \begin{align}
       \sum_{j\in[q]} |\{\beta_i:i\in\cI_j\}| \leq \sum_{j\in[q]}\sum_{i\in\cI_j}l_i \leq \sum_{i=1}^{p}l_i.
    \end{align}
    This shows that for any feasible solution of~\eqref{clusteringproblem-mip}, there exists a feasible solution to~\eqref{clusteringproblem-reg} with the same or lower objective.\\
    \textbf{Part 2:} Suppose $\hat\alpha\in\R$ and $\hat{\B\beta}\in\R^p$ is an optimal solution to~\eqref{clusteringproblem-reg} and take $M\geq \max_{i\in [p]}|\hat{\beta}_i|$. Let $\hat{z}_i=1$ if $\hat{\beta}_i\neq 0$ (and $\hat{z}_i=0$ otherwise). Similarly, let $\hat{z}^j_{i,k}=1$ if $\hat{\beta}_i\neq \hat{\beta}_k$ (and $\hat{z}^j_{i,k}=0$ otherwise). Finally, if $i\in\cI_j$ for some $j\in[q]$, let 
    $$\hat{l}_i = 0\lor \left\{\sum_{k=s_j}^{i-1}\hat{z}^j_{i,k}-(i-s_j-1)\right\}\in\{0,1\}.$$
    Otherwise, let $\hat{l}_i=0$. By construction, $\hat{\B\beta},\hat{\B{z}},\hat{\B{l}}$ is feasible for~\eqref{clusteringproblem-mip}. 
    We have that $\hat{l}_i=1$ if $\hat{\beta}_i\notin\{\hat{\beta}_{s_j},\cdots,\hat{\beta}_{i-1}\}$ implying $\hat{l}_i=\1(\hat{\beta}_i\notin\{\hat{\beta}_{s_j},\cdots,\hat{\beta}_{i-1}\})$. Hence, by Lemma~\ref{mip-lemma},
    $$\sum_{i=1}^p\hat{l}_i=\sum_{j=1}^q\sum_{i=s_j}^{s_j+p_j-1}\hat{l}_i=\sum_{j=1}^q\sum_{i=s_j}^{s_j+p_j-1}\1(\hat{\beta}_i\notin\{\hat{\beta}_{s_j},\cdots,\hat{\beta}_{i-1}\})=\sum_{j=1}^q|\{\hat{\beta}_i:i\in\cI_j\}|.$$
    Moreover, we note that $\sum_{i=1}^p\hat{z}_i=\|\hat{\B\beta}\|_0$.
    Therefore,
        $$\lambda_0\|\hat{\B\beta}\|_0+\lambda \sum_{j=1}^r |\{\hat{\beta}_i:i\in\cI_j\}|= \lambda_0\sum_{i=1}^p \hat{z}_i+\lambda\sum_{i=1}^p \hat{l}_i.$$
        This shows that for any optimal solution to~\eqref{clusteringproblem-reg}, there exists a feasible solution to~\eqref{clusteringproblem-mip} with the same objective.

       From the two parts above, we have that~\eqref{clusteringproblem-reg} and~\eqref{clusteringproblem-mip} are equivalent as long as $M\geq \max_{i\in [p]}|\hat{\beta}_i|$.
\end{proof}

\subsection{Proof of Proposition~\ref{mip-lower-prop}}
\begin{proof}[Proof of Proposition~\ref{mip-lower-prop}]
  Note that to obtain Problem~\eqref{l0clustering-mip-relaxed}, we have removed constraints from Problem~\eqref{clusteringproblem-mip}, which completes the proof of the first part.

  We now focus on the second part of the proposition. Suppose $\hat{\B l},\hat{\B z},\hat{\B \beta},\hat{\alpha}$ is the optimal solution to~\eqref{l0clustering-mip-relaxed} for some $T\geq 1$. From the first part we have that
  $\widehat\obj_{T+1}\leq \obj$. Suppose $S(\hat{\B\beta})=S(\B\beta^{(t)})$ for some $t\in[T]$. We claim that $\hat{\B l},\hat{\B z},\hat{\B \beta},\hat{\alpha}$ is also feasible for Problem~\eqref{clusteringproblem-mip}. To show feasibility, we only need to make sure $\hat{\B l},\hat{\B z},\hat{\B \beta},\hat{\alpha}$ satisfies the constraints we removed from~\eqref{clusteringproblem-mip} to obtain~\eqref{l0clustering-mip-relaxed}. Specifically, for $j\notin S_t$ and $i,k\in\cI_j$,
  $$|\hat{\beta}_i-\hat{\beta}_k|=0\leq 2M\hat{z}^j_{i,k}.$$
  This completes the proof of feasibility, implying $\widehat\obj_{T+1}\leq \obj \leq \widehat\obj_{T+1}$ that completes the proof.
\end{proof}

\subsection{Proof of Lemma~\ref{classification-lemma}}
\begin{proof}[Proof of Lemma~\ref{classification-lemma}]
Let 
$$h(x)=\log(1+\exp(-x)).$$
Then, 
$$h'(x) = -\frac{\exp(-x)}{1+\exp(-x)},~~~~~~~~~h''(x)=\frac{\exp (x)}{(\exp(x)+1)^2}>0.$$
Therefore, $h(x)$ is convex and $0<h''(x)\leq 1/4$. Invoking the descent lemma~\citep{bertsekas2016nonlinear}, for any $u,v$
\begin{align}\label{eq:descent-logstic}
    h(u) -h(v) & \leq  -\frac{\exp(-v)}{1+\exp(-v)}(u-v) + \frac{1}{8}(u-v)^2 \\
    & = \frac{1}{8}\left(u - \left[v +\frac{4\exp(-v)}{1+\exp(-v)}\right]\right)^2 - \frac{1}{8}\left(\frac{4\exp(-v)}{1+\exp(-v)}\right)^2.
\end{align}

Therefore, 
    for the logistic loss $ L(y^{(i)},\B{x}^{(i)};\B\beta,\alpha) = \log(1+\exp(-y^{(i)}[\B\beta^\top\B{x}^{(i)}+\alpha]))$ with $\alpha=\alpha_0=0$, by taking $u= y^{(i)}\B\beta^\top \B x^{(i)}$ and $v= y^{(i)}\B\beta^\top_0 \B x^{(i)}$, we have
    \begin{align}
      &  L(y^{(i)},\B{x}^{(i)};\B\beta)-L(y^{(i)},\B{x}^{(i)};\B\beta_0)\nonumber \\
         = & h(u)-h(v) \nonumber \\
        \leq &\frac{1}{8}\left(\B\beta^\top x^{(i)} -\left[\B\beta_0^\top\B x^{(i)}+\frac{4y^{(i)} \exp(-y^{(i)}\B\beta_0^\top\B x^{(i)})}{1+\exp(-y^{(i)}\B\beta_0^\top\B x^{(i)})}\right]\right)^2 - \frac{1}{8}\left(\frac{4y^{(i)} \exp(-y^{(i)}\B\beta_0^\top\B x^{(i)})}{1+\exp(-y^{(i)}\B\beta_0^\top\B x^{(i)})}\right)^2.
    \end{align}
The proof is complete by summing over $i\in[n]$. The general case with $\alpha,\alpha_0\neq 0$ is similar.
\end{proof}

\section{Additional Details for the Numerical Experiments}\label{app:numerical}

We tune the methods as follows:

\begin{itemize}
    \item~SCOPE: We tune the parameter $\lambda$ over 100 different values in the range $[10^{-5},10]$. Unless stated otherwise, we use the default parameter $\gamma$.
    \item~Elastic Net: We tune $\lambda_1,\lambda_2$ over a $10\times 10$ grid, over the range $[10^{-5},10]$.
    \item~Lasso: We tune $\lambda_1$ over a grid of size 25 over the range $[10^{-5},10]$.
    \item~IHT: We tune the parameter $\lambda_0$ over 100 different values in the range $[10^{-5},10]$.
    \item~\ourmethod: We tune the parameter $\lambda$ over 100 different values in the range $[10^{-5},10]$.
    \item~\ourmethodlz: We tune $\lambda,\lambda_0$ over a $10\times 10$ grid, over the range $[10^{-5},10]$.
\end{itemize}

For the experiments in Section~\ref{sec:time-bench}, we set the Big-M value as $M=1.2\max_{i\in[p]}|\hat{\beta}_i|$, where $\hat{\B\beta}$ is a solution available from our approximate solver.

In terms of computational resources, the experiments in Appendices~\ref{numerical-results-runtime} and~\ref{supp:dpseg-bench} are performed on an M1 MacBook Air, while the experiments in Section~\ref{sec:realdata} on performed on a machine equipped with Intel Xeon Platinum 8260 CPU using 4GB of RAM.

\subsection{Numerical Results from Section~\ref{sec:time-bench}}\label{numerical-results-runtime}

Here, we present numerical results corresponding to the runtime and algorithm discussion in Section~\ref{sec:time-bench}. 

Table~\ref{table:runtimebenchmark} shows runtime for the approximate methods/algorithms.  For the experiments in this table, we observe that the final solutions from our methods (\ourmethodlz~and \ourmethod) have an average purity higher than 0.99.

Next, under the same setting, we fix $r_2=550$ and vary the value of $q$ until the runtime of \ourmethodlz~reaches around 5 minutes. We report the runtime and model quality metrics for this experiment in Table~\ref{table:runtimebenchmark-approx2}. We see that we can fit a path of solutions with $q$ to around 25 in this case using \ourmethodlz~(as well as with \ourmethod~which is even faster).

Next, we perform numerical experiments to demonstrate the quality of the solutions from our approximate solver. To this end, we consider the same setup as above, however, we set $\lambda=\lambda_0=0.05$. We compare the objective value of the solution from our approximate (BCD) solver to the objective value from exactly solving Problem~\eqref{clusteringproblem-mip} using Gurobi. More specifically, we set a large value of Big-M, $M=10$ (to ensure the equivalence of Problems~\eqref{clusteringproblem-mip} and~\eqref{clusteringproblem-reg}) and solve Problem~\eqref{clusteringproblem-mip} to an optimality gap of $0.1\%$. We report the average of the objective value from these two solutions (and their standard error) in Table~\ref{table:benchmarkquality}. We also report the average relative objective gap between approximate and exact solutions, $|\text{obj}_{\text{approx}}-\text{obj}_{\text{exact}}|/\text{obj}_{\text{approx}}$. We see that in general, approximate and exact solutions have close objective values, which demonstrates that our approximate solver is able to deliver high quality solutions. We also compared the prediction errors for approximate and exact solutions, and we observed the prediction errors were within $0.1\%$ of each other on average.

For the experiments with exact solvers (Table~\ref{table:runtimebenchmark-exact}), we set $n=500$, $r_2=14$ and consider the best hyperparameter setting selected via our approximate solver (available by validation tuning).
If the MIP solver is not able to achieve the optimality gap of $0.5\%$ in under~15 minutes, we stop the solver and report the final optimality gap. We note that our row generation procedure is guaranteed to converge after finitely many iterations. In practice, for the experiments in Table~\ref{table:runtimebenchmark-exact}, we observe that our row generation algorithm terminates after two iterations at most, most often after just one iteration. Providing convergence guarantees (beyond the worst case) for our row generation remains an open question.

\begin{table}[h!]
    \centering
    \caption{{\bf Approximate Algorithm Runtimes}: Running time benchmarks (in seconds) from Section~\ref{sec:time-bench} for approximate solvers. We compare our approximate solver with SCOPE. $q$ is the number of (categorical) predictors, each having $(r_2+4)$-many levels.  We report the average and the standard error (in the parenthesis).}
    \label{table:runtimebenchmark}
    {\scalebox{.99}{\begin{tabular}{ccc|c|c|c}
    \toprule
  $r_2$ & $q$ & $p$ & SCOPE & \ourmethod & \ourmethodlz  \\
  & & & Time [s]& Time [s] & Time [s]  \\
\midrule
\multirow{2}{*}{200} & 2 & 408 & $9.4(\pm0.1)$    & $<1$   & $<1$   \\
  & 3 & 612 & $18.3(\pm0.6)$ & $<1$ & $<1$  \\
  \midrule
\multirow{2}{*}{300} & 2 & 608 &  $30.8(\pm2.0)$  & $<1$  & $1.2(\pm0.0)$ \\
  & 3 & 912 & $63.8(\pm3.1)$  & $1.3 (\pm 0.2)$ & $2.4(\pm0.1)$ \\
    \midrule
\multirow{2}{*}{400} & 2 & 808 & $81.4(\pm2.7)$  & $1.1(\pm0.2)$  & $2.4(\pm0.2)$  \\
  & 3 & 1212 &  $189(\pm7.0)$& $1.9(\pm0.2)$  & $4.4(\pm0.3)$ \\
    \midrule
\multirow{2}{*}{500}   & 2 & 1008 & $184(\pm 8)$  & $1.3(\pm0.1)$  & $3.9(\pm0.4)$  \\
& 3 & 1512 &  $580(\pm44)$  &  $2.1(\pm 0.3)$  & $7.8(\pm0.4)$   \\
    \midrule
\multirow{2}{*}{550} & 2 & 1108 &  $259(\pm10)$  & $1.6(\pm0.2)$  & $5.1(\pm0.2)$  \\
  & 3 & 1662 &  $1121(\pm48)$ & $2.0(\pm0.1)$ & $10.8(\pm0.4)$ \\
\bottomrule
\end{tabular}}}
\end{table}

\begin{table}[h!]

    \centering
    \caption{\textbf{Scalability of Approximate Algorithms}: We report the runtime of \ourmethod~and \ourmethodlz~for several values of $q$ while fixing $r_2=550$. The rest of the setup is similar to the one in Table~\ref{table:runtimebenchmark}. We see that \ourmethodlz~can fit a path of solutions in around 5 minutes for problems with $q=25$. We report the average and the standard error (in the parenthesis).}
    \label{table:runtimebenchmark-approx2}
    {\scalebox{.99}{\begin{tabular}{cc|c|c}
    \toprule
  $q$ & $p$ &  \ourmethod & \ourmethodlz  \\
  & & Time [s] & Time [s] \\

\midrule
  5 & 2770 & $15.69(\pm2.33)$ & $32.75(\pm1.74)$ \\
    10 & 5540 & $39.28(\pm2.3)$& $76.10(\pm3.57)$ \\
  15 & 8310 & $67.92(\pm3.89)$ & $129.62(\pm7.26)$  \\
    20 & 11080& $95.24(\pm5.17)$ & $182.58(\pm8.50)$ \\
    25 & 13850& 155.84($\pm11.96$) & 289.85($\pm18.89$)   \\
\bottomrule
\end{tabular}}}
\end{table}

\begin{table}[h!]

    \centering
    \caption{ {\bf Quality of Approximate Solutions}: We report the objective value of the solution available from our approximate solver, as well as the exact solution obtained through solving the MIP~\eqref{clusteringproblem-mip}. Additionally, we also report the average objective gap between the exact and approximate solutions (the standard error for all cases was less than $0.001\%$ and hence is not reported). We set $\lambda=\lambda_0=0.05$ and $M=10$. The setup here is similar to Table~\ref{table:runtimebenchmark}. We report the average and the standard error (in the parenthesis). }
    \label{table:benchmarkquality}
    {\scalebox{.99}{\begin{tabular}{ccc|c|c|c}
    \toprule
  $r_2$ & $q$ & $p$ & Approximate Solution & Exact Solution & Gap \\
    \midrule
        \multirow{2}{*}{100} & 2 & 208 & 1.3940($\pm0.0021$) & 1.3932($\pm0.0021$) & $0.06\%$ \\
    & 3 & 312 & 1.4327($\pm0.0021$) & 1.4317($\pm0.0021$) & $0.07\%$ \\
        \midrule
        \multirow{2}{*}{150} & 2 & 308 & 1.2814($\pm0.0016)$  & 1.2799($\pm0.0015$) & $0.11\%$ \\
    & 3 & 462 & 1.2990($\pm0.0010$)  & 1.2981($\pm0.0010$) & $0.07\%$ \\
        \midrule
        \multirow{2}{*}{200} & 2 & 408 & 1.2171($\pm0.0012$)  & 1.2156($\pm0.0012$) & $0.13\%$\\
    & 3 & 612 & 1.2732($\pm0.0018$) & 1.2721($\pm0.0018$) & $0.09\%$ \\
        \midrule
        \multirow{2}{*}{250} & 2 & 508 & 1.1798($\pm0.0008$)  & 1.1789($\pm0.0008$) & $0.08\%$ \\
    & 3 &  762 & 1.2461($\pm0.0012$) &1.2451($\pm0.0012$) & $0.09\%$ \\
            \midrule
        \multirow{2}{*}{300} & 2 & 608 & 1.1556  ($\pm0.0008 $) & 1.1549($\pm0.0008 $) & $0.06\%$ \\
    & 3 & 912 & 1.2201 ($\pm0.0011 $) & 1.2190($\pm 0.0010$) & $0.08\%$\\
\bottomrule
\end{tabular}}}
\end{table}

\hfill

\begin{table}[]
    \centering
\caption{{\bf Exact Solver Runtimes}: Running time (in seconds) of our row generation-based exact solver for \ourmethodlz~from Section~\ref{sec:time-bench}. If the MIP solver is not able to achieve the optimality gap of $0.5\%$ in under~15 minutes, we stop the solver and report the final optimality gap (smaller optimality gap is better). We also report the average number of row generation iterations initiated. Note that $q$ is the total number of categorical predictors, with $q_s$-many of them having nonzero true regression coefficients. We note that SCOPE does not present exact solvers. Our experiments demonstrate that we are able to obtain global optimality certificates for problems with $p\approx 4500$.}
    \label{table:runtimebenchmark-exact}
\scalebox{.99}{\begin{tabular}{ccccc}
\toprule
  $q$ & $p$ & $q_s$ & Exact & Row Generation Iterations \\
      \midrule
  \multirow{2}{*}{20} & \multirow{2}{*}{360} & 2  & $91.5(\pm 13)$ & 1.16($\pm0.05$)\\
  & & 3  &  $691(\pm83)$ &  1.14($\pm0.05$) \\
      \midrule
  \multirow{2}{*}{50} & \multirow{2}{*}{900} & 2 &  $(1.3\%)$ &  1.08($\pm0.04$)  \\
  &  & 3   &     $(2.7\%)$   &  1.04($\pm0.03$)\\
      \midrule
  \multirow{2}{*}{100} & \multirow{2}{*}{1800} & 2 & $(2.8\%)$ & 1.0($\pm0.0$)\\
  & & 3   & $(4.5\%)$  & 1.0($\pm0.0$) \\
    \midrule
  \multirow{2}{*}{150} & \multirow{2}{*}{2700} & 2 & $(3.2\%)$ &  1.0($\pm0.0$)  \\
  &  & 3   &  $(5.8\%)$   &  1.0($\pm0.0$) \\
  \midrule
  \multirow{2}{*}{200} & \multirow{2}{*}{3600} & 2  & $(3.4\%)$  & 1.0($\pm0.0$) \\
  & & 3   &   $(5.7\%)$ & 1.0($\pm0.0$) \\
\midrule
\multirow{2}{*}{250} & \multirow{2}{*}{4500} & 2  & $(3.9\%)$  & 1.0($\pm0.0$)  \\
    & & 3  & $(6.1\%)$ &  1.0($\pm0.0$)   \\
\bottomrule
\end{tabular}}

    \end{table}

\subsection{Benchmarking DpSegPen-$L_0$}\label{supp:dpseg-bench}

Next, we investigate the efficiency of DpSegPen-$L_0$, our exact solver for the case with one categorical predictor (that is, Problem~\eqref{one-dim-jumps}). We follow a setup similar to Section~\ref{sec:time-bench} and Table~\ref{table:runtimebenchmark}, but we set $q=q_s=1$. We also consider both $\lambda_0=0$ (corresponding to the work of~\citet{johnson2013dynamic}) and $\lambda_0>0$ (studied in Appendix~\ref{app:dp}). The results for this case are shown in Table~\ref{table:runtimebenchmark-oneidm}. In the table, DpSegPen-$L_0$ corresponds to the case $\lambda_0>0$, while for DpSegPen we set $\lambda_0=0$. Similar to Table~\ref{table:runtimebenchmark}, we report the overall runtime of fitting a path of solutions for 100 hyperparameter values (over the same grid).

\begin{table}[h!]
    \centering
    \caption{{\bf DpSegPen-$L_0$ Runtimes}: Running time (in seconds) of DpSegPen-$L_0$, our exact solver for the case with a single categorical predictor (Problem~\eqref{one-dim-jumps}). In the table, DpSegPen corresponds to the case of $\lambda_0=0$. We note that both methods studied in this table are exact and result in globally optimal solutions. We report the average and the standard error (in the parenthesis) over 50 repetitions.}
    \label{table:runtimebenchmark-oneidm}
    {\scalebox{.99}{\begin{tabular}{ccc}
    \toprule
 $r_2$ & DpSegPen-$L_0$ & DpSegPen  \\
 \midrule
 200  &  $0.932(\pm0.053)$& $0.186(\pm0.013)$ \\
 300 & $1.240(\pm0.075)$ & $0.266(\pm0.022)$  \\
 400 & $1.478(\pm0.084)$ & $0.340(\pm0.029)$ \\
 500 & $1.738(\pm0.108)$ & $0.411(\pm0.034)$ \\
600 &  $1.650(\pm0.048)$ & $0.408(\pm0.019)$ \\
\bottomrule
\end{tabular}}}
\end{table}

\subsection{Additional Synthetic Data Experiments}\label{supp:synthetic}

In this section, we present additional synthetic data numerical experiments under various settings.

First, we consider the same general setup as in Figure~\ref{fig:set1-n} but we vary~$\sigma$. First, we set $\sigma=1.5$ (leading to $\text{SNR}=2.20$) and plot the results in Figure~\ref{fig:setsupp1-n}. In these experiments, Elastic Net resulted in 147 clusters on average. We also consider~$\sigma=2.5$ (average $\text{SNR}=0.79$) and present the results in Figure~\ref{fig:setsupp2-n}. Under this setup, Elastic Net produces 145 clusters, on average. We see that \ourmethodlz~continues to have the best performance overall. Additionally, when $\sigma=2$, \ourmethod~outperforms or performs comparably to SCOPE.

For the next experiment, under the same setup, we fix $n=300$ and vary the value of $\sigma$ to study the effect of noise on different methods. We display the results in Figure~\ref{fig:set-sigma}. We observe that \ourmethodlz~and \ourmethod~perform better than SCOPE, especially in low SNR regimes, with SNR being as low as 0.65. In particular, \ourmethodlz~seems to have the best performance in terms of $R^2$ and purity overall.

\begin{figure}[t!]
\setlength{\tabcolsep}{1pt}
     \centering
    \scalebox{1.1}{\begin{tabular}{ccc}
      Test $R^2$ & Purity & \# of Clusters \\
       \includegraphics[width=0.28\linewidth]{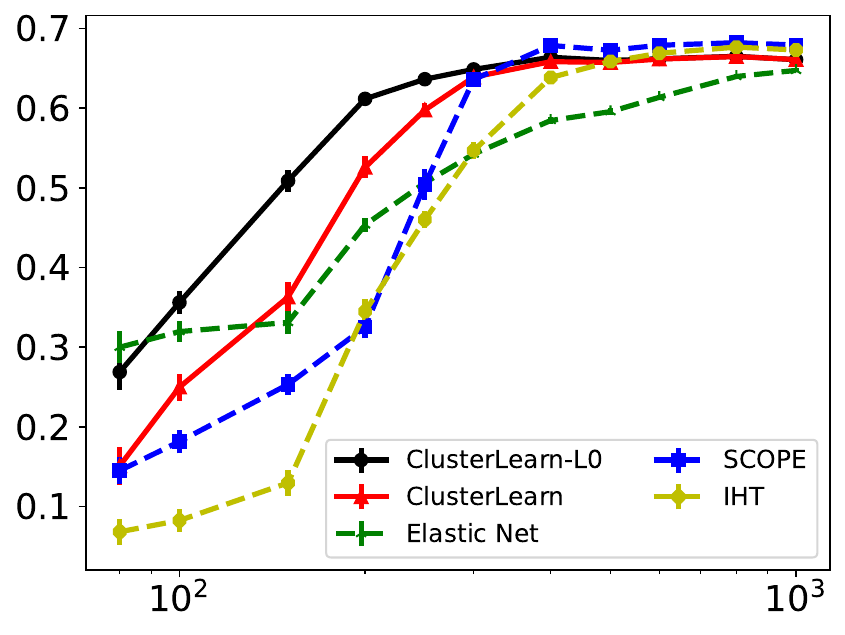}  &  \includegraphics[width=0.28\linewidth]{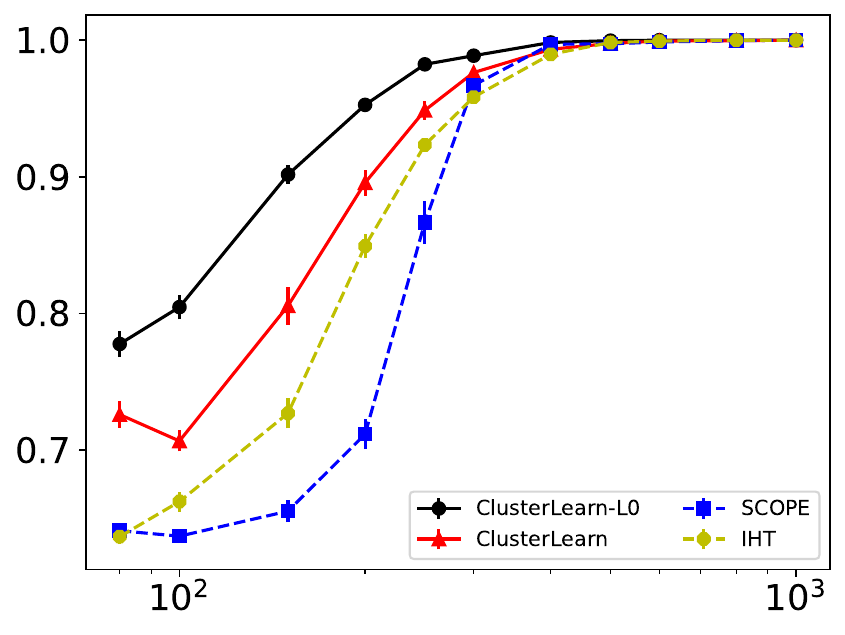} & 
       \includegraphics[width=0.28\linewidth]{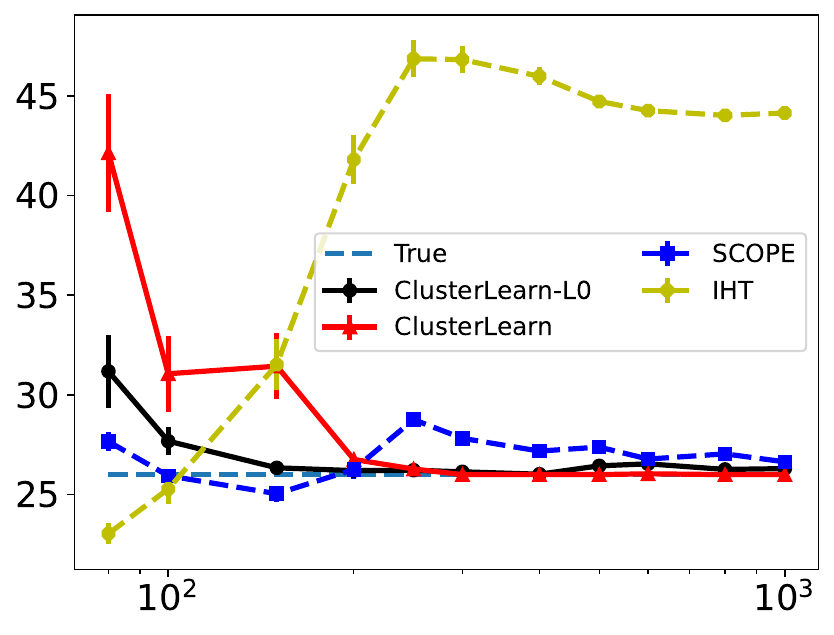} \\
       $n$ & $n$ & $n$
    \end{tabular}}
    \captionof{figure}{Experiments from Appendix~\ref{supp:synthetic} with $r_1=4,r_2=12, q=20, q_s=3$ and $\rho=0.2,\sigma=1.5$ with varying $n$. We do not plot purity and the number of clusters for Elastic Net as it produces a large number of clusters. The vertical bars at each point indicate the corresponding standard errors.}
    \label{fig:setsupp1-n}
\end{figure}

\begin{figure}[t!]
\setlength{\tabcolsep}{1pt}
     \centering
    \scalebox{1.1}{\begin{tabular}{ccc}
      Test $R^2$ & Purity & \# of Clusters \\
       \includegraphics[width=0.28\linewidth]{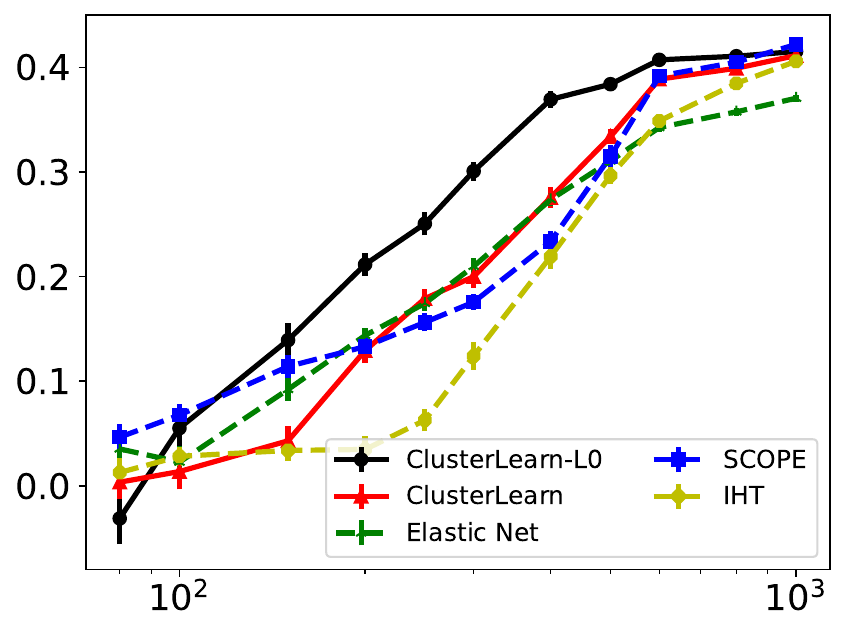}  &  \includegraphics[width=0.28\linewidth]{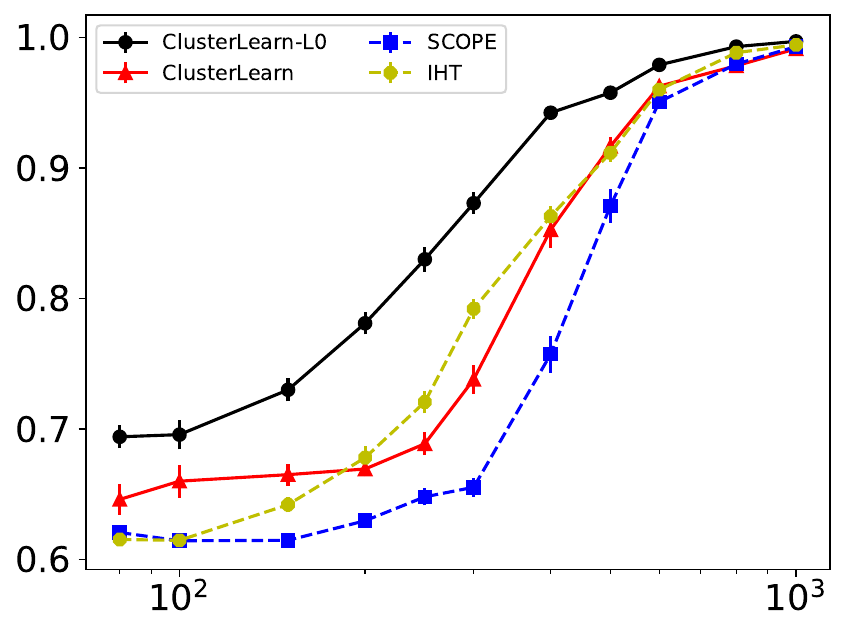} & 
       \includegraphics[width=0.28\linewidth]{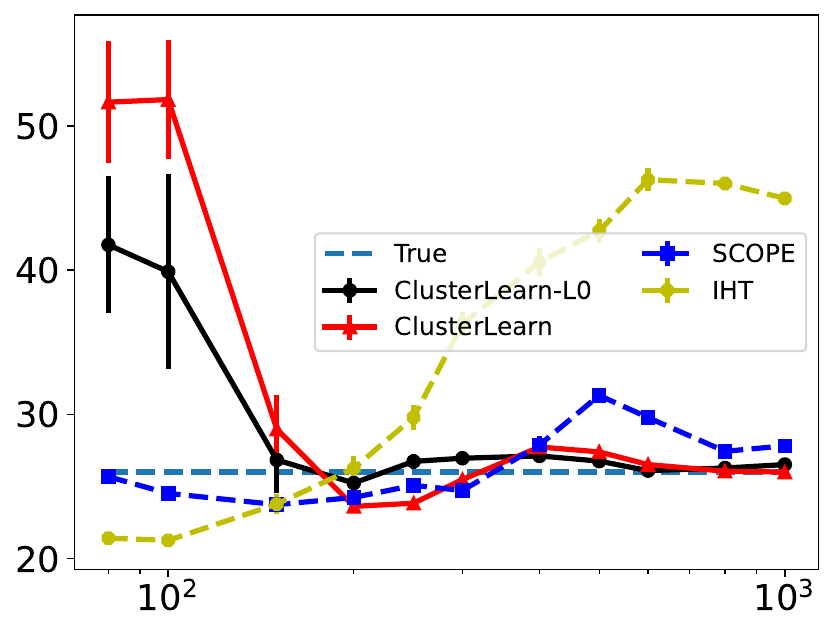} \\
       $n$ & $n$ & $n$
    \end{tabular}}
    \captionof{figure}{ Experiments from Appendix~\ref{supp:synthetic} with $r_1=4,r_2=12, q=20, q_s=3$ and $\rho=0.2,\sigma=2.5$. We do not plot the purity and the number of clusters for Elastic Net as it produces a large number of clusters. The vertical bars at each point indicate the corresponding standard errors.}
    \label{fig:setsupp2-n}
\end{figure}

\begin{figure}[t!]
\setlength{\tabcolsep}{1pt}
     \centering
    \scalebox{1.1}{\begin{tabular}{ccc}
      Test $R^2$ & Purity & \# of Clusters \\
       \includegraphics[width=0.28\linewidth]{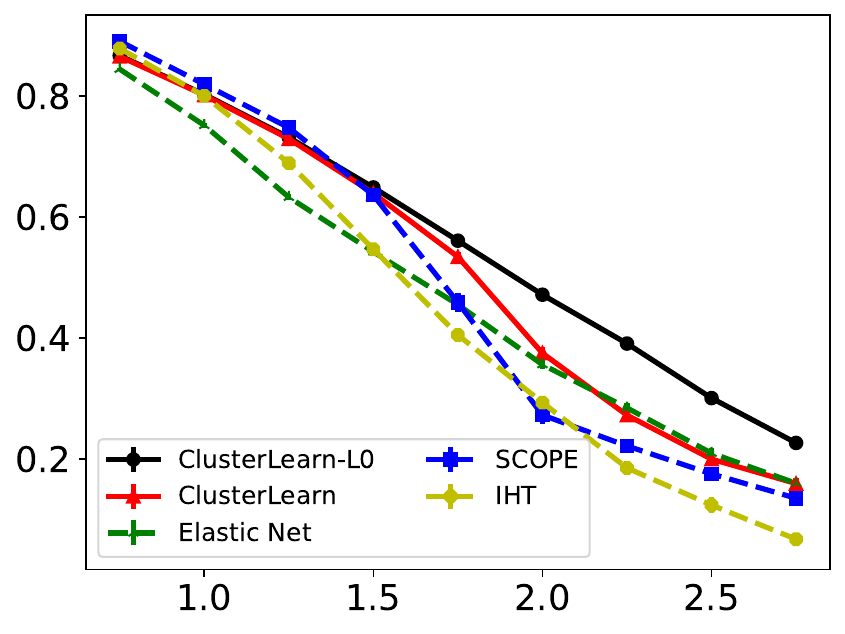}  &  \includegraphics[width=0.28\linewidth]{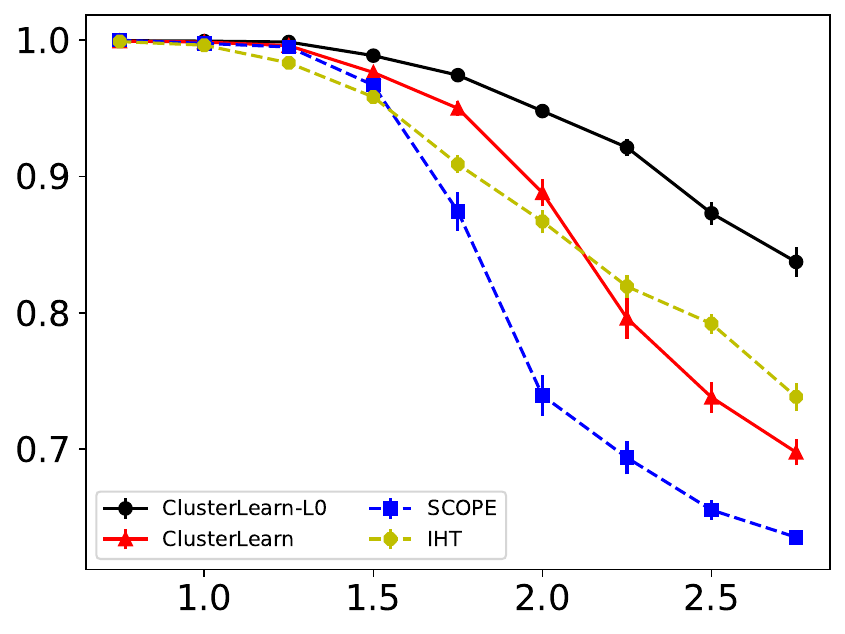} & 
       \includegraphics[width=0.28\linewidth]{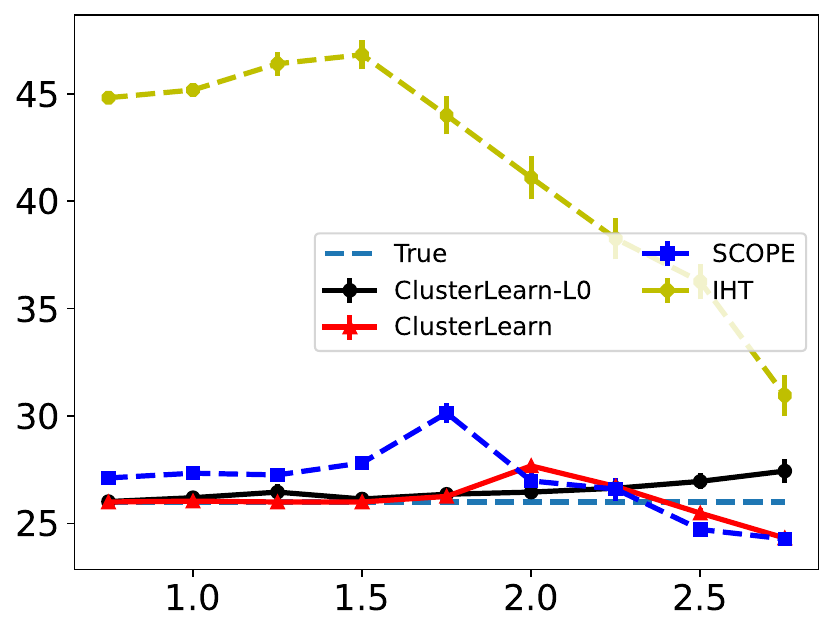} \\
       $\sigma$ & $\sigma$ & $\sigma$
    \end{tabular}}
    \captionof{figure}{ Experiments from Appendix~\ref{supp:synthetic} with $r_1=4,r_2=12, q=20, q_s=3$ and $\rho=0.2,n=300$. We do not plot the purity and the number of clusters for Elastic Net as it produces a large number of clusters. The vertical bars at each point indicate the corresponding standard errors.}
    \label{fig:set-sigma}
\end{figure}

Next, we consider another experimental setup where the true $\B\beta^*$ is given by:
\begin{equation}\label{eq:beta_star_setting_supp}
    \B\beta^{*}_{\cI_j} = \left\{
    \begin{array}{ll}
        (-2,-2,0,0,0) & \mbox{if } j \in \{1,\cdots,q_s\} \vspace{0.2cm}\\
        (0,0,0,0,0) & \mbox{otherwise.}
    \end{array}
\right.
\end{equation}
We set $q=50$ and $q_s=5$. This simulates a setting where there are several categorical predictors, but each with only a few levels. We set $\sigma=2$ leading to average $\text{SNR}=5.38$. We plot the results for this case in Figure~\ref{fig:setsupp3-n}. Elastic Net here results in 125 clusters on average. We see that overall \ourmethod~seems to have the best performance, with the highest $R^2$ and purity. Particularly, we see that \ourmethod~results in a smaller number of clusters than SCOPE when $n\geq 100$, while having better or similar purity and $R^2$. Moreover, \ourmethodlz~does not perform well here, possibly because $\B\beta^*_{\cI_j}$ for $j=1,\cdots,q_s$ does not have a large zero cluster.

\begin{figure}[t!]
\setlength{\tabcolsep}{1pt}
     \centering
    \scalebox{1.1}{\begin{tabular}{ccc}
      Test $R^2$ & Purity & \# of Clusters \\
       \includegraphics[width=0.28\linewidth]{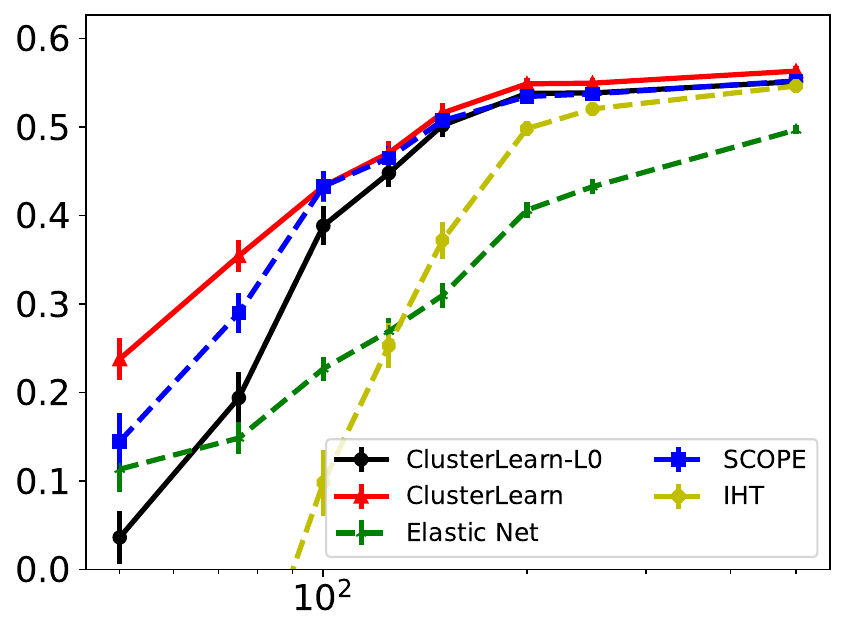}  &  \includegraphics[width=0.28\linewidth]{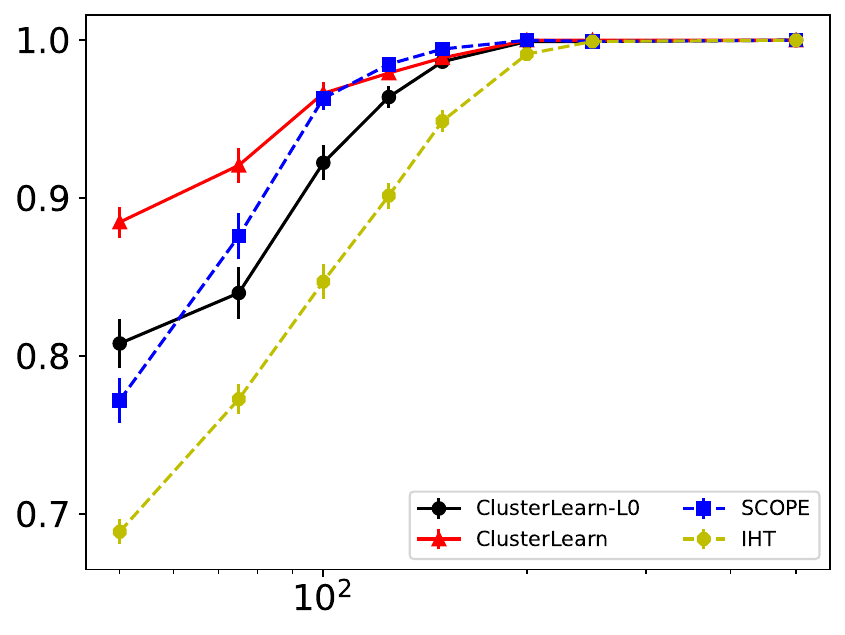} & 
       \includegraphics[width=0.28\linewidth]{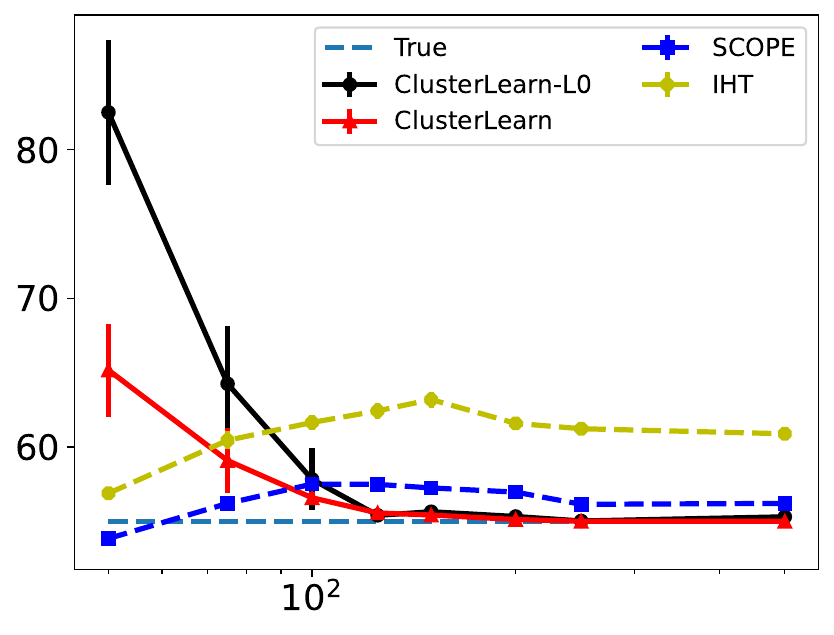} \\
       $n$ & $n$ & $n$
    \end{tabular}}
    \captionof{figure}{ Experiments from Appendix~\ref{supp:synthetic}  with the true $\B\beta^*$ given by \eqref{eq:beta_star_setting_supp} and  $q=50, q_s=5,\sigma=2,\rho=0.2$. We do not plot the purity and the number of clusters for Elastic Net as it produces a large number of clusters. The vertical bars at each point indicate the corresponding standard errors.}
    \label{fig:setsupp3-n}
\end{figure}

Finally, we consider a setup where no clustering is present among nonzero regression coefficients. In particular, under the same setup as before, we choose the $\B\beta^*$ to be:
\begin{equation}\label{eq:beta_star_setting_rebuttal}
    \B\beta^{*}_{\mathcal{I}_j} = \left\{
    \begin{array}{ll}
        (1,2,3,0,0) & \mbox{if } j \in \{1,\cdots,q_s\} \vspace{0.2cm}\\
        (0,0,0,0,0) & \mbox{otherwise.}
    \end{array}
\right.
\end{equation}
We set $q=50$ and $q_s=5$ and $\sigma=3$, corresponding to an average $\text{SNR}$ of 4.86. We present the prediction performance of different methods in Figure~\ref{fig:setrebuttal-n}. We do not plot the clustering performance here as the true $\B\beta^*$ is not clustered. We see that all methods perform more or less similar, with \ourmethod~having a slightly higher accuracy.

\begin{figure}[t!]
\setlength{\tabcolsep}{1pt}
     \centering
    \scalebox{1.1}{\begin{tabular}{c}
      Test $R^2$ \\
       \includegraphics[width=0.28\linewidth]{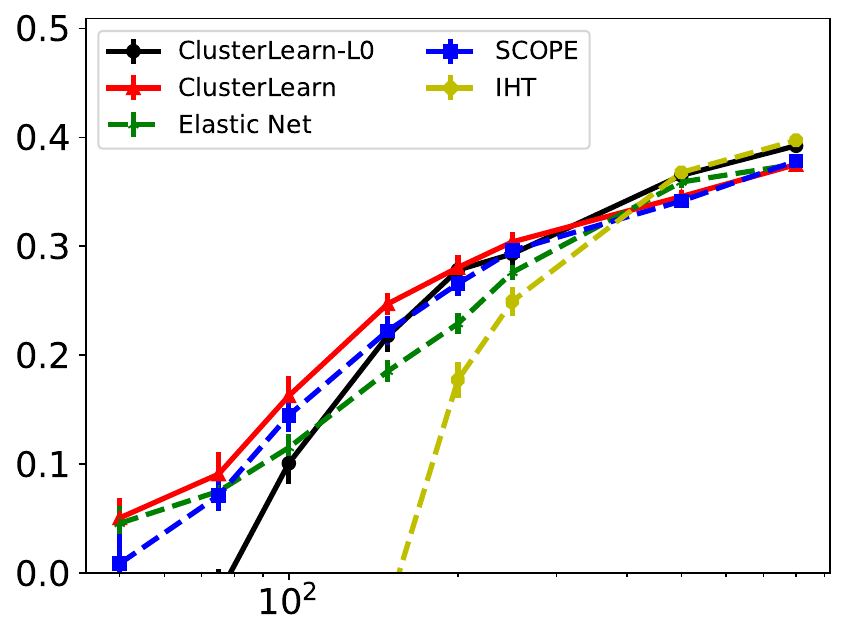} \\
       $n$ 
    \end{tabular}}
    \captionof{figure}{ Experiments from Appendix~\ref{supp:synthetic}  with the true $\B\beta^*$ given by \eqref{eq:beta_star_setting_rebuttal} and  $q=50, q_s=5,\sigma=3,\rho=0.2$. We do not plot the purity and the number of clusters as $\B\beta^*$ is not clustered. The vertical bars at each point indicate the corresponding standard errors.}
    \label{fig:setrebuttal-n}
\end{figure}

\subsection{Employee Access Dataset}\label{sec:amazon}
Nest, we explore the Amazon Employee Access Challenge \citep{amazon-employee-access-challenge} dataset, where the task is to predict whether an employee should be granted access to a specific resource. We use 3794 observations and 9 categorical predictors. Due to the large number of categorical levels, the model dimension is very high ($p=5659$). We use $n=1000$ observations as a training set, 1000 as a validation set, and the rest as a test set. We also drop observations with levels that do not appear in the training set. We compare our methods with SCOPE, Elastic Net and Lasso. For SCOPE, we use the concavity parameter of 100 that is recommended by~\cite{scope}. Table~\ref{tab:ama} reports the test classification accuracy, the number of regression coefficient levels, and the runtime. We observe that Elastic Net and \ourmethodlz~have the highest accuracy, though Elastic Net leads to a significantly higher number of levels. Moreover, \ourmethodlz~has fewer levels than \ourmethod~but more than SCOPE. In terms of runtime, both \ourmethodlz~and \ourmethod~appear to be faster than SCOPE.

\begin{table}[!h]
    \centering
        \caption{Mean performance of different algorithms on the employee access dataset in Appendix~\ref{sec:amazon}, averaged across 100 runs.  We report the average and the standard error (in parenthesis).}
    \label{tab:ama}
    \begin{tabular}{cccc}
\toprule
  Method & Accuracy & Number of levels & Runtime (seconds)\\
\midrule
Lasso & $0.743 (\pm0.004)$  & $545.0 (\pm16.9)$ & $\mathbf{0.2 (\pm0.0)}$ \\
 Elastic Net & $\mathbf{0.786(\pm0.002)}$ & $2460.7(\pm3.1)$ & $\mathbf{0.2(\pm0.0)}$ \\
SCOPE-100 & $0.768 (\pm0.003)$ & $\mathbf{19.1 (\pm1.0)}$ & $359.5 (\pm40.9)$ \\
\ourmethod & $0.779 (\pm0.003)$ & $153.2 (\pm12.2)$ & $53.1 (\pm0.3)$ \\
\ourmethodlz & $\mathbf{0.784 (\pm0.003)}$  & $115.5 (\pm11.2)$ & $78.2 (\pm0.4)$ \\
\bottomrule
\end{tabular}

\end{table}

\subsection{Solar Flare Real Dataset}\label{supp:flare}
We conduct further experiments on a real dataset. Specifically, we consider the solar flare dataset~\citep{solar_flare_89} where the goal is to predict the number of solar flares in a 24 hour interval, given several features (a regression task). This dataset includes 10 categorical features with total number of levels $p=32$. We use the count of common solar flares as the response variable, and use $n=500$ samples for training and validation, while using the rest of samples as the test set. The results for this experiment are presented in Table~\ref{tab:flare}. We see that \ourmethodlz~and \ourmethod~ provide better prediction accuracy compared to SCOPE, while having a lower number of clusters compared to Elastic Net and IHT.

\begin{table}[]
    \centering
        \caption{ Mean performance of different algorithms on the solar flare dataset in Appendix~\ref{supp:flare}, averaged across 100 runs. We report the average and the standard error (in parenthesis).}
    \label{tab:flare}
    \begin{tabular}{cccc}
\toprule
  Method & Test $R^2$ & Number of levels & Runtime (seconds) \\
\midrule
		
IHT &  $0.106(\pm0.005)$ & $19.13(\pm0.43)$   & $0.308(\pm0.001)$ \\
Elastic Net & ${0.114(\pm 0.003)}$ &    $28.76(\pm0.34)$  & $0.445(\pm0.001)$   \\
SCOPE & $0.094(\pm0.005)$ &	 $15.02(\pm0.38)$   & $0.530(\pm0.005)$ \\
\ourmethod & $0.104(\pm0.005)$ & $17.09(\pm0.41)$	   & $1.692(\pm0.004)$ \\
\ourmethodlz & $0.108(\pm0.004)$  & $18.20(\pm0.47)$	   & $6.631(\pm6.034)$   \\
\bottomrule
\end{tabular}

\end{table}

\section{Approximate Solver for the Logistic Loss}
Here, we display the algorithm that extends our BCD approximate solver to the problems with the logistic loss. See Section~\ref{sec:logistic} for more details.

\begin{algorithm}
\caption{\ourmethodlz~block CD for Problem~\eqref{logstic-reg}}\label{alg:bcd_classification}
\begin{algorithmic}
    \Require Data matrix $\B X$, response vector $\B{y}$, $\lambda,\lambda_0\geq 0$, initial $\B\beta^0 \in \R^p$
\State $\B\beta \gets \B\beta^0$\;
\While{not converged}
  \For{$j = 1$ to $q$}
  \State $\tilde{\B\beta} \gets \B\beta$ 
    \State $\B\beta_{\cI_{j}} \gets \argmin_{\tilde \beta_i:i\in\cI_{j}} {P}(\tilde{\B\beta},\B\beta,\alpha,\alpha)$\ \Comment{A univariate problem like~\eqref{blockcd-2}}
  \EndFor
    \State $\tilde{\B\beta} \gets \B\beta$ 
  \State $\B\beta_{p-N+1:p} \gets \argmin_{\tilde \beta_{p-N+1:p}} {P}(\tilde{\B\beta},\B\beta,\alpha,\alpha)$\ \Comment{Update continuous features}
  \State $\alpha \gets \argmin_{\tilde \alpha} {P}({\B\beta},\B\beta,\tilde\alpha,\alpha)$\ \Comment{Update intercept}
  \EndWhile
\end{algorithmic}

\end{algorithm}

\section{Additional Details of the Illustrative Example}\label{supp:illustrative}
In this section, we discuss the interpretation of the regression coefficients presented in Figure~\ref{fig:intro}. In particular, we investigate which categorical levels have the largest and smallest (most negative) regression coefficients. We focus on the case of $\lambda_0=50$ and $\lambda=0.5$ and report the results in Table~\ref{table:illustrative-variables}. 

We see that in terms of the weekday, Wednesday and Friday have the largest regression coefficients, while Saturday and Sunday have the smallest ones. This suggests there are more bike rentals on working days in coomparison to the weekend (for example, bikes might be mostly rented by commuters). In terms of the hours, we see that mornings and evenings have the highest bike rental counts, while late night hours have the lowest number of rentals. In fact, the cluster of zero regression coefficients for the `hour' predictor corresponds to the hours of 0 to 4. On the other hand, the singleton cluster of hour 8 has the largest coefficient, followed by the cluster of hours 17 and 18. 
As mornings and evenings coincide with the commuting hours, this further suggests that bike rentals might be dominated by the commuters.

\begin{table}[]
    \centering
\caption{The categories with largest and smallest (most negative) regression coefficients for the example in Figure~\ref{fig:intro}.}
    \label{table:illustrative-variables}
\scalebox{.99}{\begin{tabular}{ccc}
\toprule
Predictor & Largest & Smallest \\
      \midrule
      Weekday & Wednesday and Friday & Saturday and Sunday \\
      Hour & 18, 8 & 0,1\\
\bottomrule
\end{tabular}}
    \end{table}

\end{document}